\let\oldnl\nl
\newcommand{\nonl}{\renewcommand{\nl}{\let\nl\oldnl}}
\newtheorem{theorem}{Theorem}
\newtheorem{lemma}[theorem]{Lemma}
\newtheorem{corollary}[theorem]{Corollary}
\newtheorem{openquestion}[theorem]{Open Question}
\newtheorem{definition}[theorem]{Definition}
\newtheorem{observation}[theorem]{Observation}
\newtheorem{claim}[theorem]{Claim}
\newcommand{\set}[1]{\{#1\}}
\newcommand\LP[1]{\mathcal{L}_{#1}}
\def\X{\ensuremath{\mathcal{X}}}
\def\eins{\ensuremath{\mathbbm{1}}}
\def\tildx{\ensuremath{\Tilde{x}}}
\def\nnr{\ensuremath{\mathbb{R}_{\ge0}}}
\def\nnrvec{\ensuremath{\mathbb{R}^n_{\ge 0}}}
\newcommand\R[2]{\textnormal{\ensuremath{R^{#1}_{#2}}}}
\newcommand\expand[1]{\textnormal{\ensuremath{\textsf{expand}(#1)}}}
\newcommand\topl[2]{\textnormal{\ensuremath{\textsf{top}_#1(#2)}}}
\newcommand\ord[2]{\textnormal{\ensuremath{\textsf{ord}_{#1}(#2)}}}
\newcommand\ordl[2]{\textnormal{\ensuremath{\textsf{ord}_{#1}\left(#2\right)}}}
\newcommand\dist[2]{\ensuremath{\delta(#1,#2)}}
\newcommand\distr[3]{\ensuremath{\delta^{#3}_{\scale}(#1,#2)}}
\newcommand\distrr[4]{\ensuremath{\delta^{#3}_{#4}(#1,#2)}}
\newcommand\distrv[3]{\ensuremath{\bm{\delta}^{#3}(#1,#2)}}
\newcommand\distv[2]{\ensuremath{\bm{\delta}_{#1}(#2)}}
\newcommand\distvs[2]{\ensuremath{\bm{\delta}^{\downarrow}_{#1}(#2)}}
\newcommand\dists[2]{\ensuremath{\delta^*(#1,#2)}}
\newcommand\distss[2]{\ensuremath{\delta^{**}(#1,#2)}}
\newcommand\cost[2]{\textnormal{\ensuremath{\textsf{cost}_{#1}(#2)}}}
\newcommand\costz[1]{\textnormal{\ensuremath{\textsf{cost}_{\textsf{lb}}(#1)}}}
\newcommand\proxyz[3]{\textnormal{\ensuremath{\textsf{proxy}_{#2}(#3,#1)}}}
\newcommand\proxy[3]{\textnormal{\ensuremath{\textsf{proxy}_{#3}(#1,#2)}}}
\newcommand\NCCH[2]{Nested \ensuremath{(#1,#2)} k-Clustering}
\newcommand\NCCS[2]{\ensuremath{(#1,#2)}-Clustering}
\newcommand\NCC[2]{\NCCS{#1}{#2}}
\newcommand\MNKC{Minimum-Norm $k$-Clustering}
\def\scale{\ensuremath{\bm{\rho}}}
\def\sop{\ensuremath{\bm{\mu}}}
\def\eps{\ensuremath{\varepsilon}}
\def\OPT{\ensuremath{\mathrm{OPT}}}
\newcommand{\I}[0]{\mathcal{I}}
\def\Ballk{Layered Ball $k$-Median}
\def\FLBall{Layered Ball Facility Location}
\def\msr{Min-Sum Radii}
\def\kmed{$k$-Median}
\renewcommand{\subparagraph}{\paragraph}
\newcommand{\ceil}[1]{\ensuremath{\left\lceil #1 \right\rceil}}
\newcommand{\email}[1]{\protect\href{mailto:#1}{#1}}
\newcommand{\orc}{\includegraphics[height=\fontcharht\font`A]{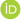}}
\begin{document}

\title{\Large A Broader View on Clustering under Cluster-Aware Norm Objectives\footnote{Martin Herold is funded by the Deutsche Forschungsgemeinschaft (DFG, German Research Foundation) – Project number 399223600.
We are grateful to an anonymous reviewer for making concrete suggestions how to substantially simplify the proof of \Cref{lem:generalNormSubgradient}.}}
    \author{Martin G. Herold\href{https://orcid.org/0009-0002-1804-2842}{\orc}\thanks{Max Planck Institute for Informatics, Saarland Informatics Campus, Germany (\email{mherold@mpi-inf.mpg.de}, \email{kipouridis@mpi-inf.mpg.de}).}
    \and Evangelos Kipouridis\href{https://orcid.org/0000-0002-5830-5830}{\orc}\footnotemark[2]    
    \and Joachim Spoerhase\href{https://orcid.org/0000-0002-2601-6452}{\orc}\thanks{University of Liverpool, United Kingdom
  (\email{joachim.spoerhase@liverpool.ac.uk}).}}

\date{\today}

\maketitle

\fancyfoot[R]{\scriptsize{Copyright \textcopyright\ 2026\\
Copyright for this paper is retained by authors}}

\begin{abstract}
We revisit the \NCCS{f}{g} problem that we introduced in a recent work [SODA'25]. Here, $f$ and $g$ are symmetric, monotone norms called inner and outer norms, respectively. The task is to partition a given set of points in a metric space into $k$ clusters each represented by a cluster center. Each cluster is assigned a cluster cost, determined by the norm $f$ applied to the vector of point-center distances in the cluster. The goal is to minimize the value of the norm $g$ when applied to the vector of cluster costs. This problem subsumes fundamental clustering problems such as $k$-Center (i.e., \NCCS{\LP{\infty}}{\LP{\infty}}), $k$-Median (i.e., \NCCS{\LP{1}}{\LP{1}}), Min-Sum of Radii (i.e., \NCCS{\LP{\infty}}{\LP{1}}), and Min-Load $k$-Clustering (i.e., \NCCS{\LP{1}}{\LP{\infty}}). In our previous work, we focused on certain special cases of this problem for which we designed constant-factor approximation algorithms. Our bounds for more general settings left, however, large gaps to the known bounds for the basic problems they capture.

In this work, we provide a clearer picture of the approximability of these more general settings. First, we design an $O(\log^2 n)$-approximation algorithm for \NCCS{\textsf{Sym}}{\LP{1}}, that is, when the inner norm is an arbitrary monotone, symmetric norm. This improves upon our previous $\widetilde{O}(\sqrt{n})$-approximation even for the special case of ordered weighted norms. Second, we provide an $O(k)$-approximation for the general \NCCS{\textsf{Sym}}{\textsf{Sym}} problem, which improves upon our previous $\widetilde{O}(\sqrt{kn})$-approximation algorithm and matches the best-known upper bound for Min-Load $k$-Clustering.

We then combine our new and previous algorithms to interpolate between the above four basic objectives. Specifically, we obtain an upper approximability bound of $\widetilde{O}(\min\{n^{\chi_f},k^{1-\chi_g}\})$ for \NCCS{f}{g} under arbitrary monotone, symmetric norms $f,g$. Here, for any such norm $h\colon\mathbb{R}^d\rightarrow\mathbb{R}_{\geq 0}$, the parameter $\chi_h=(\log h(1,1,\dots,1)-\log h(1,0,\dots,0))/\log d$, which we call \emph{attenuation}, maps any monotone, symmetric norm onto a $[0,1]$-spectrum between the extremes $\LP{\infty}$ ($\chi_h=0$) and $\LP{1}$ ($\chi_h=1$). This upper bound recovers---up to poly-log factors---the best existing approximation algorithms for $k$-Center, $k$-Median, Min-Sum of Radii, Min-Load $k$-Clustering, \NCCS{\textsf{Top}}{\LP{1}}, and \NCCS{\LP{\infty}}{\textsf{Sym}}. We observe that a hypothetical $o(k)$-hardness of approximating certain ``compact'' instances for Min-Load $k$-Clustering would imply polynomial inapproximability bounds for \NCCS{f}{g} for \emph{any} pair of norms $f,g$ with $\chi_g<\chi_f$ contrasting the existing $O(1)$-approximation algorithms for $(k,z)$-Clustering (i.e., \NCCS{\LP{z}}{\LP{z}}) where $\chi_f=\chi_g$.
\end{abstract}
\section{Introduction}
Clustering is among the most fundamental tasks in data analysis, computer science, and operations research. 
It concerns partitioning a set~$P$ of data points from a metric space $M$ into $k$ groups (clusters) of points that are close to each other. 
We study the \NCCS{f}{g} problem that we introduced in a recent work \cite{herold-etal25:cluster-aware-norm-objectives}. 
In this problem, a clustering for a given instance is specified by a pair $(X,\sigma)$ where $X$ is a $k$-element subset of a set~$F$ of potential cluster centers from~$M$, and where $\sigma\colon P\rightarrow X$ assigns each data point to a center. Each cluster center $x\in X$ is associated with a distance vector $\distv{\sigma}{x}=\left(\delta(p,x)\mathbbm{1}[x=\sigma(p)]\right)_{p\in P}$ where~$\delta$ denotes the distance function in~$M$. 
Given a monotone, symmetric norm~$f$ (called inner norm), we assign to each cluster center $x$ a cluster cost $f_\sigma(x):=f(\distv{\sigma}{x})$. 
The goal is to find a solution $(X,\sigma)$ minimizing $\cost{\sigma}{X}: = g\left(\left( f(\distv{\sigma}{x} \right)_{x\in X}\right)$ where $g$ is another monotone, symmetric norm $g$  (called outer norm). 
See \Cref{def:nncc,def:ncc}.

This model generalizes many fundamental clustering problems such as $k$-Center (i.e., \NCCS{\LP{\infty}}{\LP{\infty}}), $k$-Median (i.e., \NCCS{\LP{1}}{\LP{1}}), Min-Sum of Radii (i.e., \NCCS{\LP{\infty}}{\LP{1}}), Min-Load $k$-Clustering (i.e., \NCCS{\LP{1}}{\LP{\infty}}), $k$-Means (i.e., \NCCS{\LP{2}}{\LP{2}}), and the more general $(k,z)$-Clustering problem (i.e., \NCCS{\LP{z}}{\LP{z}}). Informally, these problems have in common that the objective aggregates (via the outer norm) over suitably defined cluster costs (via the inner norm), which is why we call them \emph{cluster-aware}. This is in contrast to \emph{cluster-oblivious} problems where the objective is a function of the (global) distance vector $(\delta(p,\sigma(p)))_{p\in P}$. Notice that $k$-Median and $k$-Center are contained in both classes whereas Min-Sum of Radii and Min-Load $k$-Clustering are cluster-aware but not cluster-oblivious. 

There has been a recent rise of interest in more general norm objectives in various areas such as clustering~\cite{joachim,otherOrderedKMedian,aouad-segev19ordered-k-median,chakrabarty-swamy19:norm-k-clustering,chlamtac-etal22:fair-cascaded-norm-clustering,abbasi-etal23:epas-norm-clustering,herold-etal25:cluster-aware-norm-objectives}, load balancing~\cite{chakrabarty-swamy19:norm-k-clustering,deng-etal23:Generalized-Load-Balancing}, and stochastic optimization~\cite{ibrahimpur-swamy20:stochastic-norm-optimization}. The algorithmic study of such generalizations helps unify algorithmic techniques. They are of particular importance in clustering due to its diverse range of applications with often poorly characterized objectives. Additionally, they lead to new objectives interpolating between the classic objectives. 

Cluster-aware objectives are important for a variety of reasons. For example, it has been observed that cluster-oblivious objectives may lead to dissection of natural clusters. In fact, the Min-Sum of Radii problem has been suggested as a cluster-aware objective that reduces such dissection effects~\cite{hansen-jaumard97cluster-analysis,MinSumRadii}. On the other hand, as we pointed out in~\cite{herold-etal25:cluster-aware-norm-objectives}, some structural properties of cluster-oblivious objectives no longer hold for cluster-aware objectives. For example, in an optimal solution a data point does not always need to be assigned to the nearest cluster center. Therefore, while cluster-aware objectives are preferable in certain applications, they may be harder to handle algorithmically.

In a previous work~\cite{herold-etal25:cluster-aware-norm-objectives}, we designed constant-factor approximation algorithms for various special cases of the problem such as \NCCS{\textsf{Top}}{\LP{1}}, where the inner norm is a $\textsf{top}_\ell$ norm that sums over the $\ell$ largest coordinates, and for \NCCS{\LP{\infty}}{\textsf{Ord}} where the outer norm is an ordered weighted norm, that is, a convex combination of top norms. We also obtained first (non-constant) approximation results for more general settings that leave, however, sometimes large gaps to best-known upper bounds for certain subclasses. For example, we gave an $O(\sqrt{n})$-approximation for \NCCS{\textsf{Ord}}{\LP{1}} leaving a large gap to their $O(1)$-approximation for \NCCS{\textsf{Top}}{\LP{1}}, and to the classic constant-factor approximations for $k$-Median and Min-Sum of Radii in particular. For \NCCS{\textsf{Ord}}{\textsf{Ord}} we obtained a $O(\sqrt{nk})$ leaving a substantial gap to the special case of Min-Load $k$-Clustering for which a (trivial) $O(k)$-approximation is the best known result. 

The main goal of this work to gain a better understanding of the approximability of clustering under more general cluster-aware norm objectives. We aim at obtaining upper bounds that (nearly) match the best known bounds for the well-studied basic objectives. Also, we aim at better understanding the in-between objectives that interpolate between the basic objectives. We believe that examining such more general objectives is important in its own right. For example, as we elaborated in~\cite{herold-etal25:cluster-aware-norm-objectives} the sensitivity of the inner $\LP{\infty}$ norm in the Min-Sum of Radii problem can cause dissection effects itself. On the other hand, they demonstrate that in some scenarios, the dissections caused by the $k$-Center, $k$-Median, or Min-Sum of Radii objectives can be avoided by norms in-between the basic objectives, which motivates their independent study.

\paragraph{State of the Art.} We briefly outline the state of the art to provide the necessary context for our results. For a more comprehensive survey on related work we refer to \Cref{sec:furtherRelated}. The basic cluster-aware problems $k$-Center, $k$-Median, Min-Sum of Radii, and Min-Load $k$-Clustering are NP-hard~\cite{hochbaum-shmoys85:k-center,guha-khuller99:greedy-facility-location,gibson-etal10:msr,ahmadian-etal18:min-load-k-median}. This inspired intensive research on approximation algorithms for these problems~\cite{hochbaum-shmoys85:k-center,JainVaz,MinSumRadii,jain-etal03:greedy-facility-location,arya-etal04:local-search-k-median,ola,ahmadian,ahmadian-etal18:min-load-k-median,friggstad-jamshidian22:msr}. For Min-Sum of Radii, $k$-Center, and $k$-Median, the best known approximation algorithms have a constant guarantee~\cite{buchem-etal24:msr,hochbaum-shmoys85:k-center,byrka-etal17:improved-k-median}. The best known approximation for Min-Load $k$-Clustering has ratio $O(k)$ and improving this to $o(k)$ is elusive~\cite{ahmadian-etal18:min-load-k-median}.

Concerning more general objectives, we~\cite{herold-etal25:cluster-aware-norm-objectives} gave constant-factor approximations for \NCCS{\textsf{Top}}{\LP{1}} and \NCCS{\LP{\infty}}{\textsf{Ord}}. For \NCCS{\textsf{Ord}}{\LP{1}} they obtain an $O(\sqrt{n})$-approximation, and for \NCCS{\textsf{Ord}}{\textsf{Ord}} they obtain an $O(\sqrt{nk})$-approximation. Chakrabarty and Swamy~\cite{chakrabarty-swamy19:norm-k-clustering} give an $O(1)$-approximation algorithm for cluster-oblivious objectives under any arbitrary monotone, symmetric norm. Crucial intermediate steps were obtaining $O(1)$-approximations for $\textsf{top}_{\ell}$ norms and ordered weighted norms~\cite{joachim,otherOrderedKMedian,aouad-segev19ordered-k-median}. We remark that top and ordered weighted norms play an important role in the study of general norm objectives~\cite{joachim,otherOrderedKMedian,chakrabarty-swamy19:norm-k-clustering}. This is because any monotone, symmetric norm can be approximated by the maximum of polynomially many ordered weighted norms, and because each ordered weighted norm, in turn, is a convex combination of top norms~\cite{chakrabarty-swamy19:norm-k-clustering}.

\subsection{Our Contributions}

In the following, we use \textsf{Top} to denote the class of $\topl{\ell}{\cdot}$-norms, \textsf{Ord} to denote the class of ordered norms, and \textsf{Sym} to denote the class of symmetric monotone norms (see \Cref{ssec:typesOfNorms} for precise definitions).

\paragraph*{Polylogarithmic Approximation for \NCC{\textsf{Sym}}{ \LP{1} }.} First, we prove that there is a polylogarithmic approximation algorithm for \NCC{\textsf{Sym}}{ \LP{1} } improving upon our previous $\widetilde{O}(\sqrt{n})$-approximation~\cite{herold-etal25:cluster-aware-norm-objectives} even for the special case of ordered weighted norms.
\begin{restatable}{theorem}{apxsymlone}
\label{thm:apxsymlone}
    There is a factor-$O(\log^2{n})$ approximation for \NCC{\textsf{Sym}}{ \LP{1} }. 
\end{restatable}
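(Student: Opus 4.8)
The plan is to peel the problem apart in two stages: first reduce the arbitrary inner norm~$f$ to a conic combination of only $O(\log n)$ top-norms at the cost of one logarithmic factor, and then solve the resulting restricted \NCC{\textsf{Ord}}{\LP{1}} instance at the cost of a second logarithmic factor by passing through a ``layered ball'' reformulation and the $O(1)$-approximation for \NCC{\textsf{Top}}{\LP{1}} from our earlier work.

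\textbf{Step 1: From an arbitrary symmetric norm to $O(\log n)$ top-norms.} Since the outer norm is~$\LP{1}$, the objective is simply $\cost{\sigma}{X}=\sum_{x\in X}f(\distv{\sigma}{x})$, so it suffices to understand $f$ on one cluster distance vector at a time. For a monotone, symmetric norm~$f$ and a nonnegative vector $d$ with sorted entries $d_{(1)}\ge d_{(2)}\ge\cdots$, I would combine the telescoping decomposition $d=\sum_j (d_{(j)}-d_{(j+1)})\,\eins_j$ (with $\eins_j$ the all-ones vector on $j$ coordinates) and the triangle inequality with the monotonicity bound $f(d)\ge d_{(j)}\,f(\eins_j)$, valid for every~$j$. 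A dyadic grouping of the index~$j$ into blocks $[2^{t-1},2^t)$, together with the doubling estimate $f(\eins_{2^t})\le 2f(\eins_{2^{t-1}})$, then yields $f(d)=\Theta(\log n)\cdot\max_{t}d_{(2^t)}f(\eins_{2^t})$; passing to the concave envelope of $j\mapsto f(\eins_j)$ (which changes it by at most a constant factor) makes the induced weight sequence non-increasing and hence a conic combination of $\topl{1}{\cdot},\topl{2}{\cdot},\dots$. Rounding each rank to the next power of two, the upshot is that for suitable coefficients $A_t\ge 0$—essentially the increments of $j\mapsto f(\eins_j)/j$ along powers of two—one has
\[
  f(d)\ \le\ \sum_{t=0}^{\lceil\log n\rceil}A_t\cdot\topl{\ell_t}{d}\ \le\ O(\log n)\cdot f(d),\qquad \ell_t=2^t .
\]
Summing over clusters reduces \NCC{\textsf{Sym}}{\LP{1}}, up to an $O(\log n)$ factor, to minimizing $\sum_{x\in X}\sum_{t}A_t\,\topl{\ell_t}{\distv{\sigma}{x}}$, a \NCC{\textsf{Ord}}{\LP{1}} instance whose inner norm is a conic combination of only $O(\log n)$ top-norms.

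\textbf{Step 2: A layered-ball reformulation.} Writing the sorted cluster distances as $d_{(1)}\ge d_{(2)}\ge\cdots$, the cost $\sum_t A_t\,\topl{\ell_t}{d}$ equals $\sum_i u_i\,d_{(i)}$ for the non-increasing weights $u_i=\sum_{t:\ell_t\ge i}A_t$: the weight charged to a point depends only on how far (in rank within its cluster) it is from its center, and it drops by one level at each threshold $\ell_t$. I would therefore reinterpret a cluster cost as that of a nested partition of the cluster into $O(\log n)$ shells, shell~$t$ consisting of the points of ranks in $(\ell_{t-1},\ell_t]$, with associated radius $d_{(\ell_{t-1}+1)}$, contributing $\Theta\!\big(A_t\cdot(\#\text{points in shell})\cdot(\text{shell radius})\big)$; up to a constant this amounts to choosing, for every open center, a chain of $O(\log n)$ nested balls covering all its assigned points and paying a radius-dependent cost per layer. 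This is precisely the structure of the \Ballk{}~/ \MSRDC{} problem studied in this paper, and I would show its optimum is a constant-factor proxy for the Step-1 objective.

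\textbf{Step 3: Solving the layered-ball problem.} It remains to solve the layered-ball instance losing only an $O(\log n)$ factor, which combined with Step~1 gives the claimed $O(\log^2 n)=O(\log n)\cdot O(\log n)$ bound. A single layer in isolation is exactly a \NCC{\textsf{Top}}{\LP{1}} instance, for which our earlier work already provides an $O(1)$-approximation; the difficulty—and where I expect the second logarithmic factor to be essentially forced by this approach—is that clusterings cannot be ``merged'' across layers when $k$ is fixed, so one must commit to a single set of centers and a single assignment that is simultaneously good on all $O(\log n)$ layers. The route I would pursue is to guess, among only $O(\log n)$ candidates, the layer~$t^\star$ that dominates the optimum, solve a \NCC{\textsf{Top}}{\LP{1}}-type problem for that layer while imposing as hard radius/covering constraints that no other layer gets inflated, and charge the $O(\log n)$ for the guess and the $O(1)$ per-layer loss separately. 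The main obstacle is precisely to carry out this last step so that the constraints enforced for the non-dominant layers remain feasible at only a constant loss relative to the global optimum—equivalently, to argue that some near-optimal solution can be taken to be ``balanced'' across all scales, so that fixing the dominant scale does not destroy the others.
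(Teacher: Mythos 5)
Your Steps 1 and 2 correctly track the paper's high-level decomposition: one $\log n$ factor comes from replacing the arbitrary symmetric inner norm $f$ by a conic combination of $O(\log n)$ top-norms (this is the paper's Lemma A.2(ix) from \cite{herold-etal25:cluster-aware-norm-objectives} combined with the sparsification in \Cref{lem:sparse}), and the reinterpretation of the resulting $\sum_t A_t\,\topl{\ell_t}{\cdot}$ cost as a layered-ball cost is exactly the paper's \Ballk{} reduction (\Cref{lem:redball}). Your dyadic telescoping argument is sound up to constants.

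The gap is in Step 3, and it is not a small one---it is precisely the technical core of the paper. You acknowledge that a single layer is a \NCC{\textsf{Top}}{\LP{1}} instance and that the difficulty is committing to one set of centers and one assignment good across all $O(\log n)$ layers simultaneously, but your proposed resolution (guess a ``dominant'' layer $t^\star$, solve a \NCC{\textsf{Top}}{\LP{1}}-type subproblem for that layer subject to hard radius/covering constraints for the other layers, and argue some near-optimal solution is ``balanced'') is both unproven and not what the paper does. There is no obvious reason a near-optimal solution should be balanced across scales, and imposing covering constraints on the non-dominant layers turns the subproblem into a constrained variant you do not know how to solve. You explicitly flag this as the ``main obstacle,'' so you have correctly located the hard part; you have not closed it.

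The paper's actual Step 3 is a genuinely new primal-dual algorithm for \Ballk{}: a configuration LP with one indicator per (facility, radius-vector) pair, restricted to a polynomial-size candidate set of radius vectors (the $(\Delta,\Gamma)$-canonic solutions of \Cref{lem:thin,lem:fewradiivectors}); a dual-ascent/pruning LMP routine in which conflicting layered balls---which, unlike single balls, need not be comparable across layers---are resolved by ordering on total cost $\sop^{\intercal}\bm{r}$ and then \emph{expanding} each opened ball via $\expand{\bm{r}}=(r_i+2\sop^{\intercal}\bm{r}/\mu_i)_i$, which is where the second $O(\log n)$ factor actually arises (it inflates the opening cost by a factor $2m+1=O(\log n)$; see \Cref{claim:alphaballs}); and finally Jain--Vazirani-style binary search and bi-point rounding (\Cref{lem:bipoint,thm:ballkmed}) with a knapsack LP that handles the ``special group'' by slightly under-opening it (\Cref{claim:specialfacility}). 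None of this is black-boxed from the earlier $O(1)$-approximation for \NCC{\textsf{Top}}{\LP{1}}; it has to be re-derived to cope with incomparable radius vectors. To complete your proof you would need to either supply this machinery or make the ``balanced across scales'' claim precise and prove it---the latter being an open proposal, not a known fact.
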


To obtain this result, we design an $O(\log n)$-approximation algorithm for \NCC{\textsf{Ord}}{ \LP{1} } (see \Cref{thm:apxtoplone}). Combined with the fact that any monotone, symmetric norm can be approximated within a factor $O(\log n)$ by an ordered weighted norm, this gives Theorem~\ref{thm:apxsymlone}. We remark that, other than for certain basic clustering problems such as $k$-Median, it is non-trivial to obtain a logarithmic ratio. For example, it is unclear how to apply probabilistic tree embeddings~\cite{fakcharoenphol-etal04:prob-tree-embeddings} (that directly imply $O(\log n)$-approximation for $k$-Median) to ordered weighted norms. By analogy, obtaining an $O(\log n)$-approximation algorithm for the (cluster-oblivious) Ordered $k$-Median problem turned out challenging~\cite{aouad-segev19ordered-k-median}.

Our algorithm in~\cite{herold-etal25:cluster-aware-norm-objectives} for \NCC{\textsf{Top}}{ \LP{1} } relies on a reduction to a generalization of $k$-Median called \emph{Ball $k$-Median}. Here, the task is to select $k$ balls (rather than $k$ centers) so as to minimize the overall cost of connecting the points to the  balls plus the (scaled) radii of the balls. Using that any ordered weighted norm is a convex combination of top norms, we reduce \NCC{\textsf{Ord}}{ \LP{1} } to a new problem, which we call \emph{\Ballk{}}. In this problem, we wish to select $k$ \emph{layered balls}, each of which is a group of $m$ concentric balls for some integer $m$. More precisely a layered ball is given by a center $x$ and a radius vector $\bm{r}=(r_i)_{i\in [m]}$. It incurs a cost of $\bm{\mu}^{\intercal}\bm{r}$ for some radius scaling vector $\bm{\mu}$. The cost of connecting a point $p$ to this layered ball is the scaled sum $\bm{\rho}^{\intercal}((\dist{p}{x}- r_i)^+)_{i\in[m]}$ of distances to the balls for some scaling vector $\bm{\rho}=(\rho_i)_{i\in [m]}$. The task is to select $k$ layered balls and to connect each point to one of them so as to minimize the total cost of the layered balls plus the total connection cost of points to their respective layered ball. For a formal definition see Definition~\ref{def:ball-k-median}.

To give a logarithmic-factor approximation for~\Ballk, we formulate (as we did previously in~\cite{herold-etal25:cluster-aware-norm-objectives} for Ball $k$-Median) an LP relaxation for the problem and apply the primal-dual Lagrangean relaxation and bi-point rounding framework by Jain and Vazirani~\cite{JainVaz} for $k$-Median (which was later used for other problems such as Min-Sum of Radii~\cite{MinSumRadii}). We set up, however, a configuration-type of LP that has an indicator variable for each potential layered ball rather than individual indicator variables for the balls. An issue with a direct application of this approach is that our reduction produces instances of \Ballk{} with $m=n$ many layers and that our analysis gives an approximation ratio depending only linearly on $m$. To reduce the number of layers, we leverage techniques by Chakrabarty and Swamy~\cite{chakrabarty-swamy19:norm-k-clustering} to sparsify any instance of \Ballk{} to have only $m=O(\log n)$ layers while losing only a constant factor in the approximation factor. This enables us later to prove an overall logarithmic ratio. Another difficulty is that, even with only logarithmically many layers, the configuration LP has superpolynomially many variables and constraints because there are polynomially many potential radii but the radius vectors have logarithmic dimension. We therefore apply a second sparsification that compresses the LP relaxation itself. We argue that any instance has an $O(1)$-approximate solution $\mathcal{X}$, which we call a \emph{canonic} solution, for which there is a set of logarithmically many candidate radii such that any radius in $\mathcal{X}$ is contained in this set. This allows us to reduce the number of configuration variables to be polynomial. Here we use the observation that in such a solution the radii $r_i$ of each layered ball are w.l.o.g.\ sorted non-increasingly by the ratio $\mu_i/\rho_i$ of scaling factors in the respective layer~$i$. A radius vector is therefore determined by its (unordered) \emph{set} of candidate radii, for which there are polynomially many choices only.

These ideas allow us to generalize large parts of our primal-dual Lagrange relaxation and bi-point rounding algorithms in~\cite{herold-etal25:cluster-aware-norm-objectives} (which follow on a high-level the framework of Jain and Vazirani~\cite{JainVaz} for $k$-Median) as well as our analysis to compute a solution with an approximation ratio linear in the number~$m$ of layers. Analogously to~\cite{herold-etal25:cluster-aware-norm-objectives} and also previous applications of this framework~\cite{JainVaz} a key step is when the dual variable of a point pays for the opening of two or more layered balls (corresponding to centers in classic $k$-Median~\cite{JainVaz} and to balls in Ball $k$-Median~\cite{herold-etal25:cluster-aware-norm-objectives}, respectively). This requires us to close at least one of the conflicting layered balls and reroute the points connected to them to an opened one. In~\cite{herold-etal25:cluster-aware-norm-objectives} we resolved this by opening balls in non-increasing order of their radii, closing conflicting balls, and tripling the radii of the opened balls to bound the rerouting costs. This does not carry over directly to \Ballk{} because conflicting layered balls may be incomparable w.r.t.\ their radii across the different layers. We resolve conflicts by opening balls in non-increasing order of their costs. Afterwards, we increase the radius vector $\bm{r}$ of any ball to $3(\bm{\mu}^{\intercal}\bm{r}/\mu_i)_{i\in [m]}$. First, this ensures that the radii of any opened layered ball are larger than their corresponding radii of every conflicting layered ball because we ordered them by their (original) costs. Second, this guarantees that the opening costs increases by factor (no more than) $m$ because the cost at every layer is thrice the original cost of that ball. Third, this allows us to bound the rerouting cost because we enlarged each radius by a factor at least three.

\paragraph*{An $O(k)$-Approximation for General \NCCS{\textsf{Sym}}{\textsf{Sym}}.} Our second main result is an $O(k)$-approximation for the general \NCCS{\textsf{Sym}}{\textsf{Sym}} problem, which improves upon our $\widetilde{O}(\sqrt{kn})$-approximation algorithm in~\cite{herold-etal25:cluster-aware-norm-objectives} and matches the best-known upper bound for Min-Load $k$-Clustering. See Table~\ref{tbl:results} for an overview over our new results and the previously best upper bounds for comparison.
\begin{restatable}{theorem}{generalnorm}\label{thm:generalNorm}
There is an $O(k)$-approximation for \NCCS{f}{g}.
\end{restatable}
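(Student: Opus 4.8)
The plan is to reduce, essentially for free, to the \emph{cluster-oblivious} clustering problem under the inner norm~$f$ and then exploit a cancellation between the two norms. Recall that Chakrabarty and Swamy~\cite{chakrabarty-swamy19:norm-k-clustering} give an $O(1)$-approximation for the problem of partitioning~$P$ into~$k$ clusters so as to minimize $f\bigl((\delta(p,\sigma(p)))_{p\in P}\bigr)$, i.e.\ the inner norm applied to the \emph{global} distance vector. I would run that algorithm, obtaining a pair $(X,\sigma)$ with $|X|=k$, and simply return it as the solution for \NCCS{f}{g}. Write $F:=f\bigl((\delta(p,\sigma(p)))_{p\in P}\bigr)$ for its cluster-oblivious cost and $F^{\star}$ for the optimal cluster-oblivious cost, so that $F\le O(1)\cdot F^{\star}$.

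The analysis would then combine three elementary facts. (i)~For every $x\in X$, the cluster vector $\distv{\sigma}{x}$ is coordinate-wise dominated by the global distance vector, so monotonicity of~$f$ gives $f_{\sigma}(x)\le F$; since~$g$ is monotone and homogeneous, $\cost{\sigma}{X}=g\bigl((f_{\sigma}(x))_{x\in X}\bigr)\le g(F,\dots,F)=F\cdot g(1,\dots,1)$. (ii)~The optimal solution $(X^{\star},\sigma^{\star})$ for \NCCS{f}{g} is in particular a feasible cluster-oblivious solution, and by the triangle inequality $F^{\star}\le f\bigl((\delta(p,\sigma^{\star}(p)))_{p\in P}\bigr)=f\bigl(\sum_{x^{\star}\in X^{\star}}\distv{\sigma^{\star}}{x^{\star}}\bigr)\le\sum_{x^{\star}\in X^{\star}}f_{\sigma^{\star}}(x^{\star})=\|c^{\star}\|_1$, where $c^{\star}$ denotes the cost vector of the optimum; hence $F\le O(1)\cdot\|c^{\star}\|_1$. (iii)~Averaging $c^{\star}$ over all $k!$ coordinate permutations produces the flat vector $\tfrac{\|c^{\star}\|_1}{k}(1,\dots,1)$, so by convexity and symmetry of~$g$ one gets $\tfrac{\|c^{\star}\|_1}{k}\,g(1,\dots,1)=g\bigl(\tfrac{\|c^{\star}\|_1}{k}(1,\dots,1)\bigr)\le g(c^{\star})=\OPT$, i.e.\ $\|c^{\star}\|_1\le\tfrac{k}{g(1,\dots,1)}\,\OPT$. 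Chaining (i)--(iii) yields $\cost{\sigma}{X}\le F\cdot g(1,\dots,1)\le O(1)\,\|c^{\star}\|_1\cdot g(1,\dots,1)\le O(k)\,\OPT$.

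The point that makes this work -- and the only thing one really has to notice -- is the telescoping: the factor $g(1,\dots,1)$ incurred when passing from the per-cluster costs to the outer norm in~(i) is exactly the factor by which~$g$ compresses $\|c^{\star}\|_1$ in~(iii), so these cancel and the only overhead that survives is the innocuous~$k$ (the flat vector lives in~$\mathbb{R}^{k}$) together with the constant loss of the cluster-oblivious routine. In particular no norm sparsification or LP rounding is needed here. Finally, $O(k)$ cannot be improved at this level of generality, since \NCCS{\LP{1}}{\LP{\infty}} (Min-Load $k$-Clustering) is a special case for which no $o(k)$-approximation is known.
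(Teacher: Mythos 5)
Your proposal is correct and follows essentially the same route as the paper's proof: run the Chakrabarty--Swamy $O(1)$-approximation for the cluster-oblivious problem under the inner norm $f$, then use the two-sided comparison $\frac{g(1,\dots,1)}{k}\|\cdot\|_1 \le g(\cdot) \le g(1,\dots,1)\|\cdot\|_\infty$ on the cluster-cost vector so that the $g(1,\dots,1)$ factors cancel, leaving only the $k$ from the $\ell_1$-vs-$g$ direction. The single cosmetic difference is that you justify the lower bound $g(c^\star)\ge \frac{g(1,\dots,1)}{k}\|c^\star\|_1$ by symmetrization and convexity, whereas the paper's Lemma~\ref{lem:generalNormSubgradient} invokes Ky Fan dominance; these are interchangeable arguments for the same standard fact, and the remainder (triangle inequality for $f$, closest-facility reassignment, telescoping) is identical.
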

Together with our $O(\log k)$-approximation in~\cite{herold-etal25:cluster-aware-norm-objectives} for \NCCS{\LP{\infty}}{\textsf{Sym}} this gives upper bounds that match (ignoring polylog factors) the best known bounds for $k$-Center, $k$-Median, Min-Sum of Radii, and Min-Load $k$-Clustering whenever the inner norm or the outer norm is either $\LP{\infty}$ or $\LP{1}$ while the other norm is allowed to be arbitrary. In the case when both norms are arbitrary, we can only resort to Theorem~\ref{thm:generalNorm}, which may not be satisfying given the much stronger bounds we can obtain for some of the extreme cases under $\LP{\infty}$ and $\LP{1}$.

\paragraph*{Interpolation and Fine-Grained Analysis via Attenuation Parameter.} To analyze these in-between scenarios in a more fine-grained fashion, we define for any monotone, symmetric norm $h\colon\mathbb{R}^d\rightarrow\mathbb{R}_{\geq 0}$ a parameter $\chi_h=(\log h(1,1,\dots,1)-\log h(1,0,\dots,0)/\log d$, which we call \emph{attenuation of $h$} and which lies in the interval $[0,1]$. We have $\chi_h=0$ and $\chi_h=1$ precisely if $h$ is an $\LP{\infty}$ norm and an $\LP{1}$ norm respectively. The attenuation of an $\LP{z}$ norm is $1/z$. The attenuation of a top-$\ell$ norm is $\log\ell / \log d$, and the attenuation of an ordered weighted norm with weight vector $\bm{w}$ (normalized to $w_1=1$) is $\log (\sum_i w_i)/\log d$. The attenuation parameter is essentially equivalent to the parameter~$\rho_h$ introduced by Patton, Russo, and Singla~\cite{patton-etal23:submodular-norms}. Specifically, $\chi_h=\log \rho_h/\log d$. They use $\rho_h$ for a fine-grained study of submodular norm objectives.

We argue that attenuation is a useful parameter for a fine-grained study of the approximability of cluster-aware norm objectives because upper bounds for one objective imply similar bounds for other objectives if their attenuation parameters are similar. More specifically, taking the best of our new algorithms and our algorithm in~\cite{herold-etal25:cluster-aware-norm-objectives} for \NCCS{\LP{\infty}}{\textsf{Sym}} allows us to interpolate (losing polylog factors) between the four basic problems $k$-Center, $k$-Median, Min-Sum of Radii, and Min-Load $k$-Clustering which form the extreme points under the attenuation parameter. See Figure~\ref{fig:objectives-map} for a pictorial illustration of the map of cluster-aware norm objectives.
\begin{restatable}{theorem}{reductionalgo}\label{thm:reductionalgo}
There is an $O\left(\min ( k^{1-\chi_g}\log^2 n,  n^{\chi_f}\log k,k)\right)$-approximation for \NCCS{f}{g}.
\end{restatable}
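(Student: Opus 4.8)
The plan is to produce the three quantities inside the minimum from three independent black boxes and return the cheapest of the resulting solutions: \Cref{thm:apxsymlone} will give the $k^{1-\chi_g}\log^2 n$ term, our $O(\log k)$-approximation for \NCCS{\LP{\infty}}{\textsf{Sym}} from~\cite{herold-etal25:cluster-aware-norm-objectives} will give the $n^{\chi_f}\log k$ term, and \Cref{thm:generalNorm} gives the $O(k)$ term. The only genuinely new ingredient is a pair of norm-comparison inequalities that quantify, through the attenuation parameter, exactly how much is lost when the inner norm $f$ is relaxed to $\LP{\infty}$ or the outer norm $g$ is relaxed to $\LP{1}$.

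First I would record the following elementary fact. Let $h\colon\mathbb{R}^d\to\mathbb{R}_{\geq0}$ be any monotone, symmetric norm, normalized so that $h(e_1)=1$; then $h(\mathbf 1)=d^{\chi_h}$ by definition of the attenuation, and for every $y\in\mathbb{R}^d_{\geq0}$
\[
\|y\|_\infty\;\le\;h(y)\;\le\;d^{\chi_h}\|y\|_\infty
\qquad\text{and}\qquad
\|y\|_1\;\le\;d^{1-\chi_h}\,h(y)\;.
\]
The left inequality of the first chain is monotonicity applied to a single coordinate; the right one follows from $h(y)\le h(\|y\|_\infty\mathbf 1)=\|y\|_\infty h(\mathbf 1)$. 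The second inequality is the only nontrivial one: a symmetric convex function is Schur-convex, and the flat vector $(\|y\|_1/d)\mathbf 1$ is majorized by $y$ (the average of the $j$ largest coordinates of $y$ is at least the overall average, and the coordinate sums agree), so $(\|y\|_1/d)\,h(\mathbf 1)=h\big((\|y\|_1/d)\mathbf 1\big)\le h(y)$, which rearranges to the claim. For $h=\LP{z}$ this is just H\"older's inequality, but for a general monotone symmetric norm the majorization argument appears to be needed.

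Given this, the three reductions are routine bookkeeping. For the $n^{\chi_f}\log k$ bound, run the $O(\log k)$-approximation for \NCCS{\LP{\infty}}{g} on the same instance to obtain $(X,\sigma)$. Since $f\ge f(e_1)\|\cdot\|_\infty$ pointwise on $\mathbb{R}^n_{\geq0}$ and $g$ is monotone, the optimum of \NCCS{\LP{\infty}}{g} is at most $f(e_1)^{-1}\OPT$ of \NCCS{f}{g}; conversely, evaluating $(X,\sigma)$ under $f$ and using $f\le n^{\chi_f}f(e_1)\|\cdot\|_\infty$ scales its vector of cluster costs up coordinatewise by at most $n^{\chi_f}f(e_1)$, so its $(f,g)$-cost is $O(n^{\chi_f}\log k)\cdot\OPT$ of \NCCS{f}{g}, the factors $f(e_1)$ cancelling. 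Symmetrically, for the $k^{1-\chi_g}\log^2 n$ bound, run the algorithm of \Cref{thm:apxsymlone} for \NCCS{f}{\LP{1}}; applying the $\ell_1$-versus-$h$ inequality to the outer norm on the cost vector of an optimal \NCCS{f}{g}-solution shows the optimum of \NCCS{f}{\LP{1}} is at most $k^{1-\chi_g}g(e_1)^{-1}\OPT$ of \NCCS{f}{g}, while $g\le g(e_1)\|\cdot\|_1$ means the $(f,g)$-cost of the returned solution is at most $g(e_1)$ times its $(f,\LP{1})$-cost; again the $g(e_1)$ cancel and we obtain $O(k^{1-\chi_g}\log^2 n)\cdot\OPT$. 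The third bound is \Cref{thm:generalNorm} verbatim. Taking the best of the three solutions proves the theorem.

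The main (and essentially only) obstacle is the Schur-convexity inequality $\|y\|_1\le d^{1-\chi_h}h(y)$: one must verify it for arbitrary monotone symmetric norms and with precisely the exponent $1-\chi_h$ governed by the value $h(\mathbf 1)$, rather than settling for the crude $\|y\|_1\le d\|y\|_\infty\le d\,h(y)$. Everything else amounts to tracking which direction each inequality points and checking that the normalization constants $f(e_1)$ and $g(e_1)$ drop out when one chains the ``$\OPT$ only decreases under relaxation'' bound with the ``cost only increases when re-evaluated under the true norm'' bound.
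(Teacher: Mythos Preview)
Your proposal is correct and follows essentially the same route as the paper. The paper likewise combines the three black boxes (\Cref{thm:apxsymlone}, the $O(\log k)$-approximation for \NCCS{\LP{\infty}}{\textsf{Sym}} from~\cite{herold-etal25:cluster-aware-norm-objectives}, and \Cref{thm:generalNorm}), and your key inequality $\|y\|_1\le d^{1-\chi_h}h(y)$ via majorization is precisely the content of the paper's \Cref{lem:generalNormSubgradient}, which is proved there by Ky Fan's dominance principle---the same Schur-convexity argument you outline.
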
 
Notice that this upper bound matches---up to polylog factor---the best known bounds for the four basic (extreme point) problems $k$-Center, $k$-Median, Min-Sum of Radii, and Min-Load $k$-Clustering as well as our results in~\cite{herold-etal25:cluster-aware-norm-objectives} for \NCCS{\textsf{Top}}{\LP{1}}, and \NCCS{\LP{\infty}}{\textsf{Sym}}. Our results add further evidence to the hypothesis by Patton, Russo, and Singla~\cite{patton-etal23:submodular-norms} that parameters such as $\chi_h$ or equivalently $\rho_h$ provide a means to interpolate between $\LP{1}$ and $\LP{\infty}$.

We observe a similar phenomenon also on the lower bounding side, that is, a (hypothetical) lower bound implies lower bounds in similar attenuation regimes. Consider the Min-Load $k$-Clustering for which no $o(k)$-approximation is known to date. Unfortunately, on the lower bounding side only APX-hardness is known~\cite{ahmadian-etal18:min-load-k-median}. If we were to assume, however, that there is no $o(k)$-approximation for this problem then this would have inapproximability implications for \emph{every} norm objective with similar attenuation. More specifically, assume that---purely hypothetically---there is no $o(k)$-approximation algorithm for Min-Load $k$-Clustering on \emph{compact} instances with $n=k^{1+o(1)}$.\footnote{By analogy, it is known for $k$-Median that compact instances are as hard to approximate as general instances as witnessed by coreset constructions. We emphasize it is not our point that this hardness assumption is likely to be true. Rather, we demonstrate potential implications for inapproximability for objectives with similar attenuation parameters.} \Cref{crl:reductionhard} shows that then \emph{any} norm objective in a wide regime of attenuation parameters would have a polynomial lower bound on their approximability.

\begin{restatable}{theorem}{corollaryhardness}[Informal]\label{crl:reductionhard}
Assume there is no $o(k)$-approximation for Min-Load Clustering for instances with $n=k^{1+o(1)}$. Then for any (infinite) class of \NCCS{f}{g} instances with fixed $\chi_f, \chi_g$ and for any $\epsilon>0$ there is no $k^{\chi_f-\chi_g-\epsilon}$-approximation.
\end{restatable}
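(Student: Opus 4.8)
The plan is to reduce from Min-Load $k$-Clustering (i.e., \NCCS{\LP{1}}{\LP{\infty}}) on compact instances to \NCCS{f}{g} for an arbitrary target class with fixed attenuation parameters $\chi_f<\chi_g$ (note the statement's $\chi_f-\chi_g-\epsilon$ exponent is only nonvacuous when $\chi_f>\chi_g$, so I would phrase the reduction so that it transfers a hardness for the $\LP{1}/\LP{\infty}$ pair, whose attenuation gap is $1-0=1$, down to a pair whose gap is $\chi_f-\chi_g$; the loss in the exponent accounts for the smaller gap). The key point is that the \emph{same} metric instance, interpreted under different pairs of inner/outer norms, yields objective values that are polynomially related, with the polynomial exponent controlled by the attenuation parameters. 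Concretely, for a cluster with $t$ points all at distance (essentially) $R$ from the center, the inner norm contributes $\approx h(R,\dots,R) \approx R\cdot d^{\chi_h}$ when evaluated on a $t$-dimensional all-$R$ vector padded to dimension $d$ — so the inner norm scales like $t^{\chi_f}$ and the outer aggregation over $k$ such clusters scales like $k^{\chi_g}$, compared to $t^1$ and $k^0$ for the $\LP{1}/\LP{\infty}$ pair. This discrepancy is exactly the source of the $k^{\chi_f-\chi_g}$ gap.

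The steps I would carry out are as follows. First I would take a Min-Load instance on $n=k^{1+o(1)}$ points, and observe (as in the standard reduction establishing APX-hardness, e.g.\ via \cite{ahmadian-etal18:min-load-k-median}) that we may assume it is ``balanced'' in the sense that every cluster in a reasonable solution has $\Theta(n/k)=k^{o(1)}$ points and that all relevant distances lie in a bounded-ratio window; in particular, on such compact instances an $o(k)$-approximation for \NCCS{\LP{1}}{\LP{\infty}} is assumed impossible. Second, I would run the \emph{same} instance (same metric, same $P$, same $F$, same $k$) as an \NCCS{f}{g} instance and compare \OPT. Using monotonicity and symmetry of $f,g$ together with the two-point evaluations $h(1,\dots,1)=d^{\chi_h}\cdot h(1,0,\dots,0)$, I would show that for any solution $(X,\sigma)$, its \NCCS{f}{g}-cost and its \NCCS{\LP{1}}{\LP{\infty}}-cost differ by a factor that is $\Theta^*(\mathrm{poly})$ bounded: the inner norm on a cluster of $s\le k^{o(1)}$ nonzero coordinates is within a $k^{o(1)}$-factor of $(\text{sum})^{?}$... more carefully, I would sandwich $f$ on such a vector between $\LP{\infty}$ and $\LP{1}$ evaluations scaled by $d^{\chi_f}$, and likewise for $g$ across the $k$ clusters scaled by $k^{\chi_g}$. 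Third, I would conclude that an $\alpha$-approximation for \NCCS{f}{g} yields an $\alpha\cdot k^{o(1)}\cdot k^{1-\chi_f}\cdot k^{\chi_g}$-approximation (roughly) for Min-Load, so if $\alpha\le k^{\chi_f-\chi_g-\epsilon}$ then the composed ratio is $k^{1-\epsilon+o(1)}=o(k)$, contradicting the hypothesis.

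The main obstacle I anticipate is that the two-point description of a norm, $h(1,\dots,1)$ versus $h(1,0,\dots,0)$, does \emph{not} by itself pin down $h$ on intermediate-support vectors: a general monotone symmetric norm on a vector with $s$ equal nonzero coordinates could be anywhere between $\approx s^{0}$ and $\approx s^{1}$ times a single coordinate, not necessarily $\approx s^{\chi_h}$. This is why compactness ($s=k^{o(1)}$, so $s^{\text{anything in }[0,1]}=k^{o(1)}$) is essential — it makes the ambiguity a subpolynomial factor, which the theorem's $-\epsilon$ slack absorbs. I would therefore have to be careful that: (i) the dimension $d$ used to define $\chi_f,\chi_g$ matches the instance (pad with zeros / duplicate far points as needed so that $d=\Theta(n)$ for $f$ and the number of clusters is $\Theta(k)$ for $g$, noting padding by zeros leaves norm values unchanged and only affects $d$ logarithmically in the exponent); and (ii) the attenuation is stable under the restriction from dimension $d$ to the effective support size, which again only costs subpolynomial factors on compact instances. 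Handling the outer norm requires the symmetric argument: on $k$ clusters of roughly equal cluster-cost, $g$ lies within a $k^{o(1)}$-factor of $k^{\chi_g}$ times a single cluster cost, letting us trade the $\LP{\infty}$ aggregation of Min-Load for the $g$-aggregation at a cost of $k^{\chi_g+o(1)}$ in the ratio. Assembling these pieces and being explicit about the $k^{o(1)}$ book-keeping gives the claimed $k^{\chi_f-\chi_g-\epsilon}$ lower bound.
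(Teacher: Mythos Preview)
Your overall reduction structure matches the paper's: feed the Min-Load instance unchanged into an assumed \NCCS{f}{g}-approximation and sandwich the two objectives, so that an $\alpha$-approximation for $(f,g)$ becomes an $\alpha\cdot n^{1-\chi_f}k^{\chi_g}$-approximation for Min-Load; with $n=k^{1+o(1)}$ this is $\alpha\cdot k^{1-\chi_f+\chi_g+o(1)}$, and $\alpha=k^{\chi_f-\chi_g-\epsilon}$ gives $o(k)$. That arithmetic is exactly the paper's.

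The gap is in \emph{how} you obtain the sandwich. You try to control $f$ on a cluster by estimating $f$ on an $s$-support vector as roughly $s^{\chi_f}$ times a single coordinate, and then invoke compactness to force $s=n/k=k^{o(1)}$ so that the ambiguity $s^{[0,1]}$ is subpolynomial. This breaks for two reasons. First, as you acknowledge, $\chi_f$ says nothing about $f(\mathds{1}^{(s)},0^{n-s})$ for intermediate $s$ (e.g.\ $f=\topl{\sqrt{n}}{\cdot}$ has $\chi_f=1/2$ but $f(\mathds{1}^{(s)},0)=\min(s,\sqrt n)$, not $s^{1/2}$). Second, and fatally, you only get $s=k^{o(1)}$ by assuming balanced clusters, but the solution you must analyze in the ``output $\to$ Min-Load'' direction is whatever the black-box $(f,g)$-algorithm returns; nothing prevents that output from having one cluster with $\Theta(n)$ points, in which case $s^{[0,1]}$ is $k^{\Theta(1)}$ and your $k^{o(1)}$ book-keeping collapses.

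The paper sidesteps the whole intermediate-support issue. It never looks at cluster sizes. It uses two inequalities valid for \emph{every} vector $\bm v$: (i) $\|\bm v\|_1\le \frac{d}{h(\mathds{1}^{(d)})}\,h(\bm v)$, which follows from Ky Fan's dominance principle (this is \Cref{lem:generalNormSubgradient}), and (ii) $h(\bm v)\le \|\bm v\|_1\cdot h(\bm{e^{(1,d)}})$, which is just triangle inequality plus symmetry. Applying (i) to $f$ and the trivial bound $g(\cdot)\ge g(\bm{e^{(1,k)}})\max$ converts the Min-Load cost of the algorithm's output to its $(f,g)$-cost at a loss of $\frac{n}{f(\mathds{1}^{(n)})g(\bm{e^{(1,k)}})}$; applying (ii) to $f$ and the bound $g(\cdot)\le g(\mathds{1}^{(k)})\max$ converts the $(f,g)$-cost of the Min-Load optimum to its Min-Load cost at a loss of $f(\bm{e^{(1,n)}})g(\mathds{1}^{(k)})$. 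The product of the two losses is exactly $n^{1-\chi_f}k^{\chi_g}$, with no balance assumption. Replacing your second step with this pair of global bounds fixes the argument and removes the detour through cluster sizes.
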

For a formal version of \Cref{crl:reductionhard}, see \Cref{thm:reductionCompactHard}.
Furthermore, in~\Cref{thm:reductionhard} we prove a weaker lower bound without the hardness assumption for compact instances.

Motivated by the constant-factor approximations for $(k,z)$-Clustering for which $\chi_f=\chi_g$ and the polylogarithmic approximations for \NCCS{\textsf{Sym}}{\LP{1}} and \NCCS{\LP{\infty}}{\textsf{Sym}}, it is an intriguing open question whether there is a polylogarithmic or even constant-factor approximation whenever $\chi_f\leq\chi_g$.
\begin{openquestion}
Is there a constant-factor (or polylogarithmic) approximation algorithm for \NCCS{f}{g} for any instance with $\chi_f\leq\chi_g$?
\end{openquestion}

\begin{figure}[!ht]
  \centering
  \includegraphics{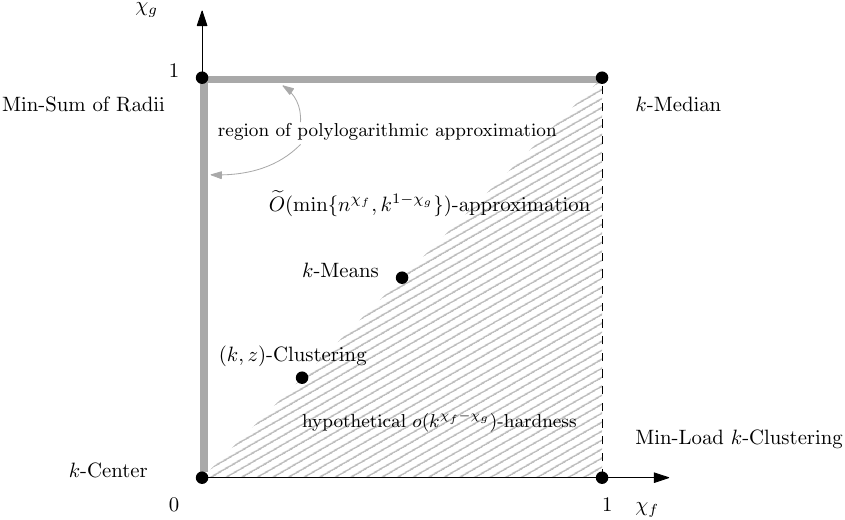}
  \caption{Landscape of approximability mapping any instance of \NCCS{f}{g} to a point $(\chi_f,\chi_g)\in[0,1]^2$. Black dots show the basic clustering problems $k$-Center, $k$-Median, Min-Sum of Radii, Min-Load $k$-Clustering, $k$-Means, and $(k,z)$-clustering. Any instance can be approximated within factor $\widetilde{O}(\min\{k^{1-\chi_g},  n^{\chi_f})\})$. The gray boundaries of the square represent \NCCS{\textsf{Sym}}{\LP{1}} and \NCCS{\LP{\infty}}{\textsf{Sym}}, which admit polylogarithmic ratios. The lower-right gray shaded triangle depicts a hypothetical polynomial hardness region assuming $o(k)$-hardness of approximating compact instances of Min-Load $k$-Clustering. Notice that this region is bounded by a diagonal that is densely populated with $(k,z)$-Clustering instances for $z\geq 1$ and for which constant factors are known.}
  \label{fig:objectives-map}
\end{figure}

\begin{table}[!ht]
\caption{Overview over our results and the previously best known bounds for comparison.}
\label{tbl:results}
\begin{center}\resizebox{\textwidth}{!}{
    \begin{tabular}{ | c | c | c | c | c | c |}
    \hline
    \diagbox{Outer}{Inner}& $\LP{1}$ & $\LP{\infty}$ & \textsf{Top} & \textsf{Ord}& \textsf{Sym}\\ \hline
\multirow{4}{5em}{\centering$\LP{1}$} & {\multirow{4}{5em}{\centering$O(1)$~~\cite{JainVaz}}}& \multirow{4}{6em}{\centering$O(1)$~~\cite{MinSumRadii}}& \multirow{4}{6em}{\centering${O(1)}$~~{\cite{herold-etal25:cluster-aware-norm-objectives}}}& \multirow{4}{9em}{\centering$O(\sqrt{n})$~~\cite{herold-etal25:cluster-aware-norm-objectives}\\[1.5ex]$\bm{{O}(\min\{\log{n},k\})}$\\\textbf{\Cref{thm:apxtoplone}+\ref{thm:generalNorm}}}& \multirow{4}{10em}{\centering $O(\sqrt{n}\log n)$~~\cite{herold-etal25:cluster-aware-norm-objectives}\\[1.5ex]$\bm{{O}(\min\{\log^2{n},k\})}$\\\textbf{\Cref{thm:apxsymlone}+\ref{thm:generalNorm}}}\\ 
    &&&&& \\
    &&&&& \\
    &&&&& \\\hline
    \multirow{2}{4em}{\centering$\LP{\infty}$} & {\multirow{2}{5em}{\centering$O(k)$~~~\cite{ahmadian-etal18:min-load-k-median}}}&  \multirow{2}{6em}{\centering${O(1)}$~~\cite{hochbaum-shmoys85:k-center,gonzalez85:k-center}}& \multirow{4}{6em}{\centering $O(k)$~~\cite{herold-etal25:cluster-aware-norm-objectives}}&  \multirow{4}{9em}{\centering $O(\sqrt{nk})$~~\cite{herold-etal25:cluster-aware-norm-objectives}\\[1.5ex]$\bm{O(k)}$\\\textbf{\Cref{thm:generalNorm}}}& \multirow{4}{10em}{\centering$O(\sqrt{nk}\log n)$~~\cite{herold-etal25:cluster-aware-norm-objectives}\\[1.5ex]$\bm{O(k)}$\\\textbf{\Cref{thm:generalNorm}}}\\
    && & & &\\\cline{1-3}
    \multirow{2}{4em}{\centering \textsf{Ord}} & \multirow{2}{5em}{\centering$O(k)$~~\cite{herold-etal25:cluster-aware-norm-objectives}}& \multirow{2}{6em}{\centering${O(1)}$~~\centering\cite{herold-etal25:cluster-aware-norm-objectives}}& & &\\
    & & &  & &\\ \hline
    \multirow{4}{4em}{\centering \textsf{Sym}} & \multirow{4}{5em}{\centering$O(k
)$~~\cite{herold-etal25:cluster-aware-norm-objectives}}& \multirow{4}{6em}{\centering$O(\log k)$~~\cite{herold-etal25:cluster-aware-norm-objectives}}& \multirow{4}{6em}{\centering $O(k\log k)$~~\cite{herold-etal25:cluster-aware-norm-objectives}\\[1.5ex]$\bm{O(k)}$\\\textbf{\Cref{thm:generalNorm}}} &\multirow{4}{9em}{\centering$O(\sqrt{nk}\log k)$~~\cite{herold-etal25:cluster-aware-norm-objectives}\\[1.5ex]$\bm{O(k)}$\\\textbf{\Cref{thm:generalNorm}}}&\multirow{4}{10em}{\centering${O}(\sqrt{nk}\log n\log k )$~\cite{herold-etal25:cluster-aware-norm-objectives}\\[1.5ex]$\bm{O(k)}$\\\textbf{\Cref{thm:generalNorm}}} \\
    &&&&& \\
    &&&&& \\
    &&&&& \\\hline
            \end{tabular}
    }
\end{center}
\end{table}

\section{Preliminaries}
We first define our central problem, as well as special cases of it.

\begin{definition}[\NCCS{f}{g}]\label{def:nncc}
    The input $\I = (P,F,\delta,k,f,g)$ consists of the point set~$P$, the set~$F$ of facilities, a metric~$\delta: (P\cup F) \times (P\cup F)\rightarrow \nnr$, a number~$k\in \mathbb{N}$, a symmetric, monotone norm~$f:\nnrvec\rightarrow \nnr$ where $n=|P|$, and a symmetric, monotone norm~$g:\mathbb{R}^k_{\ge 0}\rightarrow \nnr$.
    A solution~$\X=(X,\sigma)$ consists of
    a subset~$X\subseteq F$ of facilities such that $|X|\le k$ and
    an assignment function~$\sigma\colon P \rightarrow X$. The goal is to find a solution~$\X$ that minimizes 
    \begin{align*}
        \cost{\sigma}{X} = g\left(\left( f(\distv{\sigma}{x} \right)_{x\in X}\right)
    \end{align*}
    where ${\distv{\sigma}{x} = (\dist{p}{x}\cdot\eins[x=\sigma(p)])_{p\in P} }$ is the cluster cost vector of $x$.
\end{definition}

\begin{definition}[\NCC{I}{O}]\label{def:ncc}
    For two classes $I$ and $O$ of norms this problem is the \NCCS{f}{g} with the restriction that $f\in I$ and $g\in O$.
\end{definition}

For an instance $\I$ of \NCCS{f}{g}, let $\X^*=(X^*,\sigma^*)$ be the optimal solution with value $\OPT_\I=\cost{\sigma^*}{X^*}$. 
For other problems that are defined later we adopt this notation.

\subsection{Different Types of Norms} \label{ssec:typesOfNorms}
Given a vector $\bm{x}$, we let $x^\downarrow_i$ be the $i$-th largest entry of $\bm{x}$.
We also define $\bm{x}^\downarrow$ to be the vector obtained by sorting the entries of $x$ in non-decreasing order.

We now introduce the different norms that are analyzed in this work.
The $\LP{1}$ norm is the sum of the absolute values of the entries of a vector and the $\LP{\infty}$ norm is the maximum absolute value of the entries of a vector.

We also define the top-$\ell$ norm and the ordered norm.
\begin{definition}[top-$\ell$ norm]
    For a number $\ell \in\mathbb{N}$ and a vector $\bm{x}\in \nnrvec$ with $n\ge \ell$ the $\topl{\ell}{\cdot}$-norm is defined as
    $\topl{\ell}{\bm{x}} = \sum_{i=1}^\ell x^\downarrow_i\,.$
    Let ${\textsf{Top}} =\{\topl{\ell}{\cdot}\mid \ell \in \mathbb{N}  \}$ be the class of all $\topl{\ell}{\cdot}$-norms.
\end{definition}

\begin{definition}[ordered norm]
    For two vectors $\bm{x},\bm{w} = (w_1,\dots,w_n)\in \nnrvec$ where 
    $w_i\ge w_{i+1}$ for all $i\in [n-1]$, the \ord{\bm{w}}{\cdot}-norm is defined as
    $
        \ord{\bm{w}}{\bm{x}} = \bm{w}\cdot\bm{x}^\downarrow = \sum_{i=1}^n w_i\cdot x^\downarrow_i\,.
    $
    We call $\bm{w}$ the weight vector.
    Let ${\textsf{Ord}} =\{\ord{\bm{w}}{\cdot}\mid \bm{w} \in \mathbb{R}^*_{\ge 0}  \}$ be the class of all $\ord{\bm{w}}{\cdot}$-norms.
\end{definition}

\begin{definition}
    Let ${\textsf{Sym}}$ be the class of all symmetric, monotone norms.
\end{definition}

\subsection{Proxy Costs}\label{sec:proxy}
The \topl{\ell}{\cdot} norm and the \ord{\bm{w}}{\cdot} are non-linear, which makes them difficult to work with.
To bypass this difficulty, \cite{chakrabarty-swamy19:norm-k-clustering} used proxy costs.
From a high level view, these proxy costs have an additional input (called the threshold); if we use the ``correct'' threshold, then the proxy cost (corresponding to some norm $f$) of a vector is equal to the $f$ norm of the vector.
Furthermore, no matter the choice of the threshold, the proxy cost of a vector upper bounds the $f$ norm of the vector.
The idea of proxy costs and the observations used in this section come from \cite{chakrabarty-swamy19:norm-k-clustering}.

We start with the proxy cost for the $\topl{\ell}{\cdot}$ norm. 
Let $\bm{x}= (x_1,\dots,x_n) \in \nnrvec$ be a vector and $y \in \nnr$.
We call $y$ the threshold, and define
    $\proxy{y}{\bm{x}}{\ell} = \ell \cdot y  + \sum_{i=1}^n (x_i - y)^+$
\begin{observation}
    For all $\bm{x}\in \nnrvec$, $y\in \nnr$ and $\ell \in [n]$, it holds that $\proxy{y}{\bm{x}}{\ell} \ge \topl{\ell}{\bm{x}}\,.$
\end{observation}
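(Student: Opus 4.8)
The plan is to reduce to the sorted case and then bound the proxy cost term by term. First I would observe that both $\proxy{y}{\bm{x}}{\ell}$ and $\topl{\ell}{\bm{x}}$ are invariant under permuting the coordinates of $\bm{x}$: the proxy cost because $\sum_{i=1}^n (x_i - y)^+$ is a symmetric function of the coordinates, and the top-$\ell$ norm by its definition via the sorted vector $\bm{x}^\downarrow$. Hence it suffices to prove the inequality under the assumption $x_1 \ge x_2 \ge \dots \ge x_n$, in which case $\topl{\ell}{\bm{x}} = \sum_{i=1}^\ell x_i$.

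Next I would perform two relaxations of $\proxy{y}{\bm{x}}{\ell} = \ell y + \sum_{i=1}^n (x_i - y)^+$, each in the favorable direction. Since all summands are nonnegative, dropping the last $n-\ell$ of them gives $\proxy{y}{\bm{x}}{\ell} \ge \ell y + \sum_{i=1}^\ell (x_i - y)^+$. Then, using the elementary bound $(t)^+ \ge t$ for every real $t$, applied with $t = x_i - y$, I obtain $\proxy{y}{\bm{x}}{\ell} \ge \ell y + \sum_{i=1}^\ell (x_i - y) = \sum_{i=1}^\ell x_i = \topl{\ell}{\bm{x}}$, which is exactly the claimed inequality.

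There is no genuine obstacle here; the one point worth noting is that both relaxations are valid regardless of the sign of each $x_i - y$, which is precisely why the bound holds for an arbitrary threshold $y$ rather than only for the ``correct'' one. The same argument also makes transparent the tightness statement alluded to in the surrounding text: equality holds exactly when $y$ can be chosen with $x_{\ell+1}^\downarrow \le y \le x_\ell^\downarrow$, since then the discarded tail terms vanish and $(x_i - y)^+ = x_i - y$ for $i \le \ell$.
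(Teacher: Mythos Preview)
Your argument is correct. The paper states this result as an observation without proof, attributing it to \cite{chakrabarty-swamy19:norm-k-clustering}, so there is no in-paper proof to compare against; your term-by-term bound after sorting is the standard elementary verification.
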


\begin{observation}
    For all $\bm{x}\in \nnrvec$ and $\ell \in [n]$, it holds that $\proxy{{x}^{\downarrow}_{\ell}}{\bm{x}}{\ell} = \topl{\ell}{\bm{x}}\,.$
\end{observation}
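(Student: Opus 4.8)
The plan is to reduce to the sorted vector and split the sum in the definition of the proxy cost at index~$\ell$. Concretely, set $y = x^\downarrow_\ell$. Since the map $t\mapsto (t-y)^+$ is applied coordinate-wise and the positive-part sum $\sum_{i=1}^n (x_i-y)^+$ does not depend on the order of the coordinates, we may replace $\bm{x}$ by its sorted version and write $\sum_{i=1}^n (x_i - x^\downarrow_\ell)^+ = \sum_{i=1}^n (x^\downarrow_i - x^\downarrow_\ell)^+$.

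Next I would observe that the terms of this sum vanish for $i>\ell$: by definition of the sorted order, $x^\downarrow_i \le x^\downarrow_\ell$ whenever $i\ge \ell$, so $(x^\downarrow_i - x^\downarrow_\ell)^+ = 0$ for all $i>\ell$ (and also for $i=\ell$). For $i\le \ell$ we have $x^\downarrow_i \ge x^\downarrow_\ell$, hence $(x^\downarrow_i - x^\downarrow_\ell)^+ = x^\downarrow_i - x^\downarrow_\ell$. Therefore
\begin{align*}
\sum_{i=1}^n (x_i - x^\downarrow_\ell)^+ = \sum_{i=1}^\ell \bigl(x^\downarrow_i - x^\downarrow_\ell\bigr) = \Bigl(\sum_{i=1}^\ell x^\downarrow_i\Bigr) - \ell\, x^\downarrow_\ell = \topl{\ell}{\bm{x}} - \ell\, x^\downarrow_\ell .
\end{align*}

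Plugging this into the definition $\proxy{y}{\bm{x}}{\ell} = \ell\cdot y + \sum_{i=1}^n (x_i - y)^+$ with $y = x^\downarrow_\ell$ gives $\proxy{x^\downarrow_\ell}{\bm{x}}{\ell} = \ell\, x^\downarrow_\ell + \topl{\ell}{\bm{x}} - \ell\, x^\downarrow_\ell = \topl{\ell}{\bm{x}}$, as claimed.

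There is essentially no obstacle here; the only mildly delicate point is the treatment of coordinates tied with the value $x^\downarrow_\ell$ (i.e.\ indices $i<\ell$ with $x^\downarrow_i = x^\downarrow_\ell$, or indices $i>\ell$ with $x^\downarrow_i = x^\downarrow_\ell$), but these contribute exactly $0$ to $(x^\downarrow_i-x^\downarrow_\ell)^+$ in either direction, so the split at index $\ell$ remains valid regardless of how ties are broken in the sorting. One could alternatively note that this is just the tightness case of the previous observation, where the subgradient-type inequality $\proxy{y}{\bm{x}}{\ell}\ge\topl{\ell}{\bm{x}}$ becomes an equality exactly at the threshold $y=x^\downarrow_\ell$.
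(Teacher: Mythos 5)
Your proof is correct and complete: sorting the vector, splitting the positive-part sum at index $\ell$, and cancelling $\ell\,x^\downarrow_\ell$ is exactly the standard argument. The paper states this Observation without proof (deferring to Chakrabarty and Swamy), so there is nothing to compare against, but your derivation matches the intended one.
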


Let $\bm{w}=(w_1,\dots,w_n)\in \nnrvec$, $\bm{x}= (x_1,\dots,x_n) \in \nnrvec$ and $\bm{t}=(t_1,\dots, t_n) \in \nnrvec$ be three vectors, where $\bm{w}$ and $\bm{t}$ are non-increasing. We define $    \proxyz{\bm{x}}{\bm{w}}{\bm{t}} = \sum_{i=1}^n (w_i-w_{i+1})\proxy{t_i}{\bm{x}}{i}.$

\begin{observation}
    For all $\bm{w}\in \nnrvec$, $\bm{x}\in \nnrvec$, $\bm{t}\in \nnrvec$, it holds that $\proxyz{\bm{x}}{\bm{w}}{\bm{t}}~\ge~\ord{\bm{w}}{\bm{x}}.$
\end{observation}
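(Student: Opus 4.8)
The plan is to peel the inequality apart layer by layer, reduce each layer to the first observation of this subsection (on $\topl{\ell}{\cdot}$ proxy costs), and then telescope. Recall that by definition $\proxyz{\bm{x}}{\bm{w}}{\bm{t}}=\sum_{i=1}^n (w_i-w_{i+1})\proxy{t_i}{\bm{x}}{i}$, where we adopt the convention $w_{n+1}=0$. Since the ordered norm requires $\bm{w}$ to be non-increasing, every coefficient $w_i-w_{i+1}$ is non-negative; this is the only place where monotonicity of $\bm{w}$ enters.

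First I would invoke $\proxy{t_i}{\bm{x}}{i}\ge\topl{i}{\bm{x}}$, which holds for an arbitrary threshold $t_i$ and every $i\in[n]$. Multiplying by the non-negative weights $w_i-w_{i+1}$ and summing over $i$ yields $\proxyz{\bm{x}}{\bm{w}}{\bm{t}}\ge\sum_{i=1}^n (w_i-w_{i+1})\,\topl{i}{\bm{x}}$. It then remains to verify that the right-hand side equals $\ord{\bm{w}}{\bm{x}}$: writing $\topl{i}{\bm{x}}=\sum_{j=1}^i x^\downarrow_j$ and exchanging the order of summation, the coefficient of $x^\downarrow_j$ becomes $\sum_{i=j}^n (w_i-w_{i+1})=w_j-w_{n+1}=w_j$, so the double sum collapses to $\sum_{j=1}^n w_j x^\downarrow_j=\ord{\bm{w}}{\bm{x}}$, as desired.

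There is essentially no obstacle here; the only points requiring a moment's care are the bookkeeping convention $w_{n+1}=0$ and the sign of the differences $w_i-w_{i+1}$. Note that the threshold vector $\bm{t}$ is never used beyond being passed into the per-layer proxy costs — this matches the fact (useful later) that equality is attained for the choice $t_i=x^\downarrow_i$, which by the second observation turns each per-layer bound into an equality, so that $\proxyz{\bm{x}}{\bm{w}}{(x^\downarrow_i)_{i}}=\ord{\bm{w}}{\bm{x}}$.
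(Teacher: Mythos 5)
Your proof is correct. The paper itself does not supply a proof of this observation (it attributes the proxy-cost machinery to Chakrabarty and Swamy and states the four observations without argument), but your argument---apply the first observation $\proxy{t_i}{\bm{x}}{i}\ge\topl{i}{\bm{x}}$ termwise, use non-negativity of the Abel differences $w_i-w_{i+1}$, and telescope $\sum_{i=1}^n (w_i-w_{i+1})\topl{i}{\bm{x}}=\ord{\bm{w}}{\bm{x}}$---is exactly the intended derivation and the bookkeeping ($w_{n+1}=0$, sign of the differences, $\bm{t}$ only entering through the per-layer bound) is handled correctly.
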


\begin{observation}
    For all $\bm{w}\in \nnrvec,\bm{x}\in \nnrvec$, it holds that $\proxyz{\bm{x}}{\bm{w}}{\bm{x}^{\downarrow}} = \ord{\bm{w}}{\bm{x}}.$
\end{observation}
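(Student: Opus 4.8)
The plan is to reduce the claim to the preceding observation together with a summation-by-parts identity. Recall that the observation right above gives, for each $i\in[n]$, the equality $\proxy{x^\downarrow_i}{\bm{x}}{i}=\topl{i}{\bm{x}}$ (the ``correct'' threshold case). Substituting the threshold vector $\bm{t}=\bm{x}^{\downarrow}$ into the definition $\proxyz{\bm{x}}{\bm{w}}{\bm{t}}=\sum_{i=1}^n (w_i-w_{i+1})\proxy{t_i}{\bm{x}}{i}$ therefore yields
\[
\proxyz{\bm{x}}{\bm{w}}{\bm{x}^{\downarrow}}=\sum_{i=1}^n (w_i-w_{i+1})\,\topl{i}{\bm{x}},
\]
so it remains to verify the elementary identity $\sum_{i=1}^n (w_i-w_{i+1})\,\topl{i}{\bm{x}}=\ord{\bm{w}}{\bm{x}}$, where we adopt the convention $w_{n+1}=0$ already implicit in the definition of $\proxyz{\cdot}{\cdot}{\cdot}$.

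For the second step I would expand $\topl{i}{\bm{x}}=\sum_{j=1}^i x^\downarrow_j$ and exchange the order of summation over the index set $\{(i,j): 1\le j\le i\le n\}$:
\[
\sum_{i=1}^n (w_i-w_{i+1})\sum_{j=1}^i x^\downarrow_j=\sum_{j=1}^n x^\downarrow_j\sum_{i=j}^n (w_i-w_{i+1}).
\]
The inner sum telescopes to $w_j-w_{n+1}=w_j$, so the right-hand side equals $\sum_{j=1}^n w_j\,x^\downarrow_j=\ord{\bm{w}}{\bm{x}}$, which completes the argument.

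I do not expect a genuine obstacle here; the only points needing a little care are the boundary convention $w_{n+1}=0$ and correctly identifying the range of the swapped summation. As an alternative route, one could observe that the general inequality $\proxyz{\bm{x}}{\bm{w}}{\bm{t}}\ge\ord{\bm{w}}{\bm{x}}$ already gives one direction, and that equality holds termwise in $\proxy{t_i}{\bm{x}}{i}\ge\topl{i}{\bm{x}}$ when $t_i=x^\downarrow_i$; but this still requires the same identity $\sum_i (w_i-w_{i+1})\topl{i}{\bm{x}}=\ord{\bm{w}}{\bm{x}}$, so the direct telescoping computation above is the most economical presentation.
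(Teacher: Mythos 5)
Your proof is correct and is the standard (and surely intended) argument: the paper states this observation without proof, and the natural justification is exactly what you give—substitute the preceding observation termwise and then telescope via the Abel-summation identity $\sum_{i=1}^n (w_i-w_{i+1})\topl{i}{\bm{x}} = \sum_{j=1}^n w_j x^\downarrow_j$ using $w_{n+1}=0$. Nothing is missing.
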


\section{\texorpdfstring{Approximation for \NCCS{\textsf{Sym}}{\LP{1}}}{Approximation for \NCCH{\textsf{Sym}}{\textnormal{l one}}}}\label{sec:symlone}
In this section, we design an $O(\log{n})$-approximation for \NCC{\textsf{Ord}}{ \LP{1} }.

\begin{restatable}{theorem}{apxtoplone}
\label{thm:apxtoplone}
    There is a factor-$O(\log{n})$ approximation for \NCC{{\textsf{Ord}}}{ \LP{1} }. 
\end{restatable}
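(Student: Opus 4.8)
The plan is to reduce \NCCS{\textsf{Ord}}{\LP{1}} to the \Ballk{} problem and then give an $O(\log n)$-approximation for the latter. For the reduction, I start from the observation that for a cluster with center $x$ and assigned points, the inner ordered norm $\ordl{\bm{w}}{\distv{\sigma}{x}}$ is captured exactly by the proxy expression $\proxyz{\distv{\sigma}{x}}{\bm{w}}{\bm{t}}$ at the correct threshold vector $\bm{t}=\distv{\sigma}{x}^{\downarrow}$, and is upper bounded by it for every $\bm{t}$ (the observations in \Cref{sec:proxy}). Expanding the proxy definition, $\proxyz{\bm{x}}{\bm{w}}{\bm{t}}=\sum_i(w_i-w_{i+1})\bigl(i\cdot t_i+\sum_p(x_p-t_i)^+\bigr)$; writing $r_i:=t_i$ this is exactly the cost of a layered ball centered at $x$ with radius vector $\bm{r}$, radius-scaling vector $\bm{\mu}$ with $\mu_i=i(w_i-w_{i+1})$, and connection-scaling vector $\bm{\rho}$ with $\rho_i=w_i-w_{i+1}$. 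So an optimal \NCCS{\textsf{Ord}}{\LP{1}} solution maps to a \Ballk{} solution of the same cost, and conversely any \Ballk{} solution yields an \NCCS{\textsf{Ord}}{\LP{1}} solution of cost at most that of the \Ballk{} solution (by the upper-bound observation). The number of layers produced is $m=n$, and the candidate radii are the pairwise distances. This establishes an approximation-preserving reduction in both directions.

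Next I would give the algorithm for \Ballk{}, following the Jain--Vazirani Lagrangean-relaxation plus bi-point-rounding scheme, exactly as in the outline. The first step is \textbf{layer sparsification}: using the Chakrabarty--Swamy bucketing of the weight coordinates by geometric scales, I would collapse the $n$ layers into $O(\log n)$ representative layers while changing the objective by only a constant factor, so that henceforth $m=O(\log n)$. The second step is \textbf{LP compression}: I argue the existence of a constant-approximate \emph{canonic} solution in which, within each layered ball, the radii are sorted non-increasingly by $\mu_i/\rho_i$, so a radius vector is determined by its unordered set of $m$ radii drawn from the $O(n^2)$ candidate distances --- giving $n^{O(\log n)}$? no: giving polynomially many configurations per center once we also fix that the set has size $\le m=O(\log n)$ and is chosen from a further-sparsified candidate set of $O(\log n)$ radii per center. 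This makes the configuration LP (one indicator per potential layered ball) have polynomial size. Then I set up the Lagrangean relaxation of the cardinality constraint $|X|\le k$, run the primal-dual growth process on the dual variables of the points, and extract a bi-point solution; the rounding combines two fractional solutions as in \cite{JainVaz,MinSumRadii}.

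The \textbf{main obstacle}, and the step I would spend the most care on, is the conflict-resolution when a point's dual variable pays for opening two or more layered balls. Unlike in Ball $k$-Median, two conflicting layered balls need not be comparable layer by layer, so I cannot simply open the one of larger radius. Instead I would open layered balls in non-increasing order of their total cost $\bm{\mu}^{\intercal}\bm{r}$, close conflicting ones, and then replace the radius vector $\bm{r}$ of each opened ball by $3\bigl(\bm{\mu}^{\intercal}\bm{r}/\mu_i\bigr)_{i\in[m]}$. The three things to verify are: (i) every layer of an opened ball now dominates the corresponding layer of any ball it conflicts with (because the opened ball had the larger total cost and $\mu_i r_i\le\bm{\mu}^{\intercal}\bm{r}$); (ii) the opening cost grows by a factor at most $m=O(\log n)$, since each layer's new cost $\mu_i\cdot 3(\bm{\mu}^{\intercal}\bm{r}/\mu_i)=3\bm{\mu}^{\intercal}\bm{r}$ summed over $m$ layers is $3m\bm{\mu}^{\intercal}\bm{r}$; and (iii) the rerouting (connection) cost of the displaced points is bounded, because each radius was enlarged by a factor at least three, so the slack $(\dist{p}{x}-r_i)^+$ against the new ball is controlled by the triangle inequality and the original primal/dual values. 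Putting (i)--(iii) together with the bi-point rounding yields a solution of cost $O(m)\cdot\OPT=O(\log n)\cdot\OPT$ for \Ballk{}, and hence, via the reduction, an $O(\log n)$-approximation for \NCCS{\textsf{Ord}}{\LP{1}}, which is the claim.
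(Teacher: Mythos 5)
Your plan coincides essentially step by step with the paper's own proof: reduce to \Ballk{} via proxy costs, sparsify to $m=O(\log n)$ layers \`a la Chakrabarty--Swamy, restrict to canonic solutions over a polynomially-sized candidate set of (sorted) radius vectors drawn from a geometric sequence, and run the Jain--Vazirani primal-dual where conflicts are resolved by opening layered balls in non-increasing order of cost $\sop^{\intercal}\bm{r}$ and inflating each radius by roughly $\sop^{\intercal}\bm{r}/\mu_i$ so that the opened ball dominates every conflicting ball at every layer, at an $O(m)=O(\log n)$ loss in opening cost. The only part you sketch less fully is the bi-point rounding of the ``special'' fractional group, where the paper deliberately opens two fewer balls than the LP dictates and shows this only triples the closed fraction (\Cref{claim:specialfacility}); this is a nontrivial refinement inside the framework you describe, but it is not a different route.
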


It directly extends to an $O(\log^2{n})$-approximation for \NCCS{\textsf{Sym}}{\LP{1}}.

\apxsymlone*
\begin{proof}
    This follows directly by \Cref{thm:apxtoplone} and Lemma A.2 (ix) in \cite{herold-etal25:cluster-aware-norm-objectives}.
\end{proof}

The first step of our solution is to reduce \NCC{\textsf{Ord}}{ \LP{1} } to a new problem called \Ballk{}, in \Cref{subsec:reduceballk}.
The input of \Ballk{} is a set of facilities and a set of clients in a metric space.
We can open $k$ layered balls around facilities and pay for their radii (times some parameter $\sop$).
In contrast to \msr{}, we do not need to cover all clients with these balls;
a client can also connect to a ball it is not covered by.
This incurs an additional cost, namely the distance of the client to the border of the layered ball.

\subsection{\texorpdfstring{Reduction to \Ballk{}}{Reduction to Layered Ball k-Median}}\label{subsec:reduceballk}

In this section, we define the \Ballk{} problem formally and reduce \NCCS{\textsf{Ord}}{\LP{1}} to it. Intuitively speaking, \Ballk{} arises from \NCCS{\textsf{Ord}}{\LP{1}} by replacing the ordered norm with the proxy cost from \Cref{sec:proxy} where the radii of the balls play the role of the thresholds. Indeed, we show in the following that the two problems are equivalent in terms of approximation algorithms.  A convenient property of \Ballk{} is that points are w.l.o.g. assigned to the ``closest'' layered ball. In contrast, we are not aware of a polynomial time algorithm to optimally assign points to a fixed set of centers for \NCCS{\textsf{Ord}}{\LP{1}}. Additionally, as explained in \cref{sec:proxy}, proxy costs linear functions in the modified cost vector whereas ordered norms are nonlinear functions. These properties are beneficial for applying linear programming techniques to the problem.
We start by introducing the new problem. 

 \begin{definition}[\Ballk{}]\label{def:ball-k-median}
    The input $\I = (P,F,\delta,k,m,\scale,\sop)$ consists of a point set $P$, a set~$F$ of facilities, a metric~$\delta\colon (P\cup F) \times (P\cup F)\rightarrow \nnr$, a number~$k\in \mathbb{N}$, a scaling vector~$\bm{\rho}= (\rho_1,\dots, \rho_m) \in\nnr^m$, and a radius scaling vector $\sop = (\mu_1,\dots, \mu_m)\in \nnr^m$.
    A solution~$\X = (X,\bm{r})$ contains a subset~$X\subseteq F$ of facilities such that $|X|\le k$ and a radius function~$\bm{r}\colon X  \rightarrow \nnr^m$. The goal is to find a solution $\X$ that minimizes
    \begin{align*}
        \costz{\X} = \sum_{p\in P} \distr{p}{X}{\bm{r}} + \sum_{x\in X}\sop^{\intercal} \bm{r}(x),
    \end{align*}
    where $\distr{p}{X}{\bm{r}} = \min_{x\in X} \bm{\rho}^{\intercal} \distrv{p}{x}{\bm{r}}$ and $\distrv{p}{x}{\bm{r}} = ((\dist{p}{x}- r(x)_i)^+)_{i\in [m]}$.
\end{definition}

We show that there is an approximation preserving reduction from \NCCS{\textsf{Ord}}{\LP{1}} to \Ballk{}. 

\begin{restatable}{lemx}{redball}\label{lem:redball}
    Let $\I =(P,F,\delta,k,\ordl{\bm{w}}{\cdot},\LP{1})$ be an instance of \NCCS{\textsf{Ord}}{\LP{1}}. Then the instance $\I'=(P,F,\delta,k,m=n,\scale,\sop)$ of \Ballk{}, where $\scale = (w_i-w_{i+1})_{i\in [n]}$\footnote{For the sake of convenience let $w_{m+1}=0$} and $\sop = (\rho_i \cdot i)_{i\in [n]}$, satisfies the following two properties.
    \begin{enumerate}
        \item For every solution~$\X=(X,\sigma)$ for $\I$, we can compute a solution~$\X'=(X,\bm{r})$ for $\I'$ in polynomial time such that $\costz{\X'}\le \cost{\sigma}{X}$.
        \item For every solution~$\X'=(X,\bm{r})$ for $\I'$, we can compute a solution~$\X=(X,\sigma)$ for $\I$ in polynomial time such that $\cost{\sigma}{X}\le \costz{\X'}$.
    \end{enumerate}
\end{restatable}
\begin{proof}[Proof sketch]
The full proof can be found in \Cref{lem:redball:fullproof}.
To transform a solution $\X=(X,\sigma)$ for \NCCS{\textsf{Ord}}{\LP{1}} to a solution $\X'=(X,\bm{r})$ of \Ballk{} without increasing the cost, we set the radius vector $\bm{r}(x)$ for every $x \in X$ to the ordered cluster distance cost vector $\distvs{\sigma}{x}$.
To transform a solution $\X'=(X,\bm{r})$ for \Ballk{} to a solution $\X=(X,\sigma)$ for \NCCS{\textsf{Ord}}{\LP{1}} without increasing the cost, we set $\sigma(p) = \arg \min_{x\in X} \scale^{\intercal}\distrv{p}{x}{\bm{r}}$ for all points $p\in P$.
\end{proof}

\subsection{Sparsification}\label{sec:sparsify}
In this section, we show how to sparsify an instance by losing only a constant factor in the approximation. 
These sparsity properties help handle \Ballk{} algorithmically. 
Specifically, the sparse instances have weight vectors of logarithmic dimension.
Additionally, we show that there is a candidate set of polynomially many vectors such that there is a ``near optimal'' solution that uses only radii vectors from this candidate set.
The last part ensures that we can consider an LP-relaxation of polynomial size.

\begin{definition}[Sparse Instance]
We call an instance $\I =(P,F,\delta,k,m,\scale,\sop)$ a sparse instance if it satisfies 
         $1\le \nicefrac{\mu_i}{\rho_i}\le n$ for all $i\in [m]$,
         $\nicefrac{\mu_i}{\rho_i}\le \nicefrac{\mu_{i+1}}{\rho_{i+1}}$ for all $i\in [m-1]$, and
         $m = \ceil{\log n}$. 
\end{definition}

The following lemma shows that we can assume that our input is a sparse instance. We defer its proof to \Cref{secapp:sparseapp}.

\begin{restatable}{lemx}{sparseinstance}\label{lem:sparse}
    Let $\I =(P,F,\delta,k,m,\scale,\sop)$ be an instance of \Ballk{}. 
    Then, we can efficiently compute a sparse instance $\I' =(P,F,\delta,k,m',\scale',\sop')$ of \Ballk{} such that
    for every solution $\X' = (X',\bm{r}')$ of $\I'$ we can efficiently compute a solution $\X=(X,\bm{r})$ of $\I$ such that $\cost{\I}{\X} \le 2\cost{\I'}{\X'}$, 
    where $\cost{\I}{\X}$ is the objective function value of $\X$ with respect to instance $\I$ and $\cost{\I'}{\X'}$ is the objective function value of $\X'$ with respect to instance $\I'$.\footnote{We use this notation only in this section because it is the only context where we discuss two different instances of \Ballk{} simultaneously.}
    Additionally, it holds that $\OPT_{\I'} \le 2\OPT_{\I}.$
\end{restatable}

Additionally, we show that there is a near-optimal solution whose radii are among logarithmically many candidate radii.
\begin{definition}[$(\Delta,\Gamma)$-Canonic Solution]
Let $\Delta, \Gamma\ge 0$ and $n,m\in \mathbb{N}$. Then, we define
\begin{itemize}
    \item $\R{\Delta}{n} = \left\{\frac{\Delta}{2^i} \mid i \in \left[\ceil{ 3\log n}\right]\right\}$,
    \item $\R{\Delta}{n,m} = \{\bm{r}\in (\R{\Delta}{n})^m \mid r_i\ge r_{i+1}~\text{for all}~i\in [m-1]\}$,
    \item $\R{\Delta,\Gamma}{n,m} = \{\bm{r}\in \R{\Delta}{n,m} \mid \sop^{\intercal} \bm{r} \le  \Gamma\}$
\end{itemize}

We call a solution $\X = (X,\bm{r})$ such that $\bm{r}(x)\in \R{\Delta,\Gamma}{n,m}$ a $(\Delta,\Gamma)$-canonic solution. We call $\OPT_{\I}^{\Delta,\Gamma}$ the value of the best $(\Delta,\Gamma)$-canonic solution to $\I$.
\end{definition}

In the following lemmas we show that we can specify a small set of possible values for $\Delta$ and $\Gamma$ such that there is one choice of them whose optimal $(\Delta,\Gamma)$-canonic solution is close to the general optimal solution, and bound the number of allowed radii vectors for fixed $\Delta$ and $\Gamma$. Their proofs are deferred to \Cref{secapp:goodguessesapp}.

\begin{restatable}{lemx}{goodguesses}\label{lem:thin}
    Let $\I =(P,F,\delta,k,m,\scale,\sop)$ be a sparse instance of \Ballk{}.
    Then, in polynomial time we can compute a set $B$ of polynomial size such that there is a $(\Delta^*,\Gamma^*) \in B$ with $\OPT_{\I}^{\Delta^*,\Gamma^*}\le 3 \OPT_{\I}$ and $\Gamma^* \le 2\OPT_\I$.
    
\end{restatable}

\begin{restatable}{lemx}{fewradiivectors}\label{lem:fewradiivectors}
     For any $\Delta,\Gamma \in \nnr$, $n,m\in \mathbb{N}$, we observe that $|\R{\Delta,\Gamma}{n,m}| \le 2^{m}n^4$.
\end{restatable}

We conclude that restricting the instances to be sparse and the solutions to be canonic can only increase the approximation factor by a constant.

\subsection{Lagrange Multiplier Preserving (LMP) Approximation Algorithm}\label{subsubsec:lmp}
In the following we use the primal-dual Lagrangean relaxation and bi-point rounding framework by Jain and Vazirani \cite{JainVaz}, that has found success in several clustering problems~\cite{MinSumRadii,otherOrderedKMedian} including \NCCS{\textsf{Top}}{\LP{1}}~\cite{herold-etal25:cluster-aware-norm-objectives}.
In short, this framework consists in the design of a Lagrange Multiplier Preserving (LMP) approximation algorithm for a facility location version of our problem; intuitively, this is an approximation that is stricter towards the opening costs.
Then (\Cref{subsubsec:binary}) one performs a binary search using the LMP approximation to obtain a bi-point solution (cf.\ \Cref{def:bipointSolution} for a formal definition), which is then (\Cref{subsubsec:bipointrounding}) rounded to an actual solution.
Our novelty is in the design of the LMP approximation.

We first introduce the facility location version of~\Ballk{}, that is its Lagrangean relaxation w.r.t.\ the solution size. Specifically, we remove the cardinality constraint $|X|\le k$ but penalize the solution size $|X|$ in the objective function by charging a fixed cost $\lambda$ for every open facility (on top of their radius-dependent cost).
The formal definition is as follows.

\begin{definition}[\FLBall{}]
    The input $\I = (P,F,\delta,m,\scale,\sop,\lambda)$ consists of a point set $P$, a set~$F$ of facilities, a metric~$\delta\colon (P\cup F) \times (P\cup F)\rightarrow \nnr$, a scaling vector~$\scale\in\nnr^m$, a radius scaling vector $\sop = (\mu_1,\dots, \mu_m)\in \nnr^m$, and an opening cost $\lambda \ge 0$.
    A solution~$\X = (X,r)$ contains a subset~$X\subseteq F$ of facilities and a radius function~$\bm{r}\colon X\rightarrow \nnr^m$. The goal is to find a solution $\X$ that minimizes\
    \begin{align*}
        \costz{\X} +|X|\lambda= \sum_{p\in P} \distr{p}{X}{\bm{r}} + \sum_{x\in X}(\sop^{\intercal} \bm{r}(x)+ \lambda).
    \end{align*}
\end{definition}
    
    In Figure~\ref{fig:FLLP}, we introduce our LP relaxation for \FLBall{} where we restrict to $(\Delta,\Gamma)$-canonic solutions. 
    We recall that  $\R{\Delta,\Gamma}{n,m}$ is the set of allowed radii vector for $(\Delta,\Gamma)$-canonic solutions.
    Intuitively, we can think of $\Delta$ as the largest radius and $\Gamma$ the cost of the most expensive layered ball in the optimal solution.

\begin{figure}[!ht]
\caption{\,LP for facility-location Ball $k$-median and fixed $\Delta$, $\Gamma$.} \label{fig:FLLP}
\vskip -1.5ex\rule{\linewidth}{.5pt}
\begin{mini}|s|<b>{}{\sum_{x\in F, p\in P,\bm{r}\in \R{\Delta,\Gamma}{n,m}} \distr{p}{x}{\bm{r}}v_{xp}^{\bm{r}} +  \sum_{x\in F,\bm{r}\in \R{\Delta,\Gamma}{n,m}}(\sop^{\intercal} \bm{r}+\lambda)u_x^{\bm{r}}}{}{}
        \addConstraint{\sum_{x\in F ,\bm{r}\in \R{\Delta,\Gamma}{n,m}}v_{xp}^{\bm{r}} }{\geq 1\qquad}{\forall p \in P}
    \addConstraint{v_{xp}^{\bm{r}}}{\le u_x^{\bm{r}}}{\forall x \in F, p\in P, \bm{r}\in \R{\Delta,\Gamma}{n,m}}
\end{mini}
\rule{\linewidth}{.5pt}
\end{figure}
    Variable~$v_{xp}^{\bm{r}}$ for $x\in F, p\in P$ and $\bm{r} \in \R{\Delta,\Gamma}{n,m}$ indicates whether client $p$ is connected to facility $x$ with a layered ball of radii $\bm{r}$. Variable~$u_x^{\bm{r}}$ indicates whether facility $x$ is opened with a layered ball of radii vector~$\bm{r}$.
    The first constraint demands that each client is connected to at least one facility.
    The second constraint demands that a client can only be connected to an opened ball.  Note that we extend the notation $\distr{\cdot}{\cdot}{\bm{r}}$ to the case where $\bm{r}$ is a vector rather than a function.
For a point $p\in P$, a center $x\in F$, a scaling vector $\scale \in \nnr^m$, and a radius vector $\bm{r}\in \R{\Delta,\Gamma}{n,m}$, let $\distr{p}{x}{\bm{r}}= \sum_{i=1}^m(\rho_i(\dist{p}{x}- r_i)^+)$.
    
    Note that this LP is similar to the standard LP for the Uniform Facility Location Problem if we understand each layered ball (pair of facility $x\in F$ and radii vector $\bm{r}\in \R{\Delta,\Gamma}{n,m}$) as an independent facility. 
    There are two main differences.
    Firstly, the cost $\distr{\cdot}{\cdot}{\bm{r}}$ of connecting a client to a layered ball is not a metric because we subtract the radius from the distance to the facility.
    Secondly, the opening cost of a ball consists not only of the fixed cost $\lambda$ but also a radius-dependent cost $\sop^{\intercal} \bm{r}$.

Extending our techniques in~\cite{herold-etal25:cluster-aware-norm-objectives} we can design an LMP factor-$O(\log n)$ approximation algorithm for \FLBall{} by exploiting properties of the distance measure $\distr{\cdot}{\cdot}{\bm{r}}$. However, here we prove the following more technical lemma that relates the cost of the facility location instance to the cost of the optimal $(\Delta,\Gamma)$-canonic solution of the underlying \Ballk{} instance. This is required because we need to incorporate the guessed values $\Delta$ and $\Gamma$ into our analysis.

Intuitively, an LMP approximation is an approximation that is stricter towards the opening costs.
Concretely, for \FLBall{} the following lemma, which we prove in the remainder of this section, formalizes the factor-$O(\log n)$ LMP approximation.
   \begin{restatable}{lemx}{lemlmpalgo}
\label{lem:lmpalgo}
    
    Let $\I = (P,F,\delta,k,m,\scale,\sop)$ be a sparse instance of \Ballk{}, $\lambda\ge 0$,  $\Delta, \Gamma\ge 0$. 
    Then Algorithm~\ref{alg:lmpflball} computes in polynomial time a pair $\X=(X,\bm{r})$ such that \begin{align*}
        \costz{\X} \le (2\log n+3)\left(\OPT_{\I}^{\Delta,\Gamma}+\lambda (k-|X|)\right), \text{ and } \max_{x\in X, i \in [m]} \mu_i r_i \le 3\Gamma.
    \end{align*} 
    
\end{restatable}

Our LMP approximation algorithm follows the primal-dual framework; see Figure~\ref{fig:dual} for the dual of our LP relaxation.
\begin{figure}[!ht]
\caption{\,Dual-LP for Figure~\ref{fig:FLLP}.} \label{fig:dual}
\vskip -1.5ex\rule{\linewidth}{.5pt}
\begin{maxi}|s|<b>{}{\sum_{p\in P} \alpha_p}{}{}
    \addConstraint{\alpha_p-\beta_{xp}^{\bm{r}}}{\le \distr{p}{x}{\bm{r}}\qquad}{\forall x \in F, p\in P,\bm{r}\in \R{\Delta,\Gamma}{n,m}}
    \addConstraint{\sum_{p\in P}\beta_{xp}^{\bm{r}}}{\leq \lambda +\sop^{\intercal} \bm{r}}{\forall x \in F, \bm{r}\in \R{\Delta,\Gamma}{n,m}}
\end{maxi}
\rule{\linewidth}{.5pt}
\end{figure}

On a high level, the algorithm consists of two phases.
The first phase is the dual-ascent phase~(cf.\ Lines~\ref{al:firstline} to~\ref{al:endgrowing} in Algorithm~\ref{alg:lmpflball}). 
Here, the algorithm increases the dual variables until constraints~(Lines~\ref{al:feasiblea} and \ref{al:feasibleb}) in the Dual-LP in Figure~\ref{fig:dual} get tight. 
From these tight constraints the algorithm produces a set of candidate balls.
In the pruning phase~(Lines~\ref{al:beginopening} to~\ref{al:endopening}) the algorithm greedily selects a maximal ``independent'' set of the candidate balls as the solution.
It returns these layered balls with an ``expanded'' radii vector.
We formally define the function $\expand{}$ used in Line~\ref{al:expand} as $\expand{\bm{r}} = (r_i+2\sop^\intercal\bm{r}/\mu_i)_{i\in [m]}$.

Note that Lines~\ref{al:dualinita} and~\ref{al:dualinitb} initialize the dual variables to zero which is a valid solution. 
Furthermore, Lines~\ref{al:feasiblea} and~\ref{al:feasibleb} maintain the validity of the dual solution.
\begin{observation}
    Upon termination of Algorithm~\ref{alg:lmpflball}, the variables $\alpha_p$ for $p\in P$ and $\beta_{xp}^{\bm{r}}$ for $p\in P,x\in F,\bm{r}\in \R{\Delta,\Gamma}{n,m}$ form a valid solution to the dual LP in \Cref{fig:dual}.
\end{observation}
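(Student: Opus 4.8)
The plan is to prove the observation as an invariant that is maintained throughout the execution of Algorithm~\ref{alg:lmpflball}. The dual variables are modified only during the dual-ascent phase (Lines~\ref{al:firstline}--\ref{al:endgrowing}); the pruning phase (Lines~\ref{al:beginopening}--\ref{al:endopening}, including the radius expansion in Line~\ref{al:expand}) only selects layered balls and never touches $\alpha$ or $\beta$. Hence it suffices to show that every elementary dual update preserves the three properties: (i) $\alpha_p\ge 0$ and $\beta_{xp}^{\bm r}\ge 0$ for all $p\in P$, $x\in F$, $\bm r\in\R{\Delta,\Gamma}{n,m}$; (ii) the covering constraints $\alpha_p-\beta_{xp}^{\bm r}\le\distr{p}{x}{\bm r}$; and (iii) the budget constraints $\sum_{p\in P}\beta_{xp}^{\bm r}\le\lambda+\sop^{\intercal}\bm r$.

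For the base case, Lines~\ref{al:dualinita} and~\ref{al:dualinitb} initialize all dual variables to $0$. Then (i) is immediate; (ii) holds because $\distr{p}{x}{\bm r}=\sum_{i\in[m]}\rho_i(\dist{p}{x}-r_i)^+\ge 0$; and (iii) holds because $\lambda\ge 0$ and $\sop^{\intercal}\bm r=\sum_{i\in[m]}\mu_i r_i\ge 0$. So the all-zero assignment is dual-feasible, as claimed in the paragraph preceding the observation.

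For the inductive step, I would view the dual-ascent phase as a continuous process that raises $\alpha_p$ at a uniform rate for every currently active point $p$ and, simultaneously, raises $\beta_{xp}^{\bm r}$ at the same rate for exactly those pairs $(x,\bm r)$ for which the covering constraint of $p$ is currently tight while $(x,\bm r)$ is not yet saturated (this is the behaviour enforced by the test in Line~\ref{al:feasiblea}); in a discrete implementation one advances the dual clock only up to the next event time. Since no variable is ever decreased, (i) is preserved. For (ii): as long as the covering constraint of $(x,p,\bm r)$ has positive slack, raising $\alpha_p$ keeps it satisfied; the instant it becomes tight, $\beta_{xp}^{\bm r}$ is raised in lockstep with $\alpha_p$, so the difference $\alpha_p-\beta_{xp}^{\bm r}$ stays equal to $\distr{p}{x}{\bm r}$ (equivalently, one maintains the stronger invariant $\beta_{xp}^{\bm r}=(\alpha_p-\distr{p}{x}{\bm r})^+$ while $(x,\bm r)$ is unsaturated, which directly yields (ii)). For (iii): the sum $\sum_{p}\beta_{xp}^{\bm r}$ can grow only through active points tight with $(x,\bm r)$; the algorithm monitors (Line~\ref{al:feasibleb}) when this sum would reach $\lambda+\sop^{\intercal}\bm r$ and, at that moment, declares $(x,\bm r)$ saturated and deactivates every point contributing $\beta$ to it. A deactivated point never has $\alpha_p$ or any of its $\beta$-variables increased again, so its contributions freeze everywhere; and if later the covering constraint of some still-active point to an already-saturated ball becomes tight, that point is deactivated immediately rather than starting to contribute to that ball's budget. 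Hence no budget constraint is ever exceeded.

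The only genuinely delicate point is the bookkeeping when several events coincide: one point may be tight with respect to several layered balls at once, and one layered ball may reach its budget through the contributions of several points simultaneously. I would handle this exactly as in the Jain--Vazirani framework~\cite{JainVaz}: advance the dual clock only to the earliest event time, then process all coincident saturations (and hence deactivations) before resuming growth; since a deactivation can only halt increases and never triggers one, the order of processing is irrelevant and no constraint is overshot. As the dual variables are untouched after Line~\ref{al:endgrowing}, invariants (i)--(iii) hold at termination, which is precisely the statement of the observation.
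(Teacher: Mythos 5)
Your proof is correct and takes essentially the same approach as the paper, which justifies the observation in two sentences (zero initialization is feasible; the checks in Lines~\ref{al:feasiblea} and~\ref{al:feasibleb} maintain feasibility). You have simply written out that invariant argument in full detail, including the handling of coincident events.
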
 

Our algorithm is inspired from the LMP approximation for Ball $k$-Median \cite{herold-etal25:cluster-aware-norm-objectives}.
The main difference is in Line~\ref{al:conflict}, where it does not suffice to simply triple the radii of the balls.
We use the function \expand{} (cf.\ Line~\ref{al:expand}), which can increase the cost of a radii vector by a factor up to roughly $m$ (see \Cref{claim:alphaballs}).

\begin{algorithm2e}[!ht]
  \SetKwFunction{Bal}{\Ballk{}}

  \setcounter{AlgoLine}{0}
  \SetKwProg{procedure}{Procedure}{}{}
  
   \procedure{\Bal{$\I = (P,F,\delta,m,\scale,\sop,\lambda), \Delta,\Gamma$}}{
  $Y\gets \emptyset$\label{al:firstline}\;
  $\alpha_p \gets 0$ for all $p\in P$\label{al:dualinita}\;
  $\beta_{xp}^{\bm{r}} \gets 0$ for all $p\in P,x\in F,\bm{r}\in \R{\Delta,\Gamma}{n,m}$\label{al:dualinitb}\;
  Start increasing all $\alpha_p$ simultaneously at the same rate\;
  \While{$\alpha_p$ is increasing for some $p\in P$}{\label{al:beginfirstwhile}
    \If{$\alpha_p-\beta_{xp}^{\bm{r}} =\distr{p}{x}{\bm{r}}$ for some $p\in P, x\in F,\bm{r}\in \R{\Delta,\Gamma}{n,m}$\label{al:feasiblea}}{
        Start increasing $\beta_{xp}^{\bm{r}}$ at the same rate as the $\alpha_p$\;
    }
    \If{$\sum_{p\in P}\beta_{xp}^{\bm{r}} = \sop^{\intercal}\bm{r}+\lambda$ for some $x\in F,\bm{r}\in \R{\Delta,\Gamma}{n,m}$\label{al:feasibleb}}{
        $Y\gets Y\cup \{(x,\bm{r})\}$\label{al:addtoy}\;
        Stop increasing $\beta_{xp}^{\bm{r}}$\;
        Stop increasing $\alpha_p$ for all $p$ such that $\alpha_p-\beta_{xp}^{\bm{r}} = \distr{p}{x}{\bm{r}}$\label{al:endgrowing}\;
    }
    }
  
  $Z\gets \emptyset$\;\label{al:beginopening}
  \While{$Y\ne \emptyset$}{\label{al:beginsecondwhile}
  Pick $(x,\hat{\bm{r}}) = \arg\max_{(x',\bm{r}')\in Y}\sop^{\intercal} \bm{r}'$\label{al:argmax}\;
  $Z\gets Z\cup \{(x,\hat{\bm{r}})\}$\;
  $Y\gets Y \setminus \left\{(x',\bm{r}')\in Y\mid \beta_{xp}^{\hat{\bm{r}}},\beta_{x'p}^{\bm{r}'}>0\text{ for some } p\in P\right\}$\label{al:conflict}\;
  }
  $X \gets \{x\mid (x,\bm{r})\in Z\}$\;
  $Z' \gets \set{(x,\expand{\bm{r}}) \mid (x,\bm{r}) \in Z}$ \label{al:expand}\;
  Let $\bm{q}\colon X \rightarrow \nnr^m$ be such that $\bm{q}(x) = (\max\set{s \mid \exists (x,\bm{r}) \in Z'\colon r_i = s})_{i\in [m]}$ \label{al:max}\;
  \Return $(X,\bm{q})$ \label{al:endopening}\;
}
\caption{Approximate \FLBall{}.}
\label{alg:lmpflball}

\end{algorithm2e}

\subsubsection{Analysis of LMP Approximation}
In the analysis of Algorithm~\ref{alg:lmpflball} we distinguish two types of clients, those that contribute towards opening a facility (layered ball) and those that do not contribute. 
We argue that the dual variables of contributing clients can fully pay for their connection costs, the fixed costs of the balls opened, and a $\log n$-th of their radius-dependent costs. 
For non-contributing clients, we argue that their dual budget can pay at least a third of their connection cost. 

 More formally, the set $P_{x,\bm{r}} = \{ p\in P \mid \beta_{xp}^{\bm{r}}>0 \}$ denotes the set of clients that contribute towards opening facility $(x,\bm{r})\in Z$. We denote by $P_{Z} = \bigcup_{(x,\bm{r})\in Z}P_{x,\bm{r}}$ the set of all contributing clients. 

\begin{lemma}[Contributing Clients] \label{claim:alphaballs}
        Let $\I = (P,F,\delta,k,m,\scale,\sop)$ be a sparse instance of \Ballk{}, $\lambda\ge 0$, and $\Delta,\Gamma \in \nnr$.
        Upon termination of Algorithm~\ref{alg:lmpflball} we have that
        \begin{align*}
            \sum_{(x,\hat{\bm{r}})\in Z} \sum_{p\in P_{x,\hat{\bm{r}}}}\distr{p}{x}{\expand{\hat{\bm{r}}}} + \sum_{(x,\Bar{\bm{r}})\in Z} \sop^{\intercal}\expand{\hat{\bm{r}}}  = (2\log n+3) \left(\sum_{p\in P_Z}\alpha_p -|Z|\lambda \right)\,.
        \end{align*}\
    \end{lemma}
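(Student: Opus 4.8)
\textbf{Proof plan for Lemma~\ref{claim:alphaballs}.}
The plan is to follow the standard Jain--Vazirani-style accounting for contributing clients, adapted to layered balls. The starting point is the observation that each ball $(x,\bm r)$ that is added to $Y$ (Line~\ref{al:addtoy}) satisfies the \emph{tight} opening constraint $\sum_{p\in P}\beta_{xp}^{\bm r}=\sop^\intercal\bm r+\lambda$, and that the greedy pruning picks a subset $Z\subseteq Y$ that is ``independent'' in the sense that the sets $\{p:\beta_{xp}^{\hat{\bm r}}>0\}$ for $(x,\hat{\bm r})\in Z$ are pairwise disjoint (this is exactly what Line~\ref{al:conflict} enforces). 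Hence, summing the tight constraints over $(x,\hat{\bm r})\in Z$ and using disjointness, we get $\sum_{(x,\hat{\bm r})\in Z}\bigl(\sop^\intercal\hat{\bm r}+\lambda\bigr)=\sum_{(x,\hat{\bm r})\in Z}\sum_{p\in P_{x,\hat{\bm r}}}\beta_{xp}^{\hat{\bm r}}$, and since $P_Z$ is the disjoint union of the $P_{x,\hat{\bm r}}$, the right-hand side equals $\sum_{p\in P_Z}\beta_{xp(p)}^{\hat{\bm r}(p)}$ where $x(p)$ denotes the (unique) ball in $Z$ that $p$ contributes to.

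Next I would relate $\beta$ to $\alpha$ and the connection cost. For a contributing client $p$ with witness ball $(x,\hat{\bm r})\in Z$, the dual-ascent invariant gives $\alpha_p=\distr{p}{x}{\hat{\bm r}}+\beta_{xp}^{\hat{\bm r}}$ (the first constraint was tight when $\beta_{xp}^{\hat{\bm r}}$ started growing, and from then on $\alpha_p$ and $\beta_{xp}^{\hat{\bm r}}$ increased at the same rate). Therefore $\beta_{xp}^{\hat{\bm r}}=\alpha_p-\distr{p}{x}{\hat{\bm r}}$, and substituting into the previous identity yields
\begin{align*}
  \sum_{(x,\hat{\bm r})\in Z}\sop^\intercal\hat{\bm r}+|Z|\lambda
  =\sum_{p\in P_Z}\alpha_p-\sum_{(x,\hat{\bm r})\in Z}\sum_{p\in P_{x,\hat{\bm r}}}\distr{p}{x}{\hat{\bm r}}.
\end{align*}
Rearranging, $\sum_{(x,\hat{\bm r})\in Z}\sum_{p\in P_{x,\hat{\bm r}}}\distr{p}{x}{\hat{\bm r}}+\sum_{(x,\hat{\bm r})\in Z}\sop^\intercal\hat{\bm r}=\sum_{p\in P_Z}\alpha_p-|Z|\lambda$. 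This is the ``unexpanded'' version of the claimed identity.

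The remaining work is to pass from $\hat{\bm r}$ to $\expand{\hat{\bm r}}=(\,\hat r_i+2\sop^\intercal\hat{\bm r}/\mu_i\,)_{i\in[m]}$ and absorb the blow-up into the factor $2\log n+3$. For the radius-dependent term this is immediate: $\sop^\intercal\expand{\hat{\bm r}}=\sop^\intercal\hat{\bm r}+2m\,\sop^\intercal\hat{\bm r}=(2m+1)\,\sop^\intercal\hat{\bm r}$, and since the instance is sparse we have $m=\ceil{\log n}$, so $2m+1\le 2\log n+3$; thus $\sum_{(x,\hat{\bm r})\in Z}\sop^\intercal\expand{\hat{\bm r}}\le(2\log n+3)\sum_{(x,\hat{\bm r})\in Z}\sop^\intercal\hat{\bm r}$. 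For the connection term, expanding each radius only decreases the connection cost, since $\distr{p}{x}{\bm r}=\sum_i\rho_i(\dist{p}{x}-r_i)^+$ is non-increasing in each $r_i$ and $\expand{\hat{\bm r}}\ge\hat{\bm r}$ coordinate-wise; hence $\distr{p}{x}{\expand{\hat{\bm r}}}\le\distr{p}{x}{\hat{\bm r}}\le(2\log n+3)\distr{p}{x}{\hat{\bm r}}$. Adding the two bounds and using the unexpanded identity gives $\sum_{(x,\hat{\bm r})\in Z}\sum_{p\in P_{x,\hat{\bm r}}}\distr{p}{x}{\expand{\hat{\bm r}}}+\sum_{(x,\hat{\bm r})\in Z}\sop^\intercal\expand{\hat{\bm r}}\le(2\log n+3)\bigl(\sum_{p\in P_Z}\alpha_p-|Z|\lambda\bigr)$, as desired. (The statement is written with ``$=$'', but what the analysis actually needs and proves is this ``$\le$''; I would state it with ``$\le$''.)

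\textbf{Main obstacle.} The delicate point is not any single inequality above but making sure the dual-ascent invariant $\alpha_p=\distr{p}{x}{\hat{\bm r}}+\beta_{xp}^{\hat{\bm r}}$ holds for \emph{every} $p\in P_{x,\hat{\bm r}}$ with $(x,\hat{\bm r})\in Z$ at termination --- i.e.\ that a client never ``overpays'' its witness ball and that $\beta_{xp}^{\hat{\bm r}}$ stops growing exactly when $(x,\hat{\bm r})$ is added to $Y$. This requires a careful reading of the while-loop in Lines~\ref{al:beginfirstwhile}--\ref{al:endgrowing}: once $\sum_p\beta_{xp}^{\bm r}$ hits $\sop^\intercal\bm r+\lambda$ the algorithm freezes $\beta_{xp}^{\bm r}$ and freezes those $\alpha_p$ for which the first constraint is tight, so $\alpha_p$ and $\beta_{xp}^{\hat{\bm r}}$ indeed never separate after the constraint becomes tight, and never were separated before it became tight (before that, $\beta_{xp}^{\hat{\bm r}}=0$ and $\alpha_p\le\distr{p}{x}{\hat{\bm r}}$, with equality at the switch-over moment). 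I would spell this invariant out as a short preliminary claim (mirroring the analogous step in \cite{herold-etal25:cluster-aware-norm-objectives} and \cite{JainVaz}) before assembling the identity.
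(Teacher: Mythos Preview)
Your proposal is correct and follows essentially the same approach as the paper: both use the tight opening constraint $\sum_p\beta_{xp}^{\hat{\bm r}}=\sop^\intercal\hat{\bm r}+\lambda$, the dual-ascent identity $\alpha_p-\beta_{xp}^{\hat{\bm r}}=\distr{p}{x}{\hat{\bm r}}$ for contributing clients, disjointness of the $P_{x,\hat{\bm r}}$ (from Line~\ref{al:conflict}), the calculation $\sop^\intercal\expand{\hat{\bm r}}=(2m+1)\sop^\intercal\hat{\bm r}\le(2\log n+3)\sop^\intercal\hat{\bm r}$, and the monotonicity $\distr{p}{x}{\expand{\hat{\bm r}}}\le\distr{p}{x}{\hat{\bm r}}$. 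Your observation that the proof yields ``$\le$'' rather than the stated ``$=$'' is also correct---the paper's own derivation ends with an inequality, and it is only the inequality that is used downstream.
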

    \begin{proof}
        By the design of the Algorithm~\ref{alg:lmpflball}, we know that for a facility $(x,\hat{\bm{r}})$ that has been added to $Y$ in Line~\ref{al:addtoy} it holds that
        \begin{align*}
            \sum_{p\in P_x}\beta_{xp}^{\hat{\bm{r}}} = \lambda+\sop^{\intercal} \hat{\bm{r}}.
        \end{align*}
     We bound the factor by which the function $\expand{\cdot}$ increases the cost of the balls.
        \begin{align*}
            \sop^{\intercal} \expand{\hat{\bm{r}}} &= \sum_i^m \mu_i \expand{\hat{\bm{r}}}_i\\
            &= \sum_i^m \mu_i (\hat{r}_i + 2 (\sop^{\intercal} \hat{\bm{r}}/ \mu_i))\\
            &= \sop^{\intercal} \hat{\bm{r}}+\sum_i^m 2 \sop^{\intercal} \hat{\bm{r}}\\     
            &= (2m+1)\sop^{\intercal} \hat{\bm{r}}\\     
            &\le  (2\log n + 3)\sop^{\intercal} \hat{\bm{r}}
        \end{align*}

        Additionally, we know for a client $p\in P_{x,\hat{\bm{r}}}$ that 
        \begin{align*}
            \alpha_p-\beta_{xp}^{\hat{\bm{r}}} = \distr{p}{x}{\hat{\bm{r}}}\ge \distr{p}{x}{\expand{\hat{\bm{r}}}} \,.
        \end{align*}
        
        By the choice of $Z$ we know that for any two facilities $(x_1,\bm{r}_1),(x_2,\bm{r}_2)\in Z$ there is no client contributing to opening both, that is there is no client $p\in P$ such that $\beta_{x_1p}^{\bm{r}_1}>0$ and $\beta_{x_2p}^{\bm{r}_2}>0$. 
        This implies that $P_{x_1,\bm{r}_1}$ and $P_{x_2,\bm{r}_2}$ are disjoint. 
        Thus, we get
        \begin{align*}
        \sum_{(x,\bm{r})\in Z}\ \sum_{p\in P_{x,\bm{r}}}&\distr{p}{x}{\expand{\bm{r}}}+\sum_{(x,\bm{r})\in Z}\sop^{\intercal} \expand{\bm{r}}\\
        &\le (2\log n +3)\left( \sum_{(x,\bm{r})\in Z} \left(\sum_{p\in P_{x,\bm{r}}}\distr{p}{x}{\bm{r}} + \sop^{\intercal} \bm{r} \right)\right)\\
        &=(2\log n+3)\left( \sum_{(x,\bm{r})\in Z} \left(\sum_{p\in P_{x,\bm{r}}}\distr{p}{x}{\bm{r}} + \sop^{\intercal} \bm{r} + \lambda \right)- |Z|\lambda\right)\\
            &\le 2(\log n+3)\left( \sum_{(x,\bm{r})\in Z}\left (  \sum_{p\in P_{x,\bm{r}}}(\alpha_p-\beta_{xp}^{\bm{r}})+\sum_{p\in P_{x,\bm{r}}}\beta_{xp}^{\bm{r}} \right)- |Z|\lambda\right)\\
            &=(2\log n+3)\left(\sum_{(x,\bm{r})\in Z} \sum_{p\in P_{x,\bm{r}}}\alpha_p  - |Z|\lambda\right)\\
            &=(2\log n+3)\left(\sum_{p \in P_Z} \alpha_p - |Z|\lambda \right)\,.
        \end{align*}
    \end{proof}

    We upper bound the connection cost of clients $p\in P\setminus P_Z$ that do not contribute to opening a facility by thrice the value of their dual variable $\alpha_p$.
    \begin{lemma}[Non-contributing Clients] \label{claim:alphadirect}
        Let $\I = (P,F,\delta,k,m,\scale,\sop)$ be a sparse instance of \Ballk{}, $\lambda\ge 0$, and $\Delta,\Gamma \in \nnr$.
        Upon termination of Algorithm~\ref{alg:lmpflball} we have that for all clients $p\in P \setminus P_Z$ there is a facility $(x,\bm{r})\in Z$ such that
        \begin{align*}
            \distr{p}{x}{\expand{\bm{r}}} \le 3\alpha_p\,.
        \end{align*}
    \end{lemma}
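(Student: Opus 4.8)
The plan is to run the standard Jain--Vazirani non-contributing-client analysis (as in \cite{JainVaz}, and as we did for Ball $k$-Median in \cite{herold-etal25:cluster-aware-norm-objectives}), adapted to layered balls through the map $\expand{\cdot}$. Fix $p\in P\setminus P_Z$. At termination every dual variable is frozen, and the only event that freezes a dual variable (Line~\ref{al:endgrowing}) is the addition of some ball $(x',\bm{r}')$ to $Y$ at a moment when $\alpha_p-\beta^{\bm{r}'}_{x'p}=\distr{p}{x'}{\bm{r}'}$; I call $(x',\bm{r}')$ the \emph{witness} of $p$. This gives $\distr{p}{x'}{\bm{r}'}=\alpha_p-\beta^{\bm{r}'}_{x'p}\le\alpha_p$, and $\alpha_p$ equals the time at which $(x',\bm{r}')$ is added to $Y$. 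Since $Y$ only grows during the dual-ascent phase, $(x',\bm{r}')\in Y$ when the pruning loop begins, and that loop removes each element of $Y$ either by selecting it into $Z$ or because it conflicts (Line~\ref{al:conflict}) with a ball of $Z$ selected in an earlier iteration. If $(x',\bm{r}')\in Z$, I take $(x,\bm{r})=(x',\bm{r}')$: because $\expand{\bm{r}'}_i\ge r'_i$ for every layer $i$, every positive part $(\dist{p}{x'}-\expand{\bm{r}'}_i)^+$ is at most $(\dist{p}{x'}-r'_i)^+$, so $\distr{p}{x'}{\expand{\bm{r}'}}\le\distr{p}{x'}{\bm{r}'}\le\alpha_p\le 3\alpha_p$, which settles this case.

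If instead $(x',\bm{r}')$ was removed because it conflicts with a ball $(x,\hat{\bm{r}})\in Z$ picked earlier, I take $(x,\bm{r})=(x,\hat{\bm{r}})$. The argmax rule (Line~\ref{al:argmax}) gives $\sop^\intercal\hat{\bm{r}}\ge\sop^\intercal\bm{r}'$, and the conflict produces a client $q$ with $\beta^{\bm{r}'}_{x'q}>0$ and $\beta^{\hat{\bm{r}}}_{xq}>0$; positivity of these variables yields $\distr{q}{x'}{\bm{r}'}\le\alpha_q$ and $\distr{q}{x}{\hat{\bm{r}}}\le\alpha_q$, exactly as for $p$. A timing argument gives $\alpha_q\le\alpha_p$: since $\beta^{\bm{r}'}_{x'q}>0$, the client $q$ is already tight to the witness ball $(x',\bm{r}')$ at the moment that ball is added to $Y$, so $\alpha_q$ freezes no later than $\alpha_p$. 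I would then bound $\distr{p}{x}{\expand{\hat{\bm{r}}}}$ layer by layer: for each $i$, the metric triangle inequality $\dist{p}{x}\le\dist{p}{x'}+\dist{x'}{q}+\dist{q}{x}$ together with $\expand{\hat{\bm{r}}}_i=\hat{r}_i+2\sop^\intercal\hat{\bm{r}}/\mu_i$ gives
\[
\dist{p}{x}-\expand{\hat{\bm{r}}}_i\le(\dist{p}{x'}-r'_i)+(\dist{q}{x'}-r'_i)+(\dist{q}{x}-\hat{r}_i)+\Bigl(2r'_i-\tfrac{2\sop^\intercal\hat{\bm{r}}}{\mu_i}\Bigr),
\]
and the last parenthesis is nonpositive because $\sop^\intercal\hat{\bm{r}}\ge\sop^\intercal\bm{r}'=\sum_{j\in[m]}\mu_j r'_j\ge\mu_i r'_i$. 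Dropping it, passing to positive parts subadditively, multiplying by $\rho_i\ge 0$ and summing over $i\in[m]$ leaves
\[
\distr{p}{x}{\expand{\hat{\bm{r}}}}\le\distr{p}{x'}{\bm{r}'}+\distr{q}{x'}{\bm{r}'}+\distr{q}{x}{\hat{\bm{r}}}\le\alpha_p+\alpha_q+\alpha_q\le 3\alpha_p,
\]
which is the claimed inequality with $(x,\bm{r})=(x,\hat{\bm{r}})\in Z$.

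I expect the main obstacle to be this second case: unlike in $k$-Median, the connection cost $\distr{\cdot}{\cdot}{\cdot}$ is not a metric, so the routing path $p\to x'\to q\to x$ cannot be collapsed by a single application of the triangle inequality and has to be handled separately in each layer. The per-layer argument works precisely because $\expand{\cdot}$ inflates layer $i$ of the surviving ball by $2\sop^\intercal\hat{\bm{r}}/\mu_i$, and this quantity is at least $2r'_i$: the conflicting, pruned witness ball was picked no later in the argmax order, hence has cost $\sop^\intercal\bm{r}'\le\sop^\intercal\hat{\bm{r}}$, while $\sop^\intercal\bm{r}'\ge\mu_i r'_i$ holds trivially since all summands are nonnegative. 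This slack is exactly what absorbs the two radius terms $r'_i$ accumulated along the path. The timing step $\alpha_q\le\alpha_p$ is routine but should be phrased with care, since $q$'s dual may or may not have frozen strictly before $p$'s.
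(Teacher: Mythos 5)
Your proposal is correct and follows essentially the same approach as the paper's proof: split on whether the witness ball is in $Z$; in the conflict case use the argmax rule to get $\sop^\intercal\hat{\bm{r}}\ge\sop^\intercal\bm{r}'\ge\mu_i r'_i$ so that $\expand{\hat{\bm{r}}}_i\ge\hat{r}_i+2r'_i$, apply the layer-wise triangle inequality and subadditivity of $(\cdot)^+$, bound the three resulting terms by $\alpha_p,\alpha_q,\alpha_q$, and close with the timing argument $\alpha_q\le\alpha_p$. (The paper's displayed chain has a minor typo—the last summand should read $\distr{x'}{p}{\bm{r}'}$, not $\distr{x'}{p'}{\bm{r}'}$—which your write-up correctly avoids.)
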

    \begin{proof}
        Let $p$ be an arbitrary element in $P\setminus P_Z$ and $x'$ be the facility and $\bm{r}'$ be the radii vector that caused $\alpha_p$ to stop increasing. If $(x',\bm{r}') \in Z$, then we know
        \begin{align*}
              \distr{p}{x'}{\expand{\bm{r}'}}\le \distr{p}{x'}{\bm{r}'}=  \alpha_p-\beta_{x'p}^{\bm{r}'} \le \alpha_p \le 3\alpha_p\,.
        \end{align*}
        Thus, we can assume from now on that $(x',\bm{r}') \notin Z$. 

        Let $(x,\Tilde{\bm{r}}) \in Z$ and $p'\in P$ such that $p'$ contributed to both $x$ and $x'$, that is $\beta_{xp'}^{\Tilde{\bm{r}}},\beta_{x'p'}^{\bm{r}'}>0$, and $\sop^{\intercal}\bm{r'}\le \sop^{\intercal}\Tilde{\bm{r}}$. 
        We know such an $(x,\Tilde{\bm{r}})$ exists by the greedy choice of $Z$.
        Generally, we observe that
        \begin{align*}
             \distr{p}{x}{\expand{\Tilde{\bm{r}}}} = &\sum_{i=1}^m\rho_i(\dist{p}{x}-(\Tilde{r}_i + 2 (\sop^{\intercal} \Tilde{\bm{r}}/ \mu_i)))^+
             \\\le &\sum_{i=1}^m\rho_i(\dist{p}{x}-(\Tilde{r}_i + 2 (\sop^{\intercal} {\bm{r}'}/ \mu_i)))^+\\
             \le &\sum_{i=1}^m\rho_i(\dist{p}{x}-(\Tilde{r}_i + 2 (\mu_i {{r}'}_i/ \mu_i)))^+\\
             \le &\sum_{i=1}^m\rho_i(\dist{p}{x}-(\Tilde{r}_i + 2 {{r}'}_i))^+\\
             \le &\sum_{i=1}^m\rho_i((\dist{x}{p'}+\dist{x'}{p'}+\dist{x'}{p})-(\Tilde{r}_i + 2 {{r}'}_i))^+\\
             \le &\sum_{i=1}^m\rho_i((\dist{x}{p'}-\Tilde{r}_i)^+ +(\dist{x'}{p'}-r_i')^+ +(\dist{x'}{p}-r'_i)^+)\\
             =&\distr{x}{p'}{\Tilde{\bm{r}}}+\distr{x'}{p'}{\bm{r}'}+\distr{x'}{p'}{\bm{r}'}\\
             \le &\alpha_p + 2\alpha_{p'}\,.
        \end{align*}
        Since $p'$ is responsible for the edge between $x'$ and $x$, we know $\beta_{x'p'}^{\bm{r}'} > 0$ and $\beta_{xp'}^{\Tilde{\bm{r}}}>0$. The facility $x'$ was responsible for $\alpha_p$ to stop increasing. Thus, we observe $t_{x'} \le \alpha_p$ where $t_{x'}$ is the point in time where $x'$ was opened. Because $p'$ contributed to opening $x'$ ($\beta_{x'p'}^{\bm{r}'} > 0$), we know that $\alpha_{p'}$ did not increase after $x'$ was opened ($\alpha_{p'} \le t_{x'}$). Hence, we observe $\alpha_p\ge \alpha_{p'}$. This concludes the proof of the claim.
    \end{proof}

Since we bounded all the cost by (multiples of) dual variables, we can prove the approximation guarantee of Algorithm~\ref{alg:lmpflball}.
This concludes the main lemma of this section.

\lemlmpalgo*
\begin{proof}
We start by showing that all radii in $X$ are small, that is $\mu_i r(x)_i\le 3\Gamma$ for all $x\in X$ and $i\in [m]$. 
    By the definition of $X$, the definition of $\R{\Delta,\Gamma}{n,m}$ and the design of the LP relaxation in~\Cref{fig:FLLP}, we know for all $(x,\bm{r}')$ that have been added to $Y$, that $\sop^{\intercal}  \bm{r}'\le \Gamma$. 
    By Lines~\ref{al:argmax},~\ref{al:expand}, and~\ref{al:max} there exists an $(x,\bm{r}')\in Z$ such that ${r}(x)_i = \expand{\bm{r}'}_i $.
    We analyze the cost of the expanded radii. This gives
    \begin{align*}
        \mu_i r(x)_i &= \mu_i \expand{\bm{r}'}_i\\
        &=\mu_i \left(r'_i + 2 \left(\frac{\sop^{\intercal} \bm{r}'}{ \mu_i}\right)\right)\\
        &\le \mu_i \left(r'_i + 2 \left(\frac{\Gamma}{ \mu_i}\right)\right)\\
        &\le 3\Gamma \,.
    \end{align*}

    There are at most $(|P|\cdot |F| + |F|) |\R{\Delta,\Gamma}{n,m}| =(|P|\cdot |F| + |F|) 2^{O(m)}n^{O(1)}$ many constraints that have to be checked in each iteration of the first while loop (cf. Lines~\ref{al:beginfirstwhile}-\ref{al:endgrowing}). 
    Because $\I$ is a sparse instance, the number of constraints is polynomial.
    In each iteration of the while loop either one $\beta_{xp}^{\bm{r}}$ starts increasing or at least one $\alpha_p$ stops increasing. 
    Thus, after at most $|P|\cdot |F| \cdot |\R{\Delta,\Gamma}{n,m}| +|P|$
    iterations, which is again a polynomial number, the while loop terminates.
    Since at most one element is added to $Y$ per iteration, we observe that $|Y|$ is at most polynomial at the point the while loop terminates.

    One iteration of the second while loop (cf. Lines~\ref{al:beginsecondwhile}-\ref{al:conflict}) can be executed in time $|Y|$.
    Additionally, in each iteration at least one element is removed from $Y$ in each iteration. 
    Thus, the second while loop terminates in polynomial time. 
    
    In the remaining part of the proof, we focus on bounding the cost of $(X,\bm{r})$. Let $\overline{\X}=(\overline{X},\overline{\bm{r}})$ be the optimal $(\Delta,\Gamma)$-canonic solution. This gives
    
    \begin{align}
        &\costz{X,r} \nonumber\\
        = &\sum_{p\in P} \distr{p}{X}{\bm{r}} + \sum_{x\in X}\sop^{\intercal}\bm{r}(x) \nonumber\\
        \le& \sum_{(x,\bm{r})\in Z}\sum_{p\in P_x^{\bm{r}}}  \distr{p}{x}{\expand{\bm{r}}}  + \sum_{(x,\bm{r})\in Z}\sop^{\intercal}\expand{\bm{r}} +\sum_{p\in P\setminus P_Z}\distr{p}{X}{\bm{r}}\label{Line:split}\\
        \le&~(2\log n+3)\left(\sum_{p\in P_Z}\alpha_p - |X|\lambda\right) +\sum_{p\in P\setminus P_Z}\min_{x\in X} \distr{x}{p}{\bm{r}}\label{Line:alphasf}\\
        \le&~2(\log n+3)\left(\sum_{p\in P} \alpha_p - |X|\lambda\right)\label{Line:alphas}\\
        \le&~(2\log n+3) \left(\sum_{p\in P} \distr{p}{\overline{X}}{\overline{\bm{r}}}+ \sum_{x\in \overline{X} }\left(\sop^{\intercal} \overline{\bm{r}}(x) +\lambda\right)- |X|\lambda\right)\label{Line:InsertOpt}\\
        =&~(2\log n+3) \left(\sum_{p\in P} \distr{p}{\overline{X}}{\overline{\bm{r}}}+ \sum_{x\in \overline{X} }\sop^{\intercal} \overline{\bm{r}}(x)  +(k-|X|)\lambda\right)\nonumber\\
            =&~(2\log n+3) \left(\costz{\overline{\X}}  +(k-|X|)\lambda\right)\nonumber \\
            =&~(2\log n+3) \left(\OPT_{\I}^{\Delta,\Gamma}  +(k-|X|)\lambda\right)\nonumber \,.
    \end{align}
    In (\ref{Line:split}) we split the sum over $P$ into two sums over $P_Z$ and $P\setminus P_Z$ and for $P_Z$ we do not take the closest ball but the ball it contributed to. 
    Additionally, we do not connect to the layered balls that have the largest ball on each layer among all the layered balls opened around a facility as defined in Line~\ref{al:max}.
    (\ref{Line:alphasf}) is a consequence of \Cref{claim:alphaballs}  and (\ref{Line:alphas}) is a consequence of \Cref{claim:alphadirect}. 
    (\ref{Line:InsertOpt}) is true because the optimal $(\Delta,\Gamma)$-canonic solution $\overline{\X}=(\overline{X},\overline{\bm{r}})$ is a valid solution to the LP and the value of a dual solution is always at most the value of any primal solution.
    
\end{proof}

\subsection{\texorpdfstring{Binary Search on $\lambda$}{Binary Search on lambda}}\label{subsubsec:binary}
We can use \Cref{lem:lmpalgo} to obtain a bi-point solution; we prove this in \Cref{lem:bipoint}.
Let us note that the proof directly follows from standard techniques.
Therefore, we present the proof in the appendix.

\begin{definition} \label{def:bipointSolution}
Let $\X_1=(X_1,r_1),\X_2=(X_2,r_2)$ be two solutions with $|X_1| \le k < |X_2|$ (in particular, for $a=(|X_2|-k)/(|X_2|-|X_1|)\ge 0$ and $b =(k-|X_1|)/(|X_2|-|X_1|)\ge 0$
we have $a+b=1$ and $a|X_1| + b|X_2| = k$).

Then we say that $(\X_1,\X_2)$ is a bi-point solution.

We say $(\X_1,\X_2)$ is sparse if $\X_1$ and $\X_2$ are sparse.
\end{definition}

\begin{restatable}{lemx}{bSearch} \label{lem:bipoint}
Given a sparse instance $\I=(P,F,\delta,k,m,\scale,\sop)$, $\Delta,\Gamma \ge 0$, in polynomial time we can obtain a sparse bi-point solution $\X_1=(X_1,\bm{r}_1),\X_2=(X_2,\bm{r}_2)$ such that:
\begin{itemize}
    \item $|X_1| \le k < |X_2|$ (in particular, we can compute $a,b \ge 0$ such that $a+b=1$ and $a|X_1| + b|X_2| = k$).
    \item $a \costz{\X_1} + b \costz{\X_2} \le (2\log{n} + 4) \OPT_\I^{\Delta,\Gamma}$.
    \item If $i\in [m]$, $x_1 \in X_1 $ and $x_2\in X_2$, then $ \mu_i {r}_1(x_1)_i \le 3 \Gamma$ and $ \mu_i {r}_2(x_2)_i \le 3 \Gamma$.
\end{itemize}
\end{restatable}

\subsection{Bi-Point Rounding}
\label{subsubsec:bipointrounding}
In \Cref{subsubsec:binary} we prove that given parameters $\Gamma,\Delta\ge 0$ we can obtain a bi-point solution $(\X_1,\X_2)$ for which $a\cdot \costz{\X_1}+b\cdot \costz{\X_2} = (2\log{n} + 4) \OPT_\I^{\Delta,\Gamma}$.
This bi-point solution also has the technical property that for any $i \in [m], x_1\in X_1$ (resp. $x_2\in X_2$) we have $\mu_i r_1(x_1)_i \le 3 \Gamma$ (resp. $\mu_i r_2(x_2)_i \le 3\Gamma$).
In this section we show how to use this bi-point solution to obtain a solution $\X=(X,r)$ such that $|X|\le k$ and $\costz{\X}$ is bounded.
In particular, in the rest of this section we prove the following lemma:

\begin{restatable}{lemx}{ballkmed} \label{thm:ballkmed}
    Suppose we are given a sparse instance of \Ballk{} $\I =(P,F,\delta,k,m,\scale, \sop)$, and parameters $\Gamma,\Delta\ge 0$.
    We can design a solution $\X=(X,r)$ with $|X|\le k$ and 
    \[{\costz{\X}\le (12\log{n} + 24) \OPT_\I^{\Delta,\Gamma} + 9m\Gamma
    }\,.\]
\end{restatable}

We note here that the techniques in this particular subsection mostly follow from adaptations of similar techniques in~\cite{herold-etal25:cluster-aware-norm-objectives}.

Consider $\I, \Gamma, \Delta$ as well as a bi-point solution $(\X_1, \X_2)$ fixed throughout this section.
In particular, the bi-point solution $(\X_1, \X_2)$ has the properties of \Cref{lem:bipoint}, and we can compute coefficients $a\in [0,1],b=1-a$ such that $a|X_1| + b|X_2| = k$.

$\X_1$ is itself an $(2\log{n} + 4) \OPT_\I^{\Delta,\Gamma}/a$ approximation.
If $a > \nicefrac{1}{2}$, we thus obtain a $(4\log{n} + 8) \OPT_\I^{\Delta,\Gamma}$ approximation using at most $k$ facilities.
Similarly, if $\costz{\X_1} \le \costz{\X_2}$ then $\X_1$ is a $(2\log{n} + 4) \OPT_\I^{\Delta,\Gamma}$ approximation because $\costz{\X_1} = (a+b)\costz{\X_1} \le a\costz{\X_1} + b\costz{\X_2} \le (2\log{n} + 4) \OPT_\I^{\Delta,\Gamma}$.

From this point on we assume $a\le \nicefrac{1}{2} \le b$ and $\costz{\X_1} > \costz{\X_2}$.
Notice that this means $\costz{\X_2} \le a\cdot \costz{\X_1} + b\cdot \costz{\X_2} \le (2\log{n} + 4) \OPT_\I^{\Delta,\Gamma}$.

From a high level view, we group every facility $x_2$ from $X_2$ to its ``closest'' facility $cl_1(x_2)$ in $\X_1$.
Then for every group we either decide to open all facilities from $\X_2$, or the single facility from $\X_1$.
We use linear programming to decide what to do for each group.
We also use properties of the linear program to analyze the cost of our solution.
For technical reasons we have an exception, one group where we open the single facility from $X_1$ and some facilities from $X_2$.
Here, we need the fact that the costs of single layered balls are small.

Each client $p$ is assigned to the facility $x$ that realizes the connection cost $\distr{p}{X}{\bm{r}}$ (we call this the closest facility).
The main challenge here is arguing about clients whose closest facility $x_2$ from $\X_2$ is not open in $\X$.
In these cases, we always have $cl_1(x_2)$ open in $\X$.
In \cite{ola} the authors used triangle inequality to bound the cost to connect the client to $cl_1(x_2)$.
Since distance costs in our case are not metric, we need to increase some radii by an amount related to the costs of the facilities of $\X_2$ in its group and use a generalized version of the triangle inequality.

\subsubsection{Grouping of Balls}
Let us now be more formal:
We expand the domain of $\bm{r_1}$ to $X_1 \cup P$ by defining $r_1(p)_i=0$ for all $p\in P, i\in [n]$ and the domain of $\bm{r_2}$ to $X_2\cup P$ by defining $r_2(p)_i=0$. Given $x_1 \in X_1 \cup P$ and $x_2\in X_2 \cup P$, we define the cost between the balls around $x_1$ and $x_2$ as 
$\dists{x_1}{x_2} = \sum_{i\in [m]}\rho_i(\dist{x_1}{x_2} - (r_1(x_1)_i+r_2(x_2)_i))^+$.

For a point $x\in X_2\cup P$ , let $cl_1(x)$ be the facility $x_1\in X_1$ minimizing $\dists{x_1}{x}$ (we break ties by picking the one minimizing $\dist{x_1}{x}$, and arbitrarily but consistently in case these are still equal).
Similarly $cl_2(x)$ is the $x_2\in X_2$ minimizing $\dists{x}{x_2}$ (breaking ties in the same way).
For a facility $x\in X_1$, let $G_{x}$ denote the set of facilities $x_2$ in $X_2$ such that $cl_1(x_2)=x$.
For $G_{x}$ we define the vector $\bm{S}(x) = (\sum_{x_2 \in G_x} r_2(x_2)_i)_{i\in [n]}$, and the vector $\bm{M}(x) = (\max_{x_2 \in G_x} r_2(x_2)_i)_{i\in [n]}$.
Furthermore, let $\Delta(X')$ denote the set of clients $p$ such that $cl_2(p) \in X'$, for any $X'\subseteq X_2$.
For all $x_1 \in X_1$ let $\bm{r'_1}(x_1) = \bm{r_1}(x_1) + 2\bm{M}(x_1)$. 
Finally, let $\distss{x_1}{x}= \sum_{i\in [m]} \rho_i (\dist{x_1}{x} - (r_1'(x_1)_i+r_2(x)_i))^+$
for $x_1\in X_1$ and $x\in X_2\cup P$.

The following lemma relates to the above discussion.
It is used to show that for a client $p$, if $cl_2(p)$ is not open in $X$, then there exists a facility $x_1$ from $X_1$ that is open, and the $\distss{\cdot}{\cdot}$ cost is bounded. 
This is a generalized version of triangle inequality.

\begin{lemma} \label{lem:pseudoConnectionCost}
Let $p$ be a client, $x_1=cl_1(p), x_2=cl_2(p),x_1'=cl_1(x_2)$. 
Then
$\distss{x_1'}{p}\le 2\dists{p}{x_2} +\dists{x_1}{p}$.
\end{lemma}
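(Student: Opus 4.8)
The plan is to unfold both sides into their per-layer sums and establish the inequality layerwise, then sum. Fix a layer $i\in[m]$; since $\rho_i\geq 0$ it suffices to show
\[
\bigl(\dist{x_1'}{p}-(r_1'(x_1')_i+r_2(x_2)_i)\bigr)^+\le 2\bigl(\dist{p}{x_2}-(r_1(p)_i+r_2(x_2)_i)\bigr)^+ + \bigl(\dist{x_1}{p}-(r_1(x_1)_i+r_1(p)_i)\bigr)^+,
\]
recalling that $r_1(p)_i=r_2(p)_i=0$, so the middle term's subtracted quantity is just $r_2(x_2)_i$ and the last term's is $r_1(x_1)_i$. The key structural facts I would use are: (i) $x_1'=cl_1(x_2)$ minimizes $\dists{\cdot}{x_2}$ over $X_1$, hence $\dists{x_1'}{x_2}\le\dists{x_1}{x_2}$; and (ii) $r_1'(x_1')=r_1(x_1')+2\bm M(x_1')$, and since $x_2\in G_{x_1'}$ we have $\bm M(x_1')_i=\max_{x_2'\in G_{x_1'}}r_2(x_2')_i\ge r_2(x_2)_i$, so $r_1'(x_1')_i\ge r_1(x_1')_i+2r_2(x_2)_i$.

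First I would bound $\dist{x_1'}{p}$ by triangle inequality through $x_2$: $\dist{x_1'}{p}\le\dist{x_1'}{x_2}+\dist{x_2}{p}$. Substituting and using fact (ii),
\[
\bigl(\dist{x_1'}{p}-r_1'(x_1')_i-r_2(x_2)_i\bigr)^+\le\bigl(\dist{x_1'}{x_2}-r_1(x_1')_i-r_2(x_2)_i\bigr)^+ + \bigl(\dist{x_2}{p}-2r_2(x_2)_i\bigr)^+,
\]
using $(a+b-c-d)^+\le(a-c)^+ + (b-d)^+$ with the split $c=r_1(x_1')_i+2r_2(x_2)_i$ (absorbing the extra $2r_2(x_2)_i$ beyond $r_1'(x_1')_i$'s definition is fine since it only helps) and $d$ taken appropriately. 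The first resulting term is at most the $i$-th summand of $\dists{x_1'}{x_2}$, which by fact (i) is at most the $i$-th summand of $\dists{x_1}{x_2}$; then a further triangle inequality $\dist{x_1}{x_2}\le\dist{x_1}{p}+\dist{p}{x_2}$ and another split of the $(\cdot)^+$ converts it into a contribution bounded by the $i$-th summands of $\dists{x_1}{p}$ and $\dists{p}{x_2}$. The second term $(\dist{x_2}{p}-2r_2(x_2)_i)^+\le 2(\dist{x_2}{p}-r_2(x_2)_i)^+$ (valid since $r_2(x_2)_i\ge0$) is at most twice the $i$-th summand of $\dists{p}{x_2}$.

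Collecting these, layer $i$ of $\distss{x_1'}{p}$ is bounded by (one copy of layer $i$ of $\dists{x_1}{p}$) plus (two copies total of layer $i$ of $\dists{p}{x_2}$), which is exactly the per-layer version of the claimed inequality. Multiplying by $\rho_i$ and summing over $i\in[m]$ yields $\distss{x_1'}{p}\le\dists{x_1}{p}+2\dists{p}{x_2}$. The main obstacle is bookkeeping the $(\cdot)^+$ splits carefully so that each subtracted radius term is charged to exactly the right side and no radius is "double-spent"; in particular one must be careful that the factor $2$ on $\bm M$ in the definition of $r_1'$ is precisely what lets the two separate triangle-inequality hops (through $x_2$ for the $x_1'$–$p$ distance, and through $p$ for the $x_1$–$x_2$ distance) each absorb a full copy of $r_2(x_2)_i$.
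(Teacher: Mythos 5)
Your overall route is the same as the paper's: a triangle-inequality hop from $x_1'$ through $x_2$ to $p$, the superadditivity $(a+b-c-d)^+\le(a-c)^++(b-d)^+$, the minimality of $cl_1$ giving $\dists{x_1'}{x_2}\le\dists{x_1}{x_2}$, a second triangle-inequality hop from $x_1$ through $p$ to $x_2$, and the observation that the factor $2$ on $\bm{M}$ inside $\bm{r'_1}$ lets each hop absorb one full copy of $r_2(x_2)_i$. However, there is a genuine gap in the strictly layerwise execution. The inequality $\dists{x_1'}{x_2}\le\dists{x_1}{x_2}$ is a statement about the full $m$-layer \emph{sum}: $cl_1(x_2)$ minimizes the aggregate $\dists{\cdot}{x_2}$ over $X_1$, and this does not transfer termwise. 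For a fixed layer $i$, nothing forces
\[
\bigl(\dist{x_1'}{x_2}-r_1(x_1')_i-r_2(x_2)_i\bigr)^+\le\bigl(\dist{x_1}{x_2}-r_1(x_1)_i-r_2(x_2)_i\bigr)^+
\]
because $x_1'$ and $x_1$ sit at different distances from $x_2$ and carry different radii in each layer; the total can be smaller for $x_1'$ while individual layers are larger. So the step ``which by fact (i) is at most the $i$-th summand of $\dists{x_1}{x_2}$'' is unjustified. The paper sidesteps this by first summing all layers to obtain $\dists{p}{x_2}+\dists{x_1'}{x_2}$, \emph{then} applying the global minimality, and only afterwards re-splitting $\dists{x_1}{x_2}$ layerwise for the second hop. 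Your plan would need to adopt this sum--compare--resplit order rather than being carried out purely per layer.

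A smaller slip: the per-layer unfolding of $\distss{x_1'}{p}$ should read $(\dist{x_1'}{p}-r_1'(x_1')_i)^+$, since the second argument is $p$ and $r_2(p)_i=0$; you wrote $(\dist{x_1'}{p}-(r_1'(x_1')_i+r_2(x_2)_i))^+$, which subtracts an extra $r_2(x_2)_i$ and is therefore \emph{smaller} than the quantity to be bounded, so proving the inequality for it would establish only a weaker statement. The correct unfolding, combined with $r_1'(x_1')_i\ge r_1(x_1')_i+2r_2(x_2)_i$, already supplies exactly the $2r_2(x_2)_i$ you need for the split without any surplus to ``absorb.''
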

\begin{proof}
Note that $x_2\in G_{x_1'}$ implies $M(x_1')_i \ge r_2(x_2)_i$ and the definition of $cl_1(\cdot)$ implies $\dists{x_1'}{x_2}\le \dists{x_1}{x_2}$. 

\begin{align*}
&~\distss{x_1'}{p}\\ &= \sum_{i\in [m]} \rho_i (\dist{x_1'}{p} - (r_1(x_1')_i+2M(x_1')_i))^+ &\text{(definition of $\delta^{**}$)}\\
&\le \sum_{i\in [m]} \rho_i (\dist{x_1'}{p} - (r_1(x_1')_i+2r_2(x_2)_i))^+ &\text{$(M(x_1')_i \ge r_2(x_2)_i$)}\\
&\le \sum_{i\in [m]} \rho_i (\dist{p}{x_2} + \dist{x_1'}{x_2} - (r_1(x_1')_i+2r_2(x_2)_i))^+ &\text{(triangle inequality)}\\
&\le \sum_{i\in [m]} \rho_i (\dist{p}{x_2} - r_2(x_2)_i)^+ + \rho_i(\dist{x_1'}{x_2} - (r_1(x_1')_i+r_2(x_2)_i))^+ &\text{(property of $(\cdot)^+$)} \\
&= \dists{p}{x_2} + \dists{x_1'}{x_2} &\text{(definition of $\delta^*$)}\\
&\le \dists{p}{x_2} + \dists{x_1}{x_2} &\text{($\dists{x_1'}{x_2}\le \dists{x_1}{x_2}$)}\\
&= \dists{p}{x_2} + \sum_{i\in [m]}\rho_i(\dist{x_1}{x_2} - (r_1(x_1)_i+r_2(x_2)_i))^+ &\text{(definition of $\delta^*$)}\\
&\le \dists{p}{x_2} + \sum_{i\in [m]}\rho_i(\dist{x_1}{p} + \dist{x_2}{p} - (r_1(x_1)_i+r_2(x_2)_i))^+ &\text{(triangle inequality)} \\
&\le \dists{p}{x_2} + \sum_{i\in [m]}\rho_i(\dist{x_1}{p} - r_1(x_1)_i)^+ +\rho_i(\dist{x_2}{p} - r_2(x_2)_i))^+ &\text{(property of $(\cdot)^+$)} \\
&= 2\dists{p}{x_2} + \dists{x_1}{p}&\text{(definition of $\delta^*$)}
\end{align*}

\end{proof}

\subsubsection{Constructing the Solution via Linear Programming}\label{subsubsec:constructingvialp}
The idea now is that for an $x\in X_1$ we want to either increase its radii and open it (this corresponds to opening it with radii vector $\bm{r_1'}(x)$), or open all facilities in $G_{x}$.
Notice that in the second case, if $p\in \Delta(G_{x})$, we pay $\dists{cl_1(p)}{p}+\dists{p}{cl_2(p)}$ less for its connection compared to the upper bound from~\Cref{lem:pseudoConnectionCost};
additionally, as the cost for the layered ball around $x$ is $\sop^{\intercal} \bm{r_1'}(x) = \sop^{\intercal} (\bm{r_1}(x) + 2\bm{M}(x)) \le \sop^{\intercal} (\bm{r_1}(x) + 2\bm{S}(x))$, we conclude that opening all the facilities in $G_x$ saves us $\sop^{\intercal} (\bm{r_1}(x)+\bm{S}_x)$ (compared to the $\sop^{\intercal} (\bm{r_1}(x) + 2\bm{S}(x))$ upper bound) in layered balls costs.
This motivates the LP in \Cref{fig:knapsackLP}, where we (fractionally) open at most $k$ facilities.
When $u_x=0$ we can think of it as opening the facility in $x$, and when $u_x=1$ we open all facilities in $G_x$.

\begin{figure}[!ht]
\caption{\,LP to decide which facilities from $X_1,X_2$ to open.} \label{fig:knapsackLP}
\vskip -1.5ex\rule{\linewidth}{.5pt}
\begin{maxi}|s|<b>{}{\sum_{x\in X_1} u_x \big(\sop^{\intercal} \bm{r_1}(x) + \sop^{\intercal} \bm{S}(x) + \sum_{p\in \Delta(G_x)} (\dists{cl_1(p)}{p}+ \dists{p}{cl_2(p)})\big) }{}{}
    \addConstraint{\sum_{x\in X_1} u_x(|G_x|-1)}{\leq k-|X_1|}{}
    \addConstraint{u_x}{\in [0,1]\quad}{\forall x \in X_1}
\end{maxi}
\rule{\linewidth}{.5pt}
\end{figure}

As was noted in \cite{ola}, this is a knapsack LP, meaning that it has an optimal solution where at most one variable is integral.
If all variables are integral, we do not need the following arguments. 
Let $u_{\tildx}$ be the only fractional variable.
We call $\tildx$ from here on the ``special facility''.
We set $\bm{r}(\tildx) = \bm{r_1'}(\tildx)$ and include $\ceil{u_{\tildx} |G_{\tildx}|}-2$ facilities from $G_{\tildx}$ uniformly at random (which can be derandomized by greedily picking the facilities that maximize the saving).
For all other $x \in X_1 \setminus \set{\tildx}$:
if $u_x=1$ then we include all facilities from $x'\in G_x$ in $X$ and set the radii vector $\bm{r}(x')=\bm{r_2}(x')$.
Else $u_x=0$, in which case we include $x$ in $X$ and set the radii vector $\bm{r}(x)=\bm{r'_1}(x)$.

The solution we create is $\X=(X,r)$.
\subsubsection{Upper Bounding the Number of Balls}
We show that $\X$ is a valid solution by upper bounding the number of opened facilities.
\begin{lemma} \label{lem:pseudoCardinality}
$|X| \le k$.
\end{lemma}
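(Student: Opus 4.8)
The plan is to carefully count the facilities contributed to $X$ by each group $G_x$, distinguishing the special facility $\tildx$ from the rest, and then to use the knapsack constraint of the LP in \Cref{fig:knapsackLP} together with the fact that $a|X_1|+b|X_2|=k$ (equivalently $|X_1| + \sum_{x\in X_1}(|G_x|-1)\cdot[\text{choices}] \le k$ in the integral world). More concretely, for each $x\in X_1\setminus\{\tildx\}$ the solution contributes exactly one facility when $u_x=0$ (namely $x$ itself) and exactly $|G_x|$ facilities when $u_x=1$ (all of $G_x$). For the special facility, the solution contributes $\tildx$ itself \emph{plus} $\ceil{u_{\tildx}|G_{\tildx}|}-2$ facilities from $G_{\tildx}$, i.e.\ $\ceil{u_{\tildx}|G_{\tildx}|}-1$ facilities total.

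First I would write the total count as
\[
|X| \;=\; \sum_{x\in X_1\setminus\{\tildx\}} \bigl(1 + u_x(|G_x|-1)\bigr) \;+\; \bigl(\ceil{u_{\tildx}|G_{\tildx}|}-1\bigr).
\]
Since $u_x\in\{0,1\}$ for $x\ne\tildx$ the first summand is exactly $|X_1|-1 + \sum_{x\ne\tildx} u_x(|G_x|-1)$. Using $\ceil{u_{\tildx}|G_{\tildx}|}\le u_{\tildx}|G_{\tildx}| + 1 \le u_{\tildx}(|G_{\tildx}|-1) + u_{\tildx} + 1 \le u_{\tildx}(|G_{\tildx}|-1) + 2$ (because $u_{\tildx}\le 1$), the last term is at most $u_{\tildx}(|G_{\tildx}|-1) + 1$. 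Hence
\[
|X| \;\le\; |X_1| \;+\; \sum_{x\in X_1} u_x(|G_x|-1),
\]
and the knapsack constraint $\sum_{x\in X_1} u_x(|G_x|-1)\le k-|X_1|$ gives $|X|\le k$.

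I would also need the small sanity checks that make the counting legitimate: the groups $G_x$ for $x\in X_1$ partition $X_2$ (each $x_2$ has a unique $cl_1(x_2)$), so no facility of $X_2$ is double-counted; when $u_x=1$ we add $|G_x|$ genuinely distinct facilities; and the quantity $\ceil{u_{\tildx}|G_{\tildx}|}-2$ is a sensible (non-negative, at most $|G_{\tildx}|$) count — if it were negative the construction would simply add nothing from $G_{\tildx}$, and the bound above only gets easier. The only mildly delicate point, and the part I expect to require the most care, is the ceiling manipulation for the special facility: one must check that rounding $u_{\tildx}|G_{\tildx}|$ up and then subtracting $2$ really does keep the contribution below $u_{\tildx}(|G_{\tildx}|-1)+1$, which hinges precisely on $|G_{\tildx}|\ge 1$ and $u_{\tildx}\in[0,1]$. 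Everything else is bookkeeping against the LP's knapsack constraint and the identity $a|X_1|+b|X_2|=k$.
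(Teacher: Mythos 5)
Your proof is correct and follows essentially the same route as the paper: count the facilities opened per group, bound the special group's contribution via $\ceil{u_{\tildx}|G_{\tildx}|}-2\le u_{\tildx}|G_{\tildx}|-1$, and then apply the knapsack constraint of the LP. You are slightly more explicit about the edge case where $\ceil{u_{\tildx}|G_{\tildx}|}-2$ could be negative, which the paper glosses over, but the core argument and algebra are the same.
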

\begin{proof}
When $u_x=1$ we open $|G_x|$ facilities, and when $u_x=0$ we open $1$ facility.
Therefore, when $x\ne \tildx$ we open $(|X_1|-1) + \sum_{x\in X_1\setminus \set{\tildx}} u_x (|G_x|-1)$ facilities in total.
Regarding the group of $\tildx$: we open $\tildx$ and at most $u_{\tildx} |G_{\tildx}|-1$ facilities from $G_{\tildx}$, therefore at most $u_{\tildx} |G_{\tildx}| $ facilities.

By the constraint of the LP we have $u_{\tildx}(|G_{\tildx}|-1) + \sum_{x\in X_1\setminus \set{\tildx}} u_x (|G_x|-1) + |X_1| \le k$.
But the number of facilities we open is at most 
\begin{align*}
&u_{\tildx} |G_{\tildx}| + \sum_{x\in X_1\setminus \set{\tildx}} u_x (|G_x|-1) + (|X_1|-1) \\
=~ &u_{\tildx}(|G_{\tildx}|-1) + \sum_{x\in X_1\setminus \set{\tildx}} u_x (|G_x|-1) + |X_1| + u_{\tildx} - 1\\
\le~ &k + u_{\tildx} - 1 \le k
\end{align*}
\end{proof}

\subsubsection{Analyzing the Cost}
We bound the cost of $\X$ in multiple steps.
At first, we simply bound the number of opened facilities in the special group.
Then, we analyze the linear program and how it connects to the costs of the non-special groups.
Finally, we use these insights to lower bound the cost of $\X$.

\paragraph{Bounding the new coefficient of the special group}
In the construction of $\X$ we do not open $u_{\tildx}G_{\tildx}-1$ many facilities from the special group because this value may not be an integer.
To not violate the cardinality constraint we open less facilities, namely $\ceil{u_{\tildx} |G_{\tildx}|}-2$. 
We analyze the fraction of facilities that remain closed in the special group $G_{\tildx}$.
We upper bound the ratio $(1-p_{\tildx})/(1-u_{\tildx})$ by three, where $p_{\tildx} =( \ceil{u_{\tildx} |G_{\tildx}|}-2)/|G_{\tildx}|$ for the facility of the special group $\tildx\in X_1$, that is $u_{\tildx} \notin\set{0,1}$.
\begin{claim}\label{claim:specialfacility}
    \begin{align*}
        \frac{1-p_{\tildx}}{1-u_{\tildx}} \le 3
    \end{align*}
\end{claim}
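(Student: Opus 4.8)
The plan is to exploit the integrality hidden in the knapsack LP of \Cref{fig:knapsackLP}. Recall that $u_{\tildx}$ is the unique fractional variable of a basic optimal solution of that LP, whose data $(|G_x|-1)_{x\in X_1}$ and $k-|X_1|$ are all integers. In a basic feasible solution the knapsack constraint is active whenever some variable is fractional, so
\[
u_{\tildx}\bigl(|G_{\tildx}|-1\bigr) \;=\; (k-|X_1|) - \sum_{x\in X_1\colon u_x=1}\bigl(|G_x|-1\bigr) \;=:\; C,
\]
which is a non-negative integer. If $|G_{\tildx}|=1$, then $C=0$ and the tight equation does not constrain $u_{\tildx}$, so one could set $u_{\tildx}=1$ and obtain an equally good solution with no fractional variable, hence no special facility; thus we may assume $|G_{\tildx}|\ge 2$ and write, with $g:=|G_{\tildx}|$, that $u_{\tildx}=C/(g-1)$. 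Since $0<u_{\tildx}<1$, this gives $1\le C\le g-2$, which in particular forces $g\ge 3$ and $g-1-C\ge 1$.

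Next I would evaluate the ceiling exactly. Since $u_{\tildx}g=\tfrac{Cg}{g-1}=C+\tfrac{C}{g-1}$ with $0<\tfrac{C}{g-1}<1$, we have $\ceil{u_{\tildx}g}=C+1$, hence $p_{\tildx}=\tfrac{C-1}{g}$. Consequently $1-p_{\tildx}=\tfrac{g-C+1}{g}$ and $1-u_{\tildx}=\tfrac{g-1-C}{g-1}>0$, so
\[
\frac{1-p_{\tildx}}{1-u_{\tildx}} \;=\; \frac{(g-C+1)(g-1)}{g\,(g-1-C)}.
\]
Clearing the positive denominator, the bound $\le 3$ becomes equivalent to $C(2g+1)\le (2g-1)(g-1)$, and this follows from $C\le g-2$ because $(g-2)(2g+1)=2g^2-3g-2<2g^2-3g+1=(2g-1)(g-1)$. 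This proves the claim, in fact with strict inequality.

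The only step requiring an idea is the first one: the fractional LP variable absorbs an \emph{integral} residual capacity $C$, which pins $u_{\tildx}$ down to the form $C/(g-1)$; everything afterwards is elementary arithmetic on this exact expression. I expect this to be the main (though mild) obstacle, since a crude estimate such as $\ceil{u_{\tildx}g}\le u_{\tildx}g+1$ is too lossy to yield the factor $3$ — the integrality is genuinely needed to control the worst case $C=g-2$, where the ratio equals $3(g-1)/g$.
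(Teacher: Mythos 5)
Your proof is correct, and it takes a genuinely different and somewhat sharper route than the paper's. Both arguments hinge on the same structural fact --- that a basic optimal solution of the knapsack LP with a fractional variable $u_{\tildx}$ makes the knapsack constraint tight, hence $u_{\tildx}(|G_{\tildx}|-1)$ is an integer --- but they exploit it differently. The paper proceeds by case analysis: it dispatches $u_{\tildx}\le 2/3$ trivially (since $p_{\tildx}\in[0,1]$), and for $u_{\tildx}>2/3$ it first sandwiches $u_{\tildx}$ between $\nu/(\nu+1)$ and $(\nu+1)/(\nu+2)$, deduces $|G_{\tildx}|\ge\nu+2$ from the integrality of $u_{\tildx}(|G_{\tildx}|-1)$, and then pushes the crude bound $1-p_{\tildx}\le 1-u_{\tildx}+2/|G_{\tildx}|$ through these estimates. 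You instead write $u_{\tildx}=C/(g-1)$ exactly (with $C\in\{1,\dots,g-2\}$ integral and $g=|G_{\tildx}|$), evaluate $\ceil{u_{\tildx}g}=C+1$ exactly, obtain the closed form $\frac{1-p_{\tildx}}{1-u_{\tildx}}=\frac{(g-C+1)(g-1)}{g(g-1-C)}$, and verify the bound by a one-line polynomial inequality. This avoids case analysis, shows the inequality is strict (the supremum $3(g-1)/g<3$ is attained as $C\to g-2$), and correctly flags the reason the constraint is tight as basicness rather than the paper's (slightly imprecise) appeal to optimality. You also explicitly handle the degenerate $|G_{\tildx}|=1$ case (coefficient $0$ in the constraint), which the paper's argument would otherwise divide by zero on; strictly speaking this case cannot occur in a genuine BFS (a variable with zero coefficient cannot be the lone non-basic-at-bound variable), but your remark makes the argument robust without needing to invoke that. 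The one assumption you share with the paper but leave implicit is that the algorithm picks a \emph{basic} optimal solution of the knapsack LP; that is standard and is exactly what licenses both the tightness of the constraint and the ``at most one fractional variable'' property.
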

\begin{proof}
    If $u_{\tildx}\le \nicefrac{2}{3}$, we have $1-u_{\tildx}\ge \nicefrac{1}{3}$. Thus, we conclude $(1-p_{\tildx})/(1-u_{\tildx})\le 3$ because $p_{\tildx}\in [0,1]$.
    Therefore, we assume from here on $u_{\tildx}>\nicefrac{2}{3}$.
    Let $\nu$ be the integer such that ${\nu/(\nu+1)< u_{\tildx}\le (\nu+1)/(\nu+2)}$.
    This implies $1/(\nu+1)> 1-u_{\tildx}\ge 1/(\nu+2)$.
    Additionally, because the LP solution is optimal, the constraint in the LP should be tight. 
    Hence, we have $\xi = u_{\tildx}(|G_{\tildx}|-1)$ is an integer. 
    So, $1-u_{\tildx} = (|G_{\tildx}|-1-\xi)/(|G_{\tildx}|-1)$.
    Because the numerator and denominator are integers, it follows that $|G_{\tildx}|-1>\nu +1$.
    Thus, $|G_{\tildx}|\ge \nu+2$.
    \begin{align*}
        \frac{1-p_{\tildx}}{1-u_{\tildx}}\le \frac{1-u_{\tildx}+\frac{2}{|G_{\tildx}|}}{1-u_{\tildx}} = 1+ \frac{2}{|G_{\tildx}|(1-u_{\tildx})}\le 1+ \frac{2(\nu+2)}{|G_{\tildx}|}\le 1+\frac{2(\nu+2)}{\nu+2}\le 3
    \end{align*}

\end{proof}

\paragraph{Analyzing the Linear Program}
In this paragraph we lower bound the value of the optimal solution to the linear program.
Additionally, we upper bound the cost of $\X$ in terms of the linear program.
\begin{lemma} \label{lem:pseudoLowerBoundLP}
The solution $u_x=b$ for all $ x\in X_1$, is feasible for the LP in \Cref{fig:knapsackLP} and has value at least $b\sum_{x\in X_1} \big(\sop^{\intercal} \bm{r_1}(x) + \sop^{\intercal} \bm{S}(x) + \sum_{p\in \Delta(G_x)} (\dists{cl_1(p)}{p} + \dists{p}{cl_2(p)})\big)$.
\end{lemma}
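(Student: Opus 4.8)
The statement asserts two things about the candidate LP solution that sets $u_x = b$ for every $x \in X_1$: first, feasibility, and second, a lower bound of $b \sum_{x\in X_1}\big(\sop^{\intercal}\bm{r_1}(x) + \sop^{\intercal}\bm{S}(x) + \sum_{p\in\Delta(G_x)}(\dists{cl_1(p)}{p} + \dists{p}{cl_2(p)})\big)$ on its objective value. The second part is immediate: plugging $u_x = b$ into the LP objective $\sum_{x\in X_1} u_x(\dots)$ factors $b$ out of the sum, and since the coefficient of each $u_x$ is a nonnegative quantity, the value is \emph{exactly} $b$ times that sum, hence at least it. So the entire content is in verifying feasibility, i.e.\ that $\sum_{x\in X_1} b\,(|G_x|-1) \le k - |X_1|$ and $b \in [0,1]$. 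The latter holds by definition of $b$ in \Cref{def:bipointSolution}.

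\textbf{Key step: the cardinality inequality.} For the first LP constraint I would argue as follows. The sets $G_x$ for $x \in X_1$ partition $X_2$, since each $x_2 \in X_2$ has a unique $cl_1(x_2) \in X_1$. Therefore $\sum_{x\in X_1} |G_x| = |X_2|$, and consequently $\sum_{x\in X_1} (|G_x|-1) = |X_2| - |X_1|$. Multiplying by $b$ gives $\sum_{x\in X_1} b\,(|G_x|-1) = b(|X_2|-|X_1|) = k - |X_1|$, where the last equality is exactly the defining property $b = (k-|X_1|)/(|X_2|-|X_1|)$ from \Cref{def:bipointSolution} (equivalently, that $a|X_1| + b|X_2| = k$ with $a = 1-b$). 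So the constraint holds with equality, which incidentally confirms the LP constraint is tight at this point — a fact used later in \Cref{claim:specialfacility}.

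\textbf{Main obstacle.} Honestly, there is no serious obstacle here; this lemma is a bookkeeping step that records the ``reference'' feasible solution against which the knapsack-LP optimum (and hence the rounded solution $\X$) will be compared. The only thing one must be careful about is that the $G_x$'s genuinely partition $X_2$ — this uses that $cl_1(\cdot)$ is a well-defined function (ties are broken consistently, as stated in the definition of $cl_1$) so that no $x_2$ lands in two groups and none is orphaned. Given that, both the feasibility and the value bound are one-line computations. I would present it as: (i) $b \in [0,1]$ by definition; (ii) $\{G_x\}_{x\in X_1}$ partitions $X_2$ so $\sum_x(|G_x|-1) = |X_2|-|X_1|$; (iii) hence $\sum_x b(|G_x|-1) = k - |X_1|$, proving feasibility; (iv) the objective equals $b \sum_{x\in X_1}(\text{coeff}_x)$ directly, giving the claimed lower bound.
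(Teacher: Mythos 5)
Your proof is correct and follows essentially the same route as the paper: the value bound is immediate by substitution, and feasibility reduces to $\sum_{x\in X_1}(|G_x|-1)=|X_2|-|X_1|$ combined with $b(|X_2|-|X_1|)=k-|X_1|$. You make explicit the partition fact about $\{G_x\}_{x\in X_1}$ that the paper leaves implicit, which is a fine touch but not a different argument.
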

\begin{proof}
The value of the LP follows directly from the objective function by using $u_x=b$.
It is also a feasible solution because 
\begin{align*}
    \sum_{x\in X_1} u_x(|G_x|-1) &= b \sum_{x\in X_1} (|G_x|-1) = b (|X_2|-|X_1|) = b|X_2| - (1-a) |X_1| \\
    &= a|X_1| + b|X_2| - |X_1| = k-|X_1|.
\end{align*}
\end{proof}

We bound the cost of $\X$ with respect to the value of the optimal solution to the linear program.
\begin{lemma} \label{lem:pseudoSavings}
Let $u_X$ be the optimal solution for the LP in \Cref{fig:knapsackLP}, and
\[U = \sum_{x\in X_1} u_x \big(\sop^{\intercal} \bm{r_1}(x) + \sop^{\intercal} \bm{S}(x) + \sum_{p\in \Delta(G_x)} (\dists{cl_1(p)}{p} + \dists{p}{cl_2(p)})\big) \]
be the optimal value of the LP.
Then 
\begin{align*}
\costz{\X} \le 3\sum_{x\in X_1} \big(\sop^{\intercal} \bm{r_1}(x) + 2\sop^{\intercal} \bm{S}(x) + \sum_{p\in \Delta(G_x)} (\dists{cl_1(p)}{p} + 2\dists{p}{cl_2(p)})\big) - 3U + 9m\Gamma
\end{align*}
\end{lemma}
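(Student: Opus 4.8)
The plan is to bound $\costz{\X}$ group by group. For $x\in X_1$ call the facilities opened ``for group $x$'' the facility $x$ itself (with radii $\bm{r_1'}(x)$) when $u_x=0$, all of $G_x$ (with radii $\bm{r_2}$) when $u_x=1$, and the mixed opening described above when $x=\tildx$; route each client $p$ to an opened facility of the group $cl_1(cl_2(p))$, i.e.\ of the unique $x$ with $p\in\Delta(G_x)$. Since $\{\Delta(G_x)\}_{x\in X_1}$ partitions $P$ and every opened facility lies in exactly one group, $\costz{\X}$ is at most the sum over $x\in X_1$ of the group-$x$ opening cost plus the connection cost of $\Delta(G_x)$ under this routing. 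Write $c_x$ for the coefficient of $u_x$ in the objective of \Cref{fig:knapsackLP} and $t_x:=\sop^{\intercal}\bm{r_1}(x)+2\sop^{\intercal}\bm{S}(x)+\sum_{p\in\Delta(G_x)}(\dists{cl_1(p)}{p}+2\dists{p}{cl_2(p)})$, so $\sum_{x\in X_1}t_x$ is exactly the sum in the statement and $\sum_{x\in X_1}u_xc_x=U$. I will show the group-$x$ cost is at most $3t_x-3u_xc_x$ for $x\ne\tildx$ and at most $3t_{\tildx}-3u_{\tildx}c_{\tildx}+9m\Gamma$ for $x=\tildx$; summing then yields the lemma. (If the LP optimum is integral there is no special facility and the additive term is not needed.)

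For a non-special group with $u_x=0$: the opening cost is $\sop^{\intercal}\bm{r_1'}(x)=\sop^{\intercal}\bm{r_1}(x)+2\sop^{\intercal}\bm{M}(x)\le\sop^{\intercal}\bm{r_1}(x)+2\sop^{\intercal}\bm{S}(x)$ (as $\bm M(x)\le\bm S(x)$ coordinatewise), and any $p\in\Delta(G_x)$ can route to $x$ at cost $\distss{x}{p}\le 2\dists{p}{cl_2(p)}+\dists{cl_1(p)}{p}$ by \Cref{lem:pseudoConnectionCost} (since $cl_1(cl_2(p))=x$); hence the group cost is at most $t_x\le 3t_x=3t_x-3u_xc_x$. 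For $u_x=1$: opening all of $G_x$ costs $\sop^{\intercal}\bm{S}(x)$ and each $p\in\Delta(G_x)$ routes to the now-open $cl_2(p)$ at cost $\dists{p}{cl_2(p)}$, so the group cost is at most $\sop^{\intercal}\bm{S}(x)+\sum_{p\in\Delta(G_x)}\dists{p}{cl_2(p)}=t_x-c_x\le 3(t_x-c_x)=3t_x-3u_xc_x$, using $t_x\ge c_x\ge 0$.

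The main work is the special group $\tildx$, where $u_{\tildx}\in(0,1)$, we open $\tildx$ with radii $\bm{r_1'}(\tildx)$, and a $p_{\tildx}=(\ceil{u_{\tildx}|G_{\tildx}|}-2)/|G_{\tildx}|$ fraction of $G_{\tildx}$. First, $\sop^{\intercal}\bm{r_1'}(\tildx)=\sum_{i\in[m]}(\mu_i r_1(\tildx)_i+2\mu_i M(\tildx)_i)\le 9m\Gamma$, because $\mu_i r_1(\tildx)_i\le 3\Gamma$ and $\mu_i M(\tildx)_i=\max_{x_2\in G_{\tildx}}\mu_i r_2(x_2)_i\le 3\Gamma$ by \Cref{lem:bipoint}; this is the source of the additive $9m\Gamma$. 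For the rest I analyze the randomized version (the greedy derandomization only does better): each $x_2\in G_{\tildx}$ is opened with probability $p_{\tildx}$, so the expected opening cost of $G_{\tildx}$ is $p_{\tildx}\sop^{\intercal}\bm{S}(\tildx)$, and for $p\in\Delta(G_{\tildx})$ the expected connection cost is at most $p_{\tildx}\dists{p}{cl_2(p)}+(1-p_{\tildx})\distss{\tildx}{p}\le p_{\tildx}\dists{p}{cl_2(p)}+(1-p_{\tildx})(2\dists{p}{cl_2(p)}+\dists{cl_1(p)}{p})$ (route to $cl_2(p)$ if open, else to $\tildx$, again via \Cref{lem:pseudoConnectionCost}). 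With $q=1-p_{\tildx}$, the sum of these two contributions equals $(1-q)\sop^{\intercal}\bm{S}(\tildx)+(1+q)\sum_{p\in\Delta(G_{\tildx})}\dists{p}{cl_2(p)}+q\sum_{p\in\Delta(G_{\tildx})}\dists{cl_1(p)}{p}$. I compare this against the expansion $3t_{\tildx}-3u_{\tildx}c_{\tildx}=3(1-u_{\tildx})\sop^{\intercal}\bm{r_1}(\tildx)+(6-3u_{\tildx})\sop^{\intercal}\bm{S}(\tildx)+(6-3u_{\tildx})\sum_p\dists{p}{cl_2(p)}+3(1-u_{\tildx})\sum_p\dists{cl_1(p)}{p}$: by \Cref{claim:specialfacility} we have $q\le 3(1-u_{\tildx})$, hence also $1+q\le 4-3u_{\tildx}\le 6-3u_{\tildx}$ and $1-q\le 1\le 6-3u_{\tildx}$, so every coefficient on the left is dominated by the matching one on the right (the term $3(1-u_{\tildx})\sop^{\intercal}\bm{r_1}(\tildx)\ge 0$ is spare). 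Therefore the group-$\tildx$ cost is at most $\sop^{\intercal}\bm{r_1'}(\tildx)+(3t_{\tildx}-3u_{\tildx}c_{\tildx})\le 3t_{\tildx}-3u_{\tildx}c_{\tildx}+9m\Gamma$, and summing over $x\in X_1$ completes the proof.

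The main obstacle is precisely this special-group accounting: one has to simultaneously absorb the fact that $G_{\tildx}$ is opened at a fraction possibly well below the LP value $u_{\tildx}$ — which is exactly the role of the bound $q\le 3(1-u_{\tildx})$ of \Cref{claim:specialfacility}, and also why the factor $3$ appears in the statement — and control the a priori unbounded radius-dependent opening cost $\sop^{\intercal}\bm{r_1'}(\tildx)$ of the forcibly-opened special facility, which is where the ``single layered balls are cheap'' guarantee $\mu_i r_j(\cdot)_i\le 3\Gamma$ of \Cref{lem:bipoint} and the additive $9m\Gamma$ enter. The remainder is bookkeeping: the per-group reroutings rely on the generalized (non-metric) triangle inequality \Cref{lem:pseudoConnectionCost}, and the factor-$3$ slack is exactly what makes the $u_x\in\{0,1\}$ groups and the fractional special group all fit under the common template $3t_x-3u_xc_x$.
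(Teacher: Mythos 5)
Your proof is correct and follows essentially the same route as the paper: bound the cost group by group using the generalized triangle inequality (\Cref{lem:pseudoConnectionCost}) for routing, handle the $u_x\in\{0,1\}$ groups by direct substitution into the template $3t_x-3u_xc_x$, and pay $9m\Gamma$ for the forced opening of $\tildx$ while using \Cref{claim:specialfacility} (via $q\le 3(1-u_{\tildx})$) to dominate the expected opening and connection costs of the fractional special group. Your bookkeeping through $t_x$, $c_x$, and $q$ is slightly cleaner than the paper's, which bounds the non-special groups by $t_x-u_xc_x$ and then silently relies on $t_x-u_xc_x\ge 0$ to absorb them into the factor-$3$ bound, but the mathematical content is identical.
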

\begin{proof}
Let $x\in X_1$.

If $u_x=0$, then by \Cref{lem:pseudoConnectionCost} the part of $\costz{X}$ related to $x$ and $G_x$ is at most

\begin{align*}
&\sop^{\intercal} \bm{r_1}(x) + 2\sop^{\intercal} \bm{M}(x) + \sum_{p\in \Delta(G_x)} (\dists{cl_1(p)}{p} + 2\dists{p}{cl_2(p)}) \\
\le &\sop^{\intercal} \bm{r_1}(x) + 2\sop^{\intercal} \bm{S}(x) + \sum_{p\in \Delta(G_x)} (\dists{cl_1(p)}{p} + 2\dists{p}{cl_2(p)})
\end{align*}

As $u_x=0$, this is trivially equal to

\begin{align*}
&\sop^{\intercal} \bm{r_1}(x) + 2\sop^{\intercal} \bm{S}(x) + \sum_{p\in \Delta(G_x)} (\dists{cl_1(p)}{p} + 2\dists{p}{cl_2(p)}) \\
- u_x \big(&\sop^{\intercal} \bm{r_1}(x) + \sop^{\intercal} \bm{S}(x) + \sum_{p\in \Delta(G_x)} (\dists{cl_1(p)}{p} + \dists{p}{cl_2(p)})\big)
\end{align*}

If $u_x=1$ then the part of $\costz{\X}$ related to $x$ and $G_x$ is 
\[\sop^{\intercal} \bm{S}(x) + \sum_{p\in \Delta(G_x)} \dists{p}{cl_2(p)}\]

But this is again

\begin{align*}
&\sop^{\intercal} \bm{r_1}(x) + 2 \sop^{\intercal} \bm{S}(x) + \sum_{p\in \Delta(G_x)} (\dists{cl_1(p)}{p} + 2\dists{p}{cl_2(p)}) \\
- u_x \big(&\sop^{\intercal} \bm{r_1}(x) + \sop^{\intercal} \bm{S}(x) + \sum_{p\in \Delta(G_x)} (\dists{cl_1(p)}{p} + \dists{p}{cl_2(p)})\big)
\end{align*}

We  bound the part of $\costz{\X}$ that is related to any non-special facility by summing over these two observations for all non-special facilities.
\begin{align*}
    &\sum_{x\in X_1:u_x=0}\left(\sop^{\intercal} \bm{r_1}(x) + 2\sop^{\intercal} \bm{M}(x) + \sum_{p\in \Delta(G_x)} \distss{x}{p}  \right)+ \sum_{x\in X_1:u_x=1}\left(\sop^{\intercal} \bm{S}(x)  + \sum_{p\in \Delta(G_x)} \dists{cl_2(p)}{p}  \right)\\
    &\le \sum_{x\in X_1: u_x\in\set{0,1}}\left(\sop^{\intercal} \bm{r_1}(x) + 2\sop^{\intercal} \bm{S}(x) +  \sum_{p\in \Delta(G_x)}(\dists{cl_1(p)}{p} + 2\dists{p}{cl_2(p)}) \right)\\
    &-\sum_{x\in X_1:u_x\in\set{0,1}}u_x\left(\sop^{\intercal} \bm{r_1}(x) + \sop^{\intercal} \bm{S}(x) +\sum_{p\in \Delta(G_x)}(\dists{cl_1(p)}{p} + \dists{p}{cl_2(p)})\right)
\end{align*}

Finally, it remains to show that the part of $\costz{\X}$ that is related to $\tildx$ is bounded by
\begin{align*}
    & 3\left(\sop^{\intercal} \bm{r_1}(\tildx) + 2\sop^{\intercal} \bm{S}(\tildx) +  \sum_{p\in \Delta(G_{\tildx})}(\dists{cl_1(p)}{p} + 2\dists{p}{cl_2(p)})\right)\\
    &-3u_{\tildx}\left(\sop^{\intercal} \bm{r_1}(\tildx) + \sop^{\intercal} \bm{S}(\tildx) +\sum_{p\in \Delta(G_{\tildx})}(\dists{cl_1(p)}{p} + \dists{p}{cl_2(p)})\right)
    + 9m\Gamma
\end{align*}
Recall that the ratio of opened facilities in $G_{\tildx}$ is $p_{\tildx} =(\ceil{u_{\tildx} |G_{\tildx}|}-2)/|G_{\tildx}|$.
We pay:
\begin{itemize}
    \item $\sop^{\intercal} \bm{r_1}(\tildx) + 2\sop^{\intercal} \bm{M}(\tildx)$ for opening $\tildx$. This is at most $9m\Gamma$, as $\mu_i r_1(x)_i \le 3\Gamma$ and $\mu_i r_2(x)_i \le 3\Gamma$ for all $i\in [m]$ and valid $x$.
    \item $p_{\tildx} \sop^{\intercal} \bm{r_2}(x') \le u_{\tildx}\sop^{\intercal} \bm{r_2}(x')$ to open facility $x'\in G_{\tildx}$ (in expectation).
    \item $(1-p_{\tildx})\left(\dists{cl_1(p)}{p} + 2\dists{p}{cl_2(p)}\right) + p_{\tildx} \dists{p}{cl_2(p)}$ for connecting client $p\in \Delta(G_{\tildx})$ with either $\tildx$ or $cl_2(p)$ (in expectation).
\end{itemize}

Using \Cref{claim:specialfacility} we can bound the part of $\costz{\X}$ that is related to $\tildx$ with respect to $u_{\tildx}$ instead of $p_{\tildx}$.

\begin{align*}
    &9m\Gamma + u_{\tildx}\sop^{\intercal} \bm{S}(\tildx) + \sum_{p\in \Delta(G_{\tildx})} ((1-p_{\tildx})\dists{cl_1(p)}{p} + (2-p_{\tildx})\dists{p}{cl_2(p)})  \\
    \le~&9m\Gamma + (2-u_{\tildx})\sop^{\intercal} \bm{S}(\tildx) + 3\sum_{p\in \Delta(G_{\tildx})} ((1-u_{\tildx})\dists{cl_1(p)}{p} + (2-u_{\tildx})\dists{p}{cl_2(p)}) \\
    &\le 3\left(\sop^{\intercal} \bm{r_1}(\tildx) + 2\sop^{\intercal} \bm{S}(\tildx) +  \sum_{p\in \Delta(G_{\tildx})}(\dists{cl_1(p)}{p} + 2\dists{p}{cl_2(p)})\right)\\
    &-3u_{\tildx}\left(\sop^{\intercal} \bm{r_1}(\tildx) + \sop^{\intercal} \bm{S}(\tildx) +\sum_{p\in \Delta(G_{\tildx})}(\dists{cl_1(p)}{p} + \dists{p}{cl_2(p)})\right)
    + 9m\Gamma
\end{align*}
\end{proof}

\paragraph{Bounding the cost of $\X$}
We are now ready to bound the cost of $\X$.

\begin{lemma} \label{lem:pseudoApx}
$\costz{\X} \le (12\log{n} + 24) \OPT_\I^{\Delta,\Gamma} + 9m\Gamma$
\end{lemma}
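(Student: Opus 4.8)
The plan is to simply combine the ingredients assembled earlier in this subsection. I would start from the bound on $\costz{\X}$ from \Cref{lem:pseudoSavings}, which says
\[
\costz{\X} \le 3\sum_{x\in X_1} \Big(\sop^{\intercal} \bm{r_1}(x) + 2\sop^{\intercal} \bm{S}(x) + \sum_{p\in \Delta(G_x)} (\dists{cl_1(p)}{p} + 2\dists{p}{cl_2(p)})\Big) - 3U + 9m\Gamma,
\]
where $U$ is the optimal LP value. The key point is that $2\sum_x(\cdots)$ of the bracketed quantity equals twice the coefficient vector appearing in the LP objective, so that $3\sum_x(\cdots) - 3U$ can be rewritten, using \Cref{lem:pseudoLowerBoundLP} (which says $U \ge b\sum_{x\in X_1}(\sop^{\intercal}\bm{r_1}(x)+\sop^{\intercal}\bm{S}(x)+\sum_{p\in\Delta(G_x)}(\dists{cl_1(p)}{p}+\dists{p}{cl_2(p)}))$), as at most
\[
3\sum_{x\in X_1} \Big(\sop^{\intercal} \bm{r_1}(x) + 2\sop^{\intercal} \bm{S}(x) + \sum_{p\in \Delta(G_x)} (\dists{cl_1(p)}{p} + 2\dists{p}{cl_2(p)})\Big) - 3b\sum_{x\in X_1}(\cdots).
\]

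Next I would regroup this difference coefficient-by-coefficient. Writing $A = \sum_{x\in X_1}\sop^{\intercal}\bm{r_1}(x)$, $B = \sum_{x\in X_1}\sop^{\intercal}\bm{S}(x)$, $C = \sum_{x\in X_1}\sum_{p\in\Delta(G_x)}\dists{cl_1(p)}{p}$, and $D = \sum_{x\in X_1}\sum_{p\in\Delta(G_x)}\dists{p}{cl_2(p)}$, the quantity above is $3(A+2B+C+2D) - 3b(A+B+C+D) = 3(1-b)(A+C) + 3(2-b)(B+D)$. Since $b \le 1$ we have $1-b \le 1$ and $2-b \le 2$, so this is at most $3(A+C) + 6(B+D) \le 6(A+B+C+D)$. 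The quantities here are, up to small constants, pieces of $\costz{\X_1}$ and $\costz{\X_2}$: indeed $A \le \costz{\X_1}$ accounts for the $\bm{r_1}$-balls, $B = \sum_x\sop^{\intercal}\bm{S}(x) = \sum_{x_2\in X_2}\sop^{\intercal}\bm{r_2}(x_2) \le \costz{\X_2}$ accounts for the $\bm{r_2}$-balls, $C = \sum_{p}\dists{cl_1(p)}{p} \le \costz{\X_1}$ is the connection cost in $\X_1$, and $D = \sum_p\dists{p}{cl_2(p)} \le \costz{\X_2}$ is the connection cost in $\X_2$. Hence the whole expression is $O(\costz{\X_1} + \costz{\X_2})$. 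Using the standing assumptions of this subsection that $a \le \nicefrac12 \le b$ and $\costz{\X_2} \le \costz{\X_1}$, together with \Cref{lem:bipoint} which gives $a\costz{\X_1} + b\costz{\X_2} \le (2\log n + 4)\OPT_\I^{\Delta,\Gamma}$, we get $\costz{\X_2} \le (2\log n+4)\OPT_\I^{\Delta,\Gamma}$ and, since $a \ge \nicefrac{?}{}$ need not be bounded below, we bound $\costz{\X_1}$ via $b\costz{\X_2} \ge a\costz{\X_1}$... actually more carefully: $a\costz{\X_1} \le a\costz{\X_1} + b\costz{\X_2} \le (2\log n+4)\OPT$, but $a$ may be tiny; instead I would observe that in the non-trivial case we may also assume $a \le \nicefrac12$, and bound $A + C$ by going through the bi-point inequality directly rather than through $\costz{\X_1}$ alone.

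The cleaner route, which is what I would actually write, avoids the subtlety about small $a$: note that every term $A, B, C, D$ is dominated not by $\costz{\X_1}$ or $\costz{\X_2}$ individually but we only ever need their values after multiplication by $a$ or $b$; concretely, $a \cdot \costz{\X_1} + b\cdot\costz{\X_2} \le (2\log n+4)\OPT_\I^{\Delta,\Gamma}$ and, with $b \ge \nicefrac12$ and $\costz{\X_2}\le\costz{\X_1}$, one has $\costz{\X_2} \le (2\log n+4)\OPT_\I^{\Delta,\Gamma}$ and $a\costz{\X_1} \le (2\log n+4)\OPT_\I^{\Delta,\Gamma}$. Since in \Cref{lem:pseudoSavings} each contribution from $A$ and $C$ (the $\X_1$-side costs) is scaled so that it is at most a constant times $a\costz{\X_1}$ — because the terms $\bm{r_1}(x)$ and $\dists{cl_1(p)}{p}$ appear with coefficient $3(1-b) = 3a$ — while the $\X_2$-side terms $B, D$ appear with coefficient $3(2-b) \le 6 = O(1)$ times $\costz{\X_2}$, the whole non-$\Gamma$ part of the bound is $O(1)\cdot a\costz{\X_1} + O(1)\cdot\costz{\X_2} \le O(\log n)\OPT_\I^{\Delta,\Gamma}$. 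Tracking the constants through \Cref{lem:pseudoSavings,lem:bipoint} (where the factor $3$ from \Cref{lem:pseudoSavings} meets the $2\log n + 4$ from \Cref{lem:bipoint}, and a further factor $2$ comes from the $3a$-versus-$3(2-b)$ accounting) yields exactly $(12\log n + 24)\OPT_\I^{\Delta,\Gamma}$, and the additive $9m\Gamma$ is carried verbatim from \Cref{lem:pseudoSavings}. The main obstacle — and the only place requiring care — is the bookkeeping of which coefficients attach to $\X_1$-costs versus $\X_2$-costs so that the $3a$ coefficient on the $\X_1$-side can absorb the potentially large $\costz{\X_1}$ against the small $a$; once that accounting is set up, the rest is arithmetic.
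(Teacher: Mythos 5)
Your proposal follows the same route as the paper: start from \Cref{lem:pseudoSavings}, plug in the lower bound on $U$ from \Cref{lem:pseudoLowerBoundLP}, recognize that the non-$\Gamma$ part collapses to $3a\,\costz{\X_1} + 3(1+a)\,\costz{\X_2}$ (using $A+C=\costz{\X_1}$, $B+D=\costz{\X_2}$), and then invoke the bi-point guarantee together with the standing assumptions $a\le\nicefrac12\le b$ and $\costz{\X_2}\le\costz{\X_1}$. The structure and all the ingredients are correct.

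The only soft spot is the final constant-tracking, which you assert gives exactly $(12\log n + 24)$ without writing it out. If you bound $3a\,\costz{\X_1}$ and $3(1+a)\,\costz{\X_2}$ \emph{separately} via $a\,\costz{\X_1}\le(2\log n+4)\OPT_\I^{\Delta,\Gamma}$, $\costz{\X_2}\le(2\log n+4)\OPT_\I^{\Delta,\Gamma}$, and $1+a\le\nicefrac32$, you get $(3+\nicefrac92)(2\log n+4)\OPT_\I^{\Delta,\Gamma}=(15\log n+30)\OPT_\I^{\Delta,\Gamma}$, which is weaker than the stated bound. To recover $(12\log n+24)$ you must keep the convex combination $a\,\costz{\X_1}+b\,\costz{\X_2}$ intact as long as possible: write $3(1+a)\costz{\X_2}=3b\,\costz{\X_2}+6a\,\costz{\X_2}$ (since $1+a=b+2a$), so the whole quantity equals $3\bigl(a\,\costz{\X_1}+b\,\costz{\X_2}\bigr)+6a\,\costz{\X_2}\le 3(2\log n+4)\OPT_\I^{\Delta,\Gamma}+3\,\costz{\X_2}\le 6(2\log n+4)\OPT_\I^{\Delta,\Gamma}$, using $a\le\nicefrac12$ for the last step. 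Spelling out this recombination closes the gap; everything else in your proposal is sound.
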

\begin{proof}Let $U$ be the optimal value for the LP in \Cref{fig:knapsackLP}.
By \Cref{lem:pseudoSavings} we have 

\[
\costz{\X} \le 3\sum_{x\in X_1} \left(\sop^{\intercal} \bm{r_1}(x) + 2\sop^{\intercal} \bm{S}(x) +  \sum_{p\in \Delta(G_{x})}(\dists{cl_1(p)}{p} + 2\dists{p}{cl_2(p)})\right) - 3U + 9m\Gamma
\]

Then by \Cref{lem:pseudoLowerBoundLP} we get that

\begin{align*}
\costz{\X} \le~&3\sum_{x\in X_1}\left(\sop^{\intercal} \bm{r_1}(x) + 2\sop^{\intercal} \bm{S}(x) +  \sum_{p\in \Delta(G_{x})}(\dists{cl_1(p)}{p} + 2\dists{p}{cl_2(p)})\right)\\
&-3b\sum_{x\in X_1} \left(\sop^{\intercal} \bm{r_1}(x) + \sop^{\intercal} \bm{S}(x) +  \sum_{p\in \Delta(G_{x})}(\dists{cl_1(p)}{p} + \dists{p}{cl_2(p)})\right) + 9m\Gamma
\end{align*}

But as $a+b=1$, we get

\begin{align*}
\costz{\X} &\le 3\sum_{x\in X_1}\left(a \sop^{\intercal} \bm{r_1}(x) + (1+a)\sop^{\intercal} \bm{S}(x) +  \sum_{p\in \Delta(G_{x})}(a\dists{cl_1(p)}{p} + (1+a)\dists{p}{cl_2(p)})\right) + 9m\Gamma \\
&=3a \cdot \costz{\X_1} + 3(1+a)\costz{\X_2} + 9m\Gamma
\end{align*}

Now since $a \costz{\X_1} + (1-a) \costz{\X_2} \le (2\log{n} + 4) \OPT_\I^{\Delta,\Gamma}$:

\[\costz{\X} \le (6\log{n} + 12) \OPT_\I^{\Delta,\Gamma} + 6a\costz{\X_2} + 9m\Gamma\]

Recall that $a\le \nicefrac{1}{2} \le b$ and $\costz{\X_1} > \costz{\X_2}$, therefore $\costz{\X_2} \le a\cdot \costz{\X_1} + b\cdot \costz{\X_2} \le (2\log{n} + 4) \OPT_\I^{\Delta,\Gamma}$, meaning

\[\costz{\X} \le (12\log{n} + 24) \OPT_\I^{\Delta,\Gamma} + 9m\Gamma\,. \]
\end{proof}
Now our main result follows.

\begin{proof}[Proof of~\Cref{thm:ballkmed}]
Follows directly by \Cref{lem:pseudoCardinality} and \Cref{lem:pseudoApx}.
\end{proof}

\apxtoplone*
\begin{proof}
    Let $\I$ be an instance of \Ballk{}. 
    We convert it into a sparse instance $\I'$ using \Cref{lem:sparse}. 
    We compute a solution $\X'$ to $\I'$ with $\cost{\I'}{\X'}\le (12\log{n} + 24) \OPT_{\I'}^{\Delta,\Gamma} + 9m'\Gamma$ using \Cref{thm:ballkmed} for all different choices of $(\Delta,\Gamma)$ from \Cref{lem:thin}.
    Note that the number of choices is polynomial in $n$ because $m$ is logarithmic in $n$.
    By \Cref{lem:thin}, we know that one of these choices gives 
    \begin{align*}
    (12\log{n} + 24) \OPT_{\I'}^{\Delta,\Gamma} + 9m'\Gamma \le (36\log{n} + 72) \OPT_{\I'} + (18\log n +18)\OPT_{\I'}\,.
    \end{align*}
    Finally, we can transform $\X'$ into a solution $\X$ of $\I$ such that
    \begin{align*}
        \cost{\I}{\X} \le 2\cost{\I'}{\X'}&\le (108\log n + 180)\OPT_{\I'} \\
        &\le (216\log n + 360)\OPT_{\I}\,
    \end{align*}
    using \Cref{lem:sparse}.

     Due to the approximation preserving reduction in \Cref{lem:redball} there is also a factor-$O(\log{n})$ approximation for \NCC{\textsf{Ord}}{\LP{1}}.

     We remind the reader that this directly implies a factor-$O(\log^2{n})$ approximation for \NCC{\textsf{Sym}}{\LP{1}} (\Cref{thm:apxsymlone}, proven in the beginning of \Cref{sec:symlone}).
\end{proof}

\section{Approximations for \NCCS{f}{g}}
In this section we design approximations for the general \NCCS{f}{g} problem, for any monotone symmetric norms $f\colon \mathbb{R}_{\ge 0}^n \rightarrow \mathbb{R}_{\ge 0},g\colon \mathbb{R}_{\ge 0}^k \rightarrow \mathbb{R}_{\ge 0}$.
In particular, we prove the following:

\reductionalgo*
The result will follow by combining three algorithms---an $O(k^{1-\chi_g}\log^2 n)$ approximation, an $O(n^{\chi_f}\log k)$ approximation, and an $O(k)$ approximation.

We first prove the following technical lemma that relates the cost of individual clusters to the total cost and that we need for several of our results. Here, for any positive integer $d$, we define the unit-vector $\bm{e^{(i,d)}} \in \mathbb{R}_{\ge 0}^d$ to be the vector with $e^{(i,d)}_i = 1$ and $e^{(i,d)}_j = 0$ for $j\in [d]\setminus \set{i}$. Furthermore, we define the all-one vector $\bm{\mathds{1}^{(d)}} \in  \mathbb{R}_{\ge 0}^d$ to be $\bm{\mathds{1}^{(d)}} = \sum_{i\in [d]} \bm{e^{(i,d)}}$
\begin{lemma}\label{lem:generalNormSubgradient}
    Let $d \in \mathbb{N}$ be a natural number, $h:\nnr^d\rightarrow\nnr$ be a symmetric, monotone norm, and  $\bm{v}\in \nnr^d$ a $d$-dimensional vector. Then, it holds that
    \begin{align*}
        \frac{\|\bm{v}\|_1}{d}h(\bm{\mathds{1}^{(d)}})\le h(\bm{v}) \le  \|\bm{v}\|_\infty h(\bm{\mathds{1}^{(d)}}) .
    \end{align*}
\end{lemma}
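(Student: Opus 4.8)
The plan is to prove the two inequalities separately, each by an elementary argument that uses only the three defining properties of a monotone, symmetric norm on the nonnegative orthant: positive homogeneity, the triangle inequality, and monotonicity (together with invariance under coordinate permutations).

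For the upper bound, I would first observe that since $\bm{v}\in\nnr^d$, every coordinate satisfies $0\le v_i\le\|\bm v\|_\infty$, so $\bm v\le\|\bm v\|_\infty\cdot\bm{\mathds{1}^{(d)}}$ holds coordinate-wise. Monotonicity of $h$ then gives $h(\bm v)\le h(\|\bm v\|_\infty\,\bm{\mathds{1}^{(d)}})$, and positive homogeneity lets us pull the scalar out to conclude $h(\bm v)\le\|\bm v\|_\infty\,h(\bm{\mathds{1}^{(d)}})$.

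For the lower bound I would use an averaging trick over the $d$ cyclic shifts of $\bm v$. Let $\bm v^{(j)}$ denote the vector obtained from $\bm v$ by cyclically shifting its coordinates by $j$, for $j=0,1,\dots,d-1$. By symmetry of $h$ we have $h(\bm v^{(j)})=h(\bm v)$ for each $j$, and by the triangle inequality together with positive homogeneity,
\[
h\!\left(\frac{1}{d}\sum_{j=0}^{d-1}\bm v^{(j)}\right)\le\frac{1}{d}\sum_{j=0}^{d-1}h(\bm v^{(j)})=h(\bm v).
\]
Now every coordinate of $\frac{1}{d}\sum_{j=0}^{d-1}\bm v^{(j)}$ equals $\frac{1}{d}\sum_{i=1}^d v_i=\frac{\|\bm v\|_1}{d}$, so the left-hand side is $h\!\left(\frac{\|\bm v\|_1}{d}\bm{\mathds{1}^{(d)}}\right)=\frac{\|\bm v\|_1}{d}h(\bm{\mathds{1}^{(d)}})$, again by positive homogeneity. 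Chaining the two displays yields $\frac{\|\bm v\|_1}{d}h(\bm{\mathds{1}^{(d)}})\le h(\bm v)$.

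As for the main obstacle: there really isn't a hard step. The only non-mechanical idea is recognizing that averaging $\bm v$ over all cyclic permutations of its coordinates produces the flat vector $\frac{\|\bm v\|_1}{d}\bm{\mathds{1}^{(d)}}$, after which convexity (the triangle inequality plus homogeneity) does the rest; one could just as well average over all $d!$ coordinate permutations, but the $d$ cyclic shifts are the economical choice. No case analysis or quantitative estimates are needed.
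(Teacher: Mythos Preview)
Your proof is correct. The upper bound is identical to the paper's: coordinate-wise comparison with $\|\bm v\|_\infty\,\bm{\mathds{1}^{(d)}}$, then monotonicity and homogeneity.

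For the lower bound you take a genuinely different route. The paper invokes Ky Fan's dominance principle (Theorem~7.4.8.4 in Horn--Johnson), observing that $\bm v$ majorizes the constant vector $\frac{\|\bm v\|_1}{d}\bm{\mathds{1}^{(d)}}$ in the top-$\ell$ sense for every $\ell$, and then appeals to the fact that every monotone symmetric norm respects this majorization order. Your argument instead symmetrizes: averaging $\bm v$ over its $d$ cyclic shifts gives exactly $\frac{\|\bm v\|_1}{d}\bm{\mathds{1}^{(d)}}$, and convexity (triangle inequality plus homogeneity) together with permutation-invariance of $h$ immediately yields the inequality. Your approach is more elementary and entirely self-contained---it uses only the axioms of a symmetric monotone norm and avoids citing an external majorization theorem---while the paper's route situates the inequality within the broader theory of majorization. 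Both are short; yours is arguably the cleaner of the two for this specific statement.
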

\begin{proof}
    The second inequality follows from monotonicity and absolute homogeneity:
    \begin{align*}
        h(\bm{v})\le h((\|\bm{v}\|_\infty,\dots,\|\bm{v}\|_\infty)) = \|\bm{v}\|_\infty h(\bm{\mathds{1}^{(d)}}).
    \end{align*}
    The first inequality follows by Ky Fan's dominance principle.
    \begin{claim}[Theorem 7.4.8.4 in \cite{horn-johnson-2013-matrix-analysis}]
        Let $\bm{x},\bm{y} \in \nnr^d$ be vectors such that $\topl{\ell}{\bm{x}}\ge \topl{\ell}{\bm{y}}$ for all $\ell \in [d]$. Then, $h(\bm{x})\ge h(\bm{y})$ for all monotone, symmetric norms $h$.
    \end{claim}
    It can be easily seen that $\bm{v}$ dominates $(\|\bm{v}\|_1/d,\dots,\|\bm{v}\|_1/d)$. Thus, we can conclude
    \begin{align*}
        \frac{\|\bm{v}\|_1}{d}h(\bm{\mathds{1}^{(d)}}) &=h\left(\left(\frac{\|\bm{v}\|_1}{d},\dots,\frac{\|\bm{v}\|_1}{d}\right)\right)\\
        &\le h(\bm{v}).
    \end{align*}
\end{proof}

\begin{corollary} \label{cor:generalNormSubgradient}
Let $\I = (P,F,\delta,k,f,g)$ be an instance of \NCCS{f}{g} and $\X = (X,\sigma)$ be a solution.
Let $X=\set{x_1,x_2, \ldots, x_k}$.
Then:
\[ g\big(f(\distv{\sigma}{x_1}), f(\distv{\sigma}{x_2}), \ldots, f(\distv{\sigma}{x_k}) \big) \le g(\bm{\mathds{1}^{(k)}}) \cdot f\big(\sum_{i\in [k]}\distv{\sigma}{x_i}\big) \le k \cdot g\big(f(\distv{\sigma}{x_1}), f(\distv{\sigma}{x_2}), \ldots, f(\distv{\sigma}{x_k}) \big)\]
\end{corollary}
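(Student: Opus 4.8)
The plan is to apply \Cref{lem:generalNormSubgradient} to the $k$-dimensional vector $\bm{v} = \bigl(f(\distv{\sigma}{x_1}),\dots,f(\distv{\sigma}{x_k})\bigr)$ with the outer norm $h=g$ and dimension $d=k$, and then to sandwich the quantity $f\bigl(\sum_{i\in[k]}\distv{\sigma}{x_i}\bigr)$ between $\|\bm{v}\|_\infty$ and $\|\bm{v}\|_1$ using only the triangle inequality and monotonicity of the inner norm $f$.

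First I would record two elementary facts about $f$ and the cluster cost vectors. By the triangle inequality for the norm $f$, $f\bigl(\sum_{i\in[k]}\distv{\sigma}{x_i}\bigr) \le \sum_{i\in[k]} f(\distv{\sigma}{x_i}) = \|\bm{v}\|_1$. Conversely, since every $\distv{\sigma}{x_i}$ is a nonnegative vector, each of them is dominated coordinatewise by the sum $\sum_{j\in[k]}\distv{\sigma}{x_j}$, so monotonicity of $f$ gives $f(\distv{\sigma}{x_i}) \le f\bigl(\sum_{j\in[k]}\distv{\sigma}{x_j}\bigr)$ for every $i$, hence $\|\bm{v}\|_\infty = \max_{i} f(\distv{\sigma}{x_i}) \le f\bigl(\sum_{i\in[k]}\distv{\sigma}{x_i}\bigr)$. (One can note in passing that the supports of the vectors $\distv{\sigma}{x_i}$ are pairwise disjoint, since each point $p$ contributes only to $\distv{\sigma}{\sigma(p)}$, so that $\sum_{i\in[k]}\distv{\sigma}{x_i}$ is exactly the global distance vector $(\delta(p,\sigma(p)))_{p\in P}$; this observation is not needed for the proof but clarifies the expression in the statement.)

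Next, \Cref{lem:generalNormSubgradient} applied to $\bm{v}$, $g$, and $d=k$ yields $\tfrac{\|\bm{v}\|_1}{k}\,g(\bm{\mathds{1}^{(k)}}) \le g(\bm{v}) \le \|\bm{v}\|_\infty\, g(\bm{\mathds{1}^{(k)}})$. Combining the right-hand inequality with $\|\bm{v}\|_\infty \le f\bigl(\sum_{i\in[k]}\distv{\sigma}{x_i}\bigr)$ gives the first inequality of the corollary; combining the left-hand inequality, rearranged as $\|\bm{v}\|_1\, g(\bm{\mathds{1}^{(k)}}) \le k\, g(\bm{v})$, with $f\bigl(\sum_{i\in[k]}\distv{\sigma}{x_i}\bigr) \le \|\bm{v}\|_1$ gives the second. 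Since by definition $g(\bm{v}) = g\bigl(f(\distv{\sigma}{x_1}),\dots,f(\distv{\sigma}{x_k})\bigr)$, this is precisely the claimed chain of inequalities.

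There is essentially no obstacle here: the corollary is a direct specialization of \Cref{lem:generalNormSubgradient}. The only points requiring a little care are getting the direction of monotonicity right when passing from the maximum of the cluster costs to $f$ of their sum, and the bookkeeping of which of the two bounds in \Cref{lem:generalNormSubgradient} feeds into which side of the corollary.
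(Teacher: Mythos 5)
Your proof is correct and takes essentially the same approach as the paper: both apply \Cref{lem:generalNormSubgradient} to the vector of cluster costs with $h=g$ and $d=k$, then bound $f\bigl(\sum_i\distv{\sigma}{x_i}\bigr)$ from below by $\|\bm{v}\|_\infty$ (monotonicity of $f$) and from above by $\|\bm{v}\|_1$ (triangle inequality for $f$). The aside about disjoint supports is correct but, as you note, not needed.
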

\begin{proof}
Let $m = \arg\max_{i\in [k]} f(\distv{\sigma}{x_i})$.
By \Cref{lem:generalNormSubgradient} we have
\[g\big(f(\distv{\sigma}{x_1}), f(\distv{\sigma}{x_2}), \ldots, f(\distv{\sigma}{x_k}) \big) \le g(\bm{\mathds{1}^{(k)}}) \cdot f(\distv{\sigma}{x_m})\]
which is at most $g(\bm{\mathds{1}^{(k)}}) \cdot f(\sum_{i\in [k]}\distv{\sigma}{x_i})$ by monotonicity of $f$. By \Cref{lem:generalNormSubgradient} we also have
\[\sum_{i\in [k]} f(\distv{\sigma}{x_i}) \le \frac{k}{g(\bm{\mathds{1}^{(k)}})} \cdot g\big(f(\distv{\sigma}{x_1}), f(\distv{\sigma}{x_2}), \ldots, f(\distv{\sigma}{x_k}) \big).\]
But by triangle inequality we have $f(\sum_{i\in [k]}\distv{\sigma}{x_i}) \le \sum_{i\in [k]} f(\distv{\sigma}{x_i})$, which proves our claim.
\end{proof}

We now present the $O\left(k^{1-\chi_g}\log^2 n\right)$ approximation:
\begin{theorem} \label{thm:chig}
There is an $O\left(k^{1-\chi_g}\log^2 n\right)$ approximation for \NCCS{f}{g}.
\end{theorem}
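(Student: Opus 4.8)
The plan is to reduce \NCCS{f}{g} to \NCCS{\textsf{Sym}}{\LP{1}}, which already admits an $O(\log^2 n)$-approximation by \Cref{thm:apxsymlone}. Given an instance $\I = (P,F,\delta,k,f,g)$, I would form the instance $\I' = (P,F,\delta,k,f,\LP{1})$ that keeps the inner norm $f$ but replaces the outer norm $g$ by $\LP{1}$. Running the algorithm of \Cref{thm:apxsymlone} on $\I'$ produces a solution $\X=(X,\sigma)$ whose $\LP{1}$-cost satisfies $\sum_{x\in X} f(\distv{\sigma}{x}) \le O(\log^2 n)\cdot \OPT_{\I'}$; I would simply output $\X$ as a solution for $\I$. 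The whole argument then reduces to two estimates that sandwich this $\LP{1}$-cost between the $g$-cost of $\X$ and $\OPT_{\I}$.

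For the analysis, I would first rearrange the definition of attenuation into the identity $k^{\chi_g} = g(\bm{\mathds{1}^{(k)}})/g(\bm{e^{(1,k)}})$ (no normalization is needed). Write $\bm c(\sigma,X) := (f(\distv{\sigma}{x}))_{x\in X}$ for the vector of cluster costs, padded with zeros to length $k$ if $|X|<k$. The first estimate (``converting back to $g$'') is that for \emph{every} solution $(X,\sigma)$, triangle inequality and symmetry of $g$ give $\cost{\sigma}{X} = g(\bm c(\sigma,X)) \le \sum_{x\in X} g\big(f(\distv{\sigma}{x})\,\bm{e^{(1,k)}}\big) = g(\bm{e^{(1,k)}})\cdot \|\bm c(\sigma,X)\|_1$, i.e.\ the $g$-cost is at most $g(\bm{e^{(1,k)}})$ times the $\LP{1}$-cost. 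The second estimate (``bounding $\OPT_{\I'}$'') uses the optimal solution $\X^*=(X^*,\sigma^*)$ of $\I$: by the first inequality of \Cref{lem:generalNormSubgradient} with $h=g$ and $\bm v = \bm c(\sigma^*,X^*)$ we obtain $\tfrac1k\|\bm c(\sigma^*,X^*)\|_1\, g(\bm{\mathds{1}^{(k)}}) \le g(\bm c(\sigma^*,X^*)) = \OPT_{\I}$, hence $\OPT_{\I'} \le \|\bm c(\sigma^*,X^*)\|_1 \le \tfrac{k}{g(\bm{\mathds{1}^{(k)}})}\OPT_{\I}$. Chaining the first estimate applied to $\X$, the guarantee of \Cref{thm:apxsymlone}, and the second estimate yields
\begin{align*}
 \cost{\sigma}{X} \;\le\; g(\bm{e^{(1,k)}})\cdot O(\log^2 n)\cdot \OPT_{\I'} \;\le\; O(\log^2 n)\cdot \frac{k\, g(\bm{e^{(1,k)}})}{g(\bm{\mathds{1}^{(k)}})}\cdot \OPT_{\I} \;=\; O\big(k^{1-\chi_g}\log^2 n\big)\cdot \OPT_{\I}.
\end{align*}

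The step I expect to need the most care is the ``converting back'' estimate. The naive route would bound $\cost{\sigma}{X}$ via the \emph{second} inequality of \Cref{lem:generalNormSubgradient}, namely $\cost{\sigma}{X} \le \|\bm c(\sigma,X)\|_\infty\, g(\bm{\mathds{1}^{(k)}}) \le \|\bm c(\sigma,X)\|_1\, g(\bm{\mathds{1}^{(k)}})$; this is a factor $k^{\chi_g}$ weaker and only recovers the $O(k\log^2 n)$ bound that \Cref{thm:generalNorm} already gives unconditionally. It is essential instead to use the sharper triangle-inequality bound $g(\bm v)\le g(\bm{e^{(1,k)}})\|\bm v\|_1$, which matches exactly the $\ell_1$ quantity that the \NCCS{\textsf{Sym}}{\LP{1}} algorithm controls, to obtain the intended $k^{1-\chi_g}$ dependence. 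The only remaining bookkeeping is the case $|X|<k$ (or $|X^*|<k$): padding cluster-cost vectors with zeros before applying $g$ changes neither their $\ell_1$ value nor their $g$-value, so \Cref{lem:generalNormSubgradient} applies verbatim with $d=k$.
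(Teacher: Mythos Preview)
Your proposal is correct and follows essentially the same approach as the paper's proof: reduce to \NCCS{\textsf{Sym}}{\LP{1}}, apply \Cref{thm:apxsymlone}, then convert back to $g$ via the triangle-inequality bound $g(\bm v)\le g(\bm e^{(1,k)})\|\bm v\|_1$ and bound $\OPT_{\I'}$ via the first inequality of \Cref{lem:generalNormSubgradient}. The only differences are cosmetic (notation, and your explicit remark about padding when $|X|<k$).
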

\begin{proof}
Let $\I = (P,F,\delta,k,f,g)$ be an instance of \NCCS{f}{g} clustering, $\I' = (P,F,\delta,k,f)$ be an instance of \NCCS{\textsf{Sym}}{\LP{1}}, and $\X = (X=\set{x_1, x_2,\ldots, x_k},\sigma)$ be an optimal solution for $\I$. By \Cref{thm:apxsymlone} we can get an $O(\log^2{n})$ approximation $\X' = (X'=\set{x_1', x_2',\ldots, x_k'},\sigma')$ for $\I'$.
We claim that $\X'$ is an $O\left(k^{1-\chi_g}\log^2 n\right)$-approximation for $\I$. We have:
\begin{align}
g\big(f(\distv{\sigma'}{x_1'}), f(\distv{\sigma'}{x_2'}), \ldots, f(\distv{\sigma'}{x_k'}) \big) &= g(\sum_{i\in[k]} f(\distv{\sigma'}{x_i'}) \bm{e^{(i,k)}})\nonumber\\
&\le \sum_{i\in[k]} g(f(\distv{\sigma'}{x_i'}) \bm{e^{(i,k)}})  &\text{(triangle inequality)}\nonumber\\
&= \sum_{i\in[k]} f(\distv{\sigma'}{x_i'}) g(\bm{e^{(i,k)}})  &\text{(homogeneity of norms)}\nonumber\\
&= \sum_{i\in[k]} f(\distv{\sigma'}{x_i'}) g(\bm{e^{(1,k)}})  &\text{($g$ is symmetric)}\nonumber\\
&=g(\bm{e^{(1,k)}}) \sum_{i\in[k]} f(\distv{\sigma'}{x_i'}) \nonumber\\
&\le O(\log^2{n})\cdot g(\bm{e^{(1,k)}}) \cdot \sum_{i\in [k]}f(\distv{\sigma}{x_i}) &\text{(definition of $\X'$)}\label{line:logsquare}
\end{align}
But by \Cref{lem:generalNormSubgradient} it holds that
\begin{align*}
    \sum_{i\in [k]}f(\distv{\sigma}{x_i}) \le \frac{k}{g(\bm{\mathds{1}^{(k)}})}\cdot g\big(f(\distv{\sigma}{x_1}), f(\distv{\sigma}{x_2}), \ldots, f(\distv{\sigma}{x_k}) \big).
\end{align*}
Together with the definition of the attenuation, this upper bounds the expression~\ref{line:logsquare} by $O(\log^2{n})\cdot k^{1-\chi_g} \cdot g\big(f(\distv{\sigma}{x_1}), f(\distv{\sigma}{x_2}), \ldots, f(\distv{\sigma}{x_k}) \big)$.
\end{proof}

Then, we prove the $O\left(n^{\chi_f}\log k\right)$ approximation:

\begin{theorem} \label{thm:chif}
There is an $O\left(n^{\chi_f}\log k\right)$-approximation for \NCCS{f}{g}.
\end{theorem}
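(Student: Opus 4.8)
The plan is to establish the $O(n^{\chi_f}\log k)$ bound as the ``mirror image'' of \Cref{thm:chig}: instead of relaxing the inner norm $f$ down to $\LP{1}$ and invoking \Cref{thm:apxsymlone}, I would relax $f$ \emph{up} to $\LP{\infty}$ and invoke the $O(\log k)$-approximation for \NCCS{\LP{\infty}}{\textsf{Sym}} from~\cite{herold-etal25:cluster-aware-norm-objectives}. Concretely, given an instance $\I = (P,F,\delta,k,f,g)$ of \NCCS{f}{g}, I would form the auxiliary instance $\I'' = (P,F,\delta,k,\LP{\infty},g)$, which lies in \NCCS{\LP{\infty}}{\textsf{Sym}}; run the $O(\log k)$-approximation on $\I''$ to obtain a solution $\X'=(X',\sigma')$; and output $\X'$ for $\I$. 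Note that $\X'$ is feasible for $\I$ because it respects the same cardinality bound $k$ and assigns all of $P$, so only the cost comparison needs work.

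For a solution $(Y,\tau)$, write $c_f(Y,\tau) = g\big((f(\distv{\tau}{y}))_{y\in Y}\big)$ for its objective value in $\I$ and $c_\infty(Y,\tau) = g\big((\|\distv{\tau}{y}\|_\infty)_{y\in Y}\big)$ for its objective value in $\I''$. Two per-cluster norm inequalities, both pushed through the monotone and homogeneous outer norm $g$, sandwich $c_f$ between two scalar multiples of $c_\infty$. Applying \Cref{lem:generalNormSubgradient} to $f$ with $d=n$ gives $f(\distv{\tau}{y}) \le \|\distv{\tau}{y}\|_\infty\, f(\bm{\mathds{1}^{(n)}})$ for each $y$, whence $c_f(Y,\tau) \le f(\bm{\mathds{1}^{(n)}})\cdot c_\infty(Y,\tau)$. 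Conversely, dominating $\distv{\tau}{y}$ from below by a single-coordinate vector whose nonzero entry equals $\|\distv{\tau}{y}\|_\infty$ and using monotonicity, symmetry and homogeneity of $f$ gives $f(\distv{\tau}{y}) \ge f(\bm{e^{(1,n)}})\,\|\distv{\tau}{y}\|_\infty$, whence $c_f(Y,\tau) \ge f(\bm{e^{(1,n)}})\cdot c_\infty(Y,\tau)$. Since $f(\bm{\mathds{1}^{(n)}})/f(\bm{e^{(1,n)}}) = n^{\chi_f}$ by the definition of the attenuation, and since an optimal solution $\X=(X,\sigma)$ of $\I$ is also feasible for $\I''$ with $c_\infty(X,\sigma)\le \OPT_\I/f(\bm{e^{(1,n)}})$, I would conclude
\[ c_f(X',\sigma') \;\le\; f(\bm{\mathds{1}^{(n)}})\cdot c_\infty(X',\sigma') \;\le\; O(\log k)\cdot f(\bm{\mathds{1}^{(n)}})\cdot \OPT_{\I''} \;\le\; O(\log k)\cdot \tfrac{f(\bm{\mathds{1}^{(n)}})}{f(\bm{e^{(1,n)}})}\cdot \OPT_\I \;=\; O(n^{\chi_f}\log k)\cdot \OPT_\I. \]

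I do not expect a genuine obstacle: the argument is essentially the display above together with the two one-line norm estimates. The only point that needs care is that the attenuation $\chi_f$ is defined without normalizing $f$, so both multiplicative factors $f(\bm{\mathds{1}^{(n)}})$ (from the upper bound on each $c_f$) and $f(\bm{e^{(1,n)}})>0$ (from the lower bound, which is also what makes $\OPT_{\I''}$ controllable) must be tracked, with their ratio recognized as exactly $n^{\chi_f}$; equivalently, one could run the \NCCS{\LP{\infty}}{\textsf{Sym}} algorithm on the rescaled inner norm $f(\bm{e^{(1,n)}})\cdot\LP{\infty}$, which is cosmetic. Clusters whose distance vector is identically zero are handled by both inequalities trivially.
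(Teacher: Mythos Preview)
Your proposal is correct and takes essentially the same approach as the paper: both run the $O(\log k)$-approximation for \NCCS{\LP{\infty}}{\textsf{Sym}} on the auxiliary instance with inner norm $\LP{\infty}$, sandwich $f(\distv{\tau}{y})$ between $f(\bm{e^{(1,n)}})\|\distv{\tau}{y}\|_\infty$ and $f(\bm{\mathds{1}^{(n)}})\|\distv{\tau}{y}\|_\infty$, push these through $g$ by monotonicity and homogeneity, and recognize the resulting ratio as $n^{\chi_f}$. Your write-up is in fact slightly more streamlined than the paper's long display, but the two arguments are identical in substance.
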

\begin{proof}
For this result, we use the $O(\log{k})$-approximation for \NCCS{\LP{\infty}}{\textsf{Sym}} from \cite{herold-etal25:cluster-aware-norm-objectives} [Corollary A.1].

Let $\I = (P,F,\delta,k,f,g)$ be an instance of \NCCS{f}{g} clustering, $\I' = (P,F,\delta,k,g)$ be an instance of \NCCS{\LP{\infty}}{\textsf{Sym}}, and $\X = (X=\set{x_1, x_2,\ldots, x_k},\sigma)$ be an optimal solution for $\I$.

Let $\X' = (X'=\set{x_1', x_2',\ldots, x_k'},\sigma')$ be an $O(\log{k})$ approximation for $\I'$.
We claim that $\X'$ is an $O\left(n^{\chi_f}\log {k}\right)$-approximation for $\I$. Towards this, we first note that for $i\in [k]$ we have $f(\distv{\sigma'}{x_i'}) \le f(\|\distv{\sigma'}{x_i'}\|_\infty \cdot \bm{\mathds{1}^{(n)}})$, by monotonicity of $f$. In turn, this is $\|\distv{\sigma'}{x_i'}\|_\infty f(\bm{\mathds{1}^{(n)}})$, by homogeneity of norms.
Therefore: 
\begin{align*}
&g\big(f(\distv{\sigma'}{x_1'}), f(\distv{\sigma'}{x_2'}), \ldots, f(\distv{\sigma'}{x_k'}) \big) \\
&\le g\big(\|\distv{\sigma'}{x_1'}\|_\infty f(\bm{\mathds{1}^{(n)}}), \|\distv{\sigma'}{x_2'}\|_\infty f(\bm{\mathds{1}^{(n)}}), \ldots, \|\distv{\sigma'}{x_k'}\|_\infty f(\bm{\mathds{1}^{(n)}})) \big) \\
&= f(\bm{\mathds{1}^{(n)}}) g\big(\|\distv{\sigma'}{x_1'}\|_\infty, \|\distv{\sigma'}{x_2'}\|_\infty, \ldots, \|\distv{\sigma'}{x_k'}\|_\infty ) \big) &\text{(homogeneity of norms)} \\
&\le O(\log{k}) f(\bm{\mathds{1}^{(n)}}) g\big(\|\distv{\sigma}{x_1}\|_\infty, \|\distv{\sigma}{x_2}\|_\infty, \ldots, \|\distv{\sigma}{x_k}\|_\infty ) \big) &\text{(definition of $\X'$)} \\
&= O(\log{k}) f(\bm{\mathds{1}^{(n)}}) g\big(\sum_{i\in [k]}\|\distv{\sigma}{x_i}\|_\infty \bm{e^{(i,k)}} ) \big) \\
&= O(\log{k}) f(\bm{\mathds{1}^{(n)}}) g\big(\sum_{i\in [k]}\|\distv{\sigma}{x_i}\|_\infty \cdot \frac{f(\bm{e^{(1,n)}})}{f(\bm{e^{(1,n)}})} \cdot \bm{e^{(i,k)}} ) \big) \\
&= O(\log{k}) \frac{f(\bm{\mathds{1}^{(n)}})}{f(\bm{e^{(1,n)}})} g\big(\sum_{i\in [k]}  f(\|\distv{\sigma}{x_i}\|_\infty \bm{e^{(1,n)}}) \cdot \bm{e^{(i,k)}} ) \big) &\text{(homogeneity of norms)}\\
&= O(\log{k}) n^{\chi_f} g\big(\sum_{i\in [k]}f(\|\distv{\sigma}{x_i}\|_\infty \cdot \bm{e^{(1,n)}}) \cdot \bm{e^{(i,k)}} ) \big) &\text{(definition of $\chi_f$)}\\
&\le O(\log{k}) n^{\chi_f} g\big(\sum_{i\in [k]} f(\distv{\sigma}{x_i}) \cdot \bm{e^{(i,k)}} ) \big) &\text{($f$ is symmetric and monotone)}\\
&= O(\log{k}) n^{\chi_f} g\big(f(\distv{\sigma}{x_1}), f(\distv{\sigma}{x_2}), \ldots, f(\distv{\sigma}{x_k}) \big)
\end{align*}
\end{proof}

Finally, our $O(k)$ approximation reduces \NCCS{f}{g} to the (cluster-oblivious) \MNKC{} problem.

\begin{definition}[\MNKC{}]\label{def:globalnncc}
    The input $\I = (P,F,\delta,k,f)$ consists of the point set~$P$, the set~$F$ of facilities, a metric~$\delta: (P\cup F) \times (P\cup F)\rightarrow \nnr$, a number~$k\in \mathbb{N}$, and a symmetric, monotone norm~$f\colon\nnrvec\rightarrow \nnr$, where $n=|P|$.
    A solution is a set $X\subseteq F$ of facilities such that $|X|\le k$. The goal is to minimize $f((\dist{p}{X})_{p \in P})$.
\end{definition}
We note that there exists a constant factor approximation for \MNKC{}~\cite{chakrabarty-swamy19:norm-k-clustering}.

Our algorithm for \NCCS{f}{g} simply solves \MNKC{} (ignoring $g$) and assigns each point to the closest opened facility.
\generalnorm*
\begin{proof}
Let $\I = (P,F,\delta,k,f,g)$ be an instance of \NCCS{f}{g}, $\I' = (P,F,\delta,k,f)$ be an instance of \MNKC{}, $X = \set{x_1, x_2, \ldots, x_k}\subseteq F$ be a constant factor approximation for $\I'$, and $\sigma$ be the mapping from every point $p\in P$ to its closest facility in $X$.
We show that $\X = (X,\sigma)$ is an $O(k)$-approximation for $\I$.

Let $\X' = (X'=(x'_1, x'_2, \ldots, x'_k),\sigma')$ be an optimal solution for $\I$, and let $\sigma''$ be the mapping from every point $p\in P$ to its closest facility in $X'$. By definition of $X$, we have
\[f(\sum_{i\in [k]}\distv{\sigma}{x_i}) \le O(1) \cdot f\left(\sum_{i\in [k]}\distv{\sigma''}{x'_i}\right) \le O(1) \cdot f\left(\sum_{i\in [k]}\distv{\sigma'}{x'_i}\right)\]
with the last inequality following from the monotonicity of $f$.

By \Cref{cor:generalNormSubgradient} the cost of our solution for $\I$ is at most $g(\bm{\mathds{1}^{(k)}}) \cdot f(\sum_{i\in [k]}\distv{\sigma}{x_i})$. But as we proved, this is at most $O(1) \cdot g(\bm{\mathds{1}^{(k)}}) \cdot f(\sum_{i\in [k]}\distv{\sigma'}{x'_i})$. Again by \Cref{cor:generalNormSubgradient} this is at most 
\[O(1) \cdot g(\bm{\mathds{1}^{(k)}}) \cdot \frac{k}{g(\bm{\mathds{1}^{(k)}})} \cdot g\big(f(\distv{\sigma'}{x'_1}), f(\distv{\sigma'}{x'_2}), \ldots, f(\distv{\sigma'}{x'_k}) \big)\]
which proves our claim.
\end{proof}

Our main theorem follows:
\reductionalgo*
\begin{proof}
Follows by running the algorithms from \Cref{thm:chig}, \Cref{thm:chif}, and \Cref{thm:generalNorm}, and returning the clustering with the minimum cost.
\end{proof}

\begin{theorem}\label{thm:reductionhard}
    Assume there is no $o(k)$-approximation for Min-Load Clustering in polynomial time. Then for all $0\le \zeta,\eta\le 1$ and all infinite families of norms $I,O$ such that for all $n\in \mathbb{N}$ there is an $f\colon \nnrvec \rightarrow \nnr \in I$, for all $k\in \mathbb{N}$ there is a $g\colon \nnr^k\rightarrow\nnr \in O$, for all $f \in I$ it holds that $\chi_f =\zeta$ and for all $g\in O$ it holds that $\chi_g=\eta$, there is no $o(k/(k^{\eta}n^{1-\zeta}))$-approximation for \NCCS{I}{O} in polynomial time.
\end{theorem}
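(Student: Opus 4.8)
The plan is to give an approximation-preserving reduction \emph{from} Min-Load $k$-Clustering---which is precisely \NCCS{\LP{1}}{\LP{\infty}}---\emph{to} \NCCS{I}{O}, losing only a factor $k^{\eta}n^{1-\zeta}$. A polynomial-time $\alpha$-approximation for \NCCS{I}{O} then yields a polynomial-time $\alpha\cdot k^{\eta}n^{1-\zeta}$-approximation for Min-Load $k$-Clustering; if $\alpha=o(k/(k^{\eta}n^{1-\zeta}))$ this is $o(k)$, contradicting the assumption. Concretely, I would fix a Min-Load instance $\mathcal J=(P,F,\delta,k)$ with $n=|P|$ and write $\mathrm{ML}(X,\sigma):=\max_{i\in[k]}\|\distv{\sigma}{x_i}\|_1$ for its objective. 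By the hypotheses on $I$ and $O$, pick $f\in I$ defined on $\nnrvec$ and $g\in O$ defined on $\mathbb R^{k}_{\ge 0}$; then $\chi_f=\zeta$ and $\chi_g=\eta$. Put $\phi:=f(\bm{e^{(1,n)}})$ and $\psi:=g(\bm{e^{(1,k)}})$, so that $f(\bm{\mathds{1}^{(n)}})=\phi\, n^{\zeta}$ and $g(\bm{\mathds{1}^{(k)}})=\psi\, k^{\eta}$ by the definition of attenuation. The reduction simply outputs the \NCCS{f}{g} instance $\mathcal I=(P,F,\delta,k,f,g)$ on the same underlying data.

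The core of the argument is the following two-sided comparison of the two objectives, valid for \emph{every} solution $(X=\{x_1,\dots,x_k\},\sigma)$ (padding $X$ with empty clusters to size $k$ if $|X|<k$):
\[
  \frac{\phi\psi}{n^{1-\zeta}}\cdot\mathrm{ML}(X,\sigma)\;\le\; g\big(f(\distv{\sigma}{x_1}),\dots,f(\distv{\sigma}{x_k})\big)\;\le\;\phi\psi\,k^{\eta}\cdot\mathrm{ML}(X,\sigma).
\]
For the right inequality I would apply the upper bound of \Cref{lem:generalNormSubgradient} to $g$, giving $g(\cdot)\le g(\bm{\mathds{1}^{(k)}})\cdot\max_i f(\distv{\sigma}{x_i})=\psi k^{\eta}\max_i f(\distv{\sigma}{x_i})$, and then bound $f(\distv{\sigma}{x_i})\le\phi\,\|\distv{\sigma}{x_i}\|_1$ by decomposing $\distv{\sigma}{x_i}$ into coordinate unit vectors and using the triangle inequality together with the symmetry of $f$. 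For the left inequality I would use monotonicity, symmetry, and homogeneity of $g$ to get $g(\cdot)\ge\psi\max_i f(\distv{\sigma}{x_i})$, and the lower (averaging) bound of \Cref{lem:generalNormSubgradient} for $f$, namely $f(\distv{\sigma}{x_i})\ge\frac{1}{n} f(\bm{\mathds{1}^{(n)}})\,\|\distv{\sigma}{x_i}\|_1=\phi\,n^{\zeta-1}\|\distv{\sigma}{x_i}\|_1$.

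Given the comparison, the reduction is immediate. Let $\mathcal A$ be a polynomial-time $\alpha$-approximation for \NCCS{I}{O}; run it on $\mathcal I$ to obtain $(X,\sigma)$ with $\cost{\sigma}{X}\le\alpha\,\OPT_{\mathcal I}$ and return $(X,\sigma)$ as a Min-Load solution for $\mathcal J$. If $(X^*,\sigma^*)$ is a Min-Load optimum of $\mathcal J$, it is a feasible \NCCS{f}{g} solution, so by the right inequality $\OPT_{\mathcal I}\le\cost{\sigma^*}{X^*}\le\phi\psi k^{\eta}\,\mathrm{ML}(X^*,\sigma^*)$, while by the left inequality $\mathrm{ML}(X,\sigma)\le\frac{n^{1-\zeta}}{\phi\psi}\cost{\sigma}{X}$. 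Chaining the three bounds (the factors $\phi\psi$ cancel) yields $\mathrm{ML}(X,\sigma)\le\alpha\,k^{\eta}n^{1-\zeta}\,\mathrm{ML}(X^*,\sigma^*)$, i.e.\ an $\alpha k^{\eta}n^{1-\zeta}$-approximation for Min-Load $k$-Clustering, completing the contradiction. The only delicate point---and the main obstacle---is obtaining \emph{exactly} the loss $k^{\eta}n^{1-\zeta}$: on the upper side one must bound $f(\distv{\sigma}{x})$ by $\phi\|\distv{\sigma}{x}\|_1$ (via the triangle inequality) rather than by the easier $\|\distv{\sigma}{x}\|_\infty f(\bm{\mathds{1}^{(n)}})$ from \Cref{lem:generalNormSubgradient}, which would incur a spurious extra factor $n^{\zeta}$; one must also keep track of the normalization constants $\phi,\psi$ and verify they cancel in the chain, and be mildly careful about the asymptotic conventions in ``$o(k)$-approximation'' (the constructed instance has $n$ points and parameter $k$, and the statement is non-vacuous only when $k^{1-\eta}/n^{1-\zeta}\to\infty$).
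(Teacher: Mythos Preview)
Your proposal is correct and follows essentially the same approach as the paper: both reduce Min-Load $k$-Clustering to \NCCS{f}{g} on the same data, bound $f(\distv{\sigma}{x_i})\le f(\bm{e^{(1,n)}})\|\distv{\sigma}{x_i}\|_1$ via the triangle inequality and $\|\distv{\sigma}{x_i}\|_1\le \tfrac{n}{f(\bm{\mathds{1}^{(n)}})}f(\distv{\sigma}{x_i})$ via \Cref{lem:generalNormSubgradient}, treat $g$ analogously, and cancel the normalization constants to obtain the loss $k^{\eta}n^{1-\zeta}$. Your packaging as a single two-sided sandwich inequality is slightly tidier, but the argument is the same.
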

\begin{proof}
For the sake of contradiction, assume there is a polynomial time algorithm $\mathcal{A}$ that is an $o(k/(k^{\eta}n^{1-\zeta}))$-approximation for \NCCS{I}{O}, where $I,O$ are fixed families with the properties of the theorem.

We design an algorithm $\mathcal{A'}$ for Min-Load $k$-Clustering (that is \NCCS{\LP{1}}{\LP{\infty}}).
On input $\I = (P,F,\delta,k)$ fix $f\colon \nnrvec \rightarrow \nnr \in I, g\colon \nnr^k\rightarrow\nnr \in O$, and let $\I' = (P,F,\delta,k,f,g)$ be an instance of \NCCS{f}{g}.
We simply run $\mathcal{A}$ on \NCCS{f}{g}.

Let $\X = (X=\set{x_1, x_2,\ldots, x_k},\sigma)$ be an optimal solution for $\I$, and $\X' = (X'=\set{x_1', x_2',\ldots, x_k'},\sigma')$ be the output of $\mathcal{A}$ on $\I'$.
We claim that $\mathcal{A'}$ is an $o\left(k\right)$-approximation for Min-Load $k$-Clustering, which is a contradiction, under the assumptions of the theorem.

We have:
\begin{align*}
&\| \big(\|\distv{\sigma'}{x_1'}\|_1,  \|\distv{\sigma'}{x_2'}\|_1, \ldots, \|\distv{\sigma'}{x_k'}\|_1 \big) \|_\infty\\
=& \frac{1}{g(\bm{e^{1,k}})} g(\| \big( \|\distv{\sigma'}{x_1'}\|_1,  \|\distv{\sigma'}{x_2'}\|_1, \ldots, \|\distv{\sigma'}{x_k'}\|_1 \big) \|_\infty \bm{e^{1,k}} )\\
\le& \frac{1}{g(\bm{e^{1,k}})} g(\|\distv{\sigma'}{x_1'}\|_1,  \|\distv{\sigma'}{x_2'}\|_1, \ldots, \|\distv{\sigma'}{x_k'}\|_1) &\text{($g$ is monotone and symmetric)}
\end{align*}

We now upper bound $\|\distv{\sigma'}{x_i'}\|_1, i\in[k]$ using \Cref{lem:generalNormSubgradient}. We get
\[\|\distv{\sigma'}{x_i'}\|_1 \le \frac{n}{f(\mathds{1}^n)} f(\distv{\sigma'}{x_i'}).\]
This gives us
\begin{align}
\| \big(\|\distv{\sigma'}{x_1'}\|_1,  \|\distv{\sigma'}{x_2'}\|_1, \ldots, \|\distv{\sigma'}{x_k'}\|_1 \big) \|_\infty \le \frac{n}{f(\mathds{1}^n) g(\bm{e^{1,k}})} g\big(f(\distv{\sigma'}{x_1'}), f(\distv{\sigma'}{x_2'}), \ldots, f(\distv{\sigma'}{x_k'}) \big) \label{ineq:hardnessMinToGF}
\end{align}
By definition of $\X'$ we have
\begin{align}
g\big(f(\distv{\sigma'}{x_1'}), f(\distv{\sigma'}{x_2'}), \ldots, f(\distv{\sigma'}{x_k'}) \big) \le o((k/(k^{\eta}n^{1-\zeta}))) g\big(f(\distv{\sigma}{x_1}), f(\distv{\sigma}{x_2}), \ldots, f(\distv{\sigma}{x_k}) \big) \label{ineq:hardnessAPX}
\end{align}
Finally, for $g\big(f(\distv{\sigma}{x_1}), f(\distv{\sigma}{x_2}), \ldots, f(\distv{\sigma}{x_k}) \big)$ we first analyze $f(\distv{\sigma}{x_i}), i\in [k]$.
\begin{align*}
f(\distv{\sigma}{x_i}) = f(\sum_{j\in [n]} \delta_\sigma(x_i)_j \bm{e^{j,n}} ) &\le \sum_{j\in [n]} f(\delta_\sigma(x_i)_j \bm{e^{j,n}} ) &\text{(triangle inequality)} \\
&= \sum_{j\in [n]} f(\delta_\sigma(x_i)_j \bm{e^{1,n}} ) &\text{($f$ is symmetric)} \\
&= \sum_{j\in [n]} \delta_\sigma(x_i)_j f (\bm{e^{1,n}} ) &\text{(homogeneity of norms)} \\
&= f (\bm{e^{1,n}} ) \|\distv{\sigma}{x_i}\|_1
\end{align*}
This gives us, by homogeneity of norms:
\[g\big(f(\distv{\sigma}{x_1}), f(\distv{\sigma}{x_2}), \ldots, f(\distv{\sigma}{x_k}) \big) \le f(\bm{e^{1,n}}) g\big(\|\distv{\sigma}{x_1}\|_1, \|\distv{\sigma}{x_2}\|_1, \ldots, \|\distv{\sigma}{x_k}\|_1 \big) \]
But by monotonicity of $g$, we get
\begin{align}
    g\big(f(\distv{\sigma}{x_1}), f(\distv{\sigma}{x_2}), \ldots, f(\distv{\sigma}{x_k}) \big)\le f(\bm{e^{1,n}}) g(\mathds{1}^k) \|\big( \|\distv{\sigma}{x_1}\|_1, \|\distv{\sigma}{x_2}\|_1, \ldots, \|\distv{\sigma}{x_k}\|_1 \big) \|_\infty \label{ineq:hardnessGFToMin}
\end{align}
Combined, Inequalities~(\ref{ineq:hardnessMinToGF})-(\ref{ineq:hardnessAPX})-(\ref{ineq:hardnessGFToMin}) show that the approximation factor of $\mathcal{A}'$ is $o\left(\frac{nkf(\bm{e^{1,n}}) g(\mathds{1}^k)}{f(\mathds{1}^n) g(\bm{e^{1,k}})k^{\eta}n^{1-\zeta}}\right) = o(k)$, which contradicts the assumptions of the theorem.
\end{proof}

\begin{theorem} \label{thm:reductionCompactHard}
    Assume there is no $o(k)$-approximation for Min-Load Clustering  for instances with $n=k^{1+o(1)}$ in polynomial time. Then for all $0\le \zeta,\eta\le 1, \eps>0$ and all infinite families of norms $I,O$ such that for all $n\in \mathbb{N}$ there is an $f\colon \nnrvec \rightarrow \nnr \in I$, for all $k\in \mathbb{N}$ there is a $g\colon \nnr^k\rightarrow\nnr \in O$, for all $f \in I$ it holds that $\chi_f =\zeta$ and for all $g\in O$ it holds that $\chi_g=\eta$, there is no $k^{\zeta-\eta-\eps}$-approximation for \NCCS{I}{O} in polynomial time.
\end{theorem}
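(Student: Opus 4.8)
The plan is to reuse, almost verbatim, the reduction from the proof of \Cref{thm:reductionhard} and then specialize it to compact instances. Recall that there we converted any polynomial-time $c$-approximation $\mathcal{A}$ for \NCCS{I}{O} into a polynomial-time algorithm $\mathcal{A}'$ for Min-Load $k$-Clustering: given a Min-Load instance $(P,F,\delta,k)$ with $n=|P|$, one picks norms $f\colon\nnrvec\to\nnr\in I$ and $g\colon\nnr^k\to\nnr\in O$, forms the \NCCS{f}{g} instance $(P,F,\delta,k,f,g)$, and runs $\mathcal{A}$ on it. Chaining Inequalities~(\ref{ineq:hardnessMinToGF}),~(\ref{ineq:hardnessAPX}),~(\ref{ineq:hardnessGFToMin}) shows that the Min-Load cost of the returned solution is at most $c\cdot\frac{n\,f(\bm{e^{1,n}})\,g(\mathds{1}^k)}{f(\mathds{1}^n)\,g(\bm{e^{1,k}})}$ times the optimum. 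By the definition of attenuation, $f(\mathds{1}^n)/f(\bm{e^{1,n}})=n^{\chi_f}=n^{\zeta}$ for every $f\in I$ and $g(\mathds{1}^k)/g(\bm{e^{1,k}})=k^{\chi_g}=k^{\eta}$ for every $g\in O$, so this factor is exactly $c\cdot n^{1-\zeta}k^{\eta}$. The crucial---and only new---observation is that this reduction leaves the point set and the parameter $k$ untouched, so a \emph{compact} Min-Load instance ($n=k^{1+o(1)}$) is mapped to a compact \NCCS{f}{g} instance; this is what lets us invoke the stronger hardness assumption.

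With that in hand, I would argue by contradiction. If $\zeta-\eta-\eps\le 0$ the statement is vacuous, since approximation ratios are always at least $1$; so assume $\zeta-\eta-\eps>0$ and suppose there is a polynomial-time $k^{\zeta-\eta-\eps}$-approximation for \NCCS{I}{O} on the family $I,O$. Setting $c=k^{\zeta-\eta-\eps}$ and substituting $n=k^{1+o(1)}$, hence $n^{1-\zeta}=k^{(1-\zeta)(1+o(1))}=k^{1-\zeta+o(1)}$, the Min-Load approximation factor of $\mathcal{A}'$ becomes $k^{\zeta-\eta-\eps}\cdot k^{1-\zeta+o(1)}\cdot k^{\eta}=k^{1-\eps+o(1)}$, which is $o(k)$ once $k$ is large enough that the additive $o(1)$ in the exponent falls below $\eps$. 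This contradicts the assumed $o(k)$-hardness of Min-Load $k$-Clustering on instances with $n=k^{1+o(1)}$.

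Almost nothing new is needed beyond \Cref{thm:reductionhard}: Inequalities~(\ref{ineq:hardnessMinToGF}) and~(\ref{ineq:hardnessGFToMin}) hold for any output and do not reference the approximation quality, so they can be quoted verbatim, and only Inequality~(\ref{ineq:hardnessAPX}) is replaced by the $k^{\zeta-\eta-\eps}$ guarantee. The main thing to be careful about is the exponent bookkeeping with the $o(1)$ that compactness introduces---in particular, phrasing everything with respect to an infinite family so that both ``$o(k)$'' and ``$o(1)$'' are asymptotics as $k\to\infty$ along that family---and to recall (as in \Cref{thm:reductionhard}) that for every relevant $n$ and $k$ the families $I$ and $O$ do contain norms of the required dimensions, so the reduction is well defined. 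I do not foresee a genuine obstacle; the content of the theorem lies entirely in the already-established reduction of \Cref{thm:reductionhard} together with plugging in $n=k^{1+o(1)}$.
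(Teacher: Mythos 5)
Your proposal is correct and takes exactly the same route as the paper: the paper's proof is the one-liner ``follows directly from \Cref{thm:reductionhard} and the assumption $n=k^{1+o(1)}$,'' and what you write is simply a careful unpacking of that---noting that the reduction in \Cref{thm:reductionhard} preserves $n$ and $k$ (so compactness carries over) and then substituting $n=k^{1+o(1)}$ into the $o(k/(k^{\eta}n^{1-\zeta}))$ bound to obtain $k^{\zeta-\eta-o(1)}$, which dominates $k^{\zeta-\eta-\eps}$ for large $k$.
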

\begin{proof}
Follows directly from \Cref{thm:reductionhard} and the assumption $n=k^{1+o(1)}$.
\end{proof}

\bibliographystyle{plain}
\bibliography{biblio}

@inproceedings{joachim,
  author       = {Jaroslaw Byrka and
                  Krzysztof Sornat and
                  Joachim Spoerhase},
  title        = {Constant-factor approximation for ordered $k$-median},
  booktitle    = {Proc. 50th Annual {ACM} {SIGACT} Symposium on Theory
                  of Computing (STOC'18)},
  pages        = {620--631},
  publisher    = {{ACM}},
  year         = {2018},
  url          = {https://doi.org/10.1145/3188745.3188930},
  doi          = {10.1145/3188745.3188930},
  timestamp    = {Thu, 14 Oct 2021 09:45:36 +0200},
  biburl       = {https://dblp.org/rec/conf/stoc/ByrkaSS18.bib},
  bibsource    = {dblp computer science bibliography, https://dblp.org}
}

@inproceedings{otherOrderedKMedian,
  author       = {Deeparnab Chakrabarty and
                  Chaitanya Swamy},
  title        = {Interpolating between $k$-Median and $k$-Center: Approximation Algorithms
                  for Ordered $k$-Median},
  booktitle    = {Proc. 45th International Colloquium on Automata, Languages, and Programming (ICALP'18)},
  volume       = {107},
  pages        = {29:1--29:14},
  year         = {2018}
}

@article{MinSumRadii,
  author       = {Moses Charikar and
                  Rina Panigrahy},
  title        = {Clustering to minimize the sum of cluster diameters},
  journal      = {J. Comput. Syst. Sci.},
  volume       = {68},
  number       = {2},
  pages        = {417--441},
  year         = {2004},
  url          = {https://doi.org/10.1016/j.jcss.2003.07.014},
  doi          = {10.1016/J.JCSS.2003.07.014},
  note         = {Announced at STOC 2001},
  timestamp    = {Sat, 30 Sep 2023 10:18:42 +0200},
  biburl       = {https://dblp.org/rec/journals/jcss/CharikarP04.bib},
  bibsource    = {dblp computer science bibliography, https://dblp.org}
}

@article{JainVaz,
  author       = {Kamal Jain and
                  Vijay V. Vazirani},
  title        = {Approximation algorithms for metric facility location and \emph{k}-Median
                  problems using the primal-dual schema and Lagrangian relaxation},
  journal      = {J. {ACM}},
  volume       = {48},
  number       = {2},
  pages        = {274--296},
  year         = {2001},
  note = {Conference version announced at FOCS 1999}
}

@article{ola,
  author       = {Shi Li and
                  Ola Svensson},
  title        = {Approximating \emph{k}-Median via Pseudo-Approximation},
  journal      = {{SIAM} J. Comput.},
  volume       = {45},
  number       = {2},
  pages        = {530--547},
  year         = {2016},
  url          = {https://doi.org/10.1137/130938645},
  doi          = {10.1137/130938645},
  note         = {Conference version announced at STOC 2013}
}

@article{jain-etal03:greedy-facility-location,
  author       = {Kamal Jain and
                  Mohammad Mahdian and
                  Evangelos Markakis and
                  Amin Saberi and
                  Vijay V. Vazirani},
  title        = {Greedy facility location algorithms analyzed using dual fitting with
                  factor-revealing {LP}},
  journal      = {J. {ACM}},
  volume       = {50},
  number       = {6},
  pages        = {795--824},
  year         = {2003},
  note = {Conference version announced at STOC 2002}
}

@inproceedings{chakrabarty-swamy19:norm-k-clustering,
  author       = {Deeparnab Chakrabarty and
                  Chaitanya Swamy},
  title        = {Approximation algorithms for minimum norm and ordered optimization
                  problems},
  booktitle    = {Proc. 51st Annual {ACM} {SIGACT} Symposium on Theory
                  of Computing (STOC'19)},
  pages        = {126--137},
  publisher    = {{ACM}},
  year         = {2019}
}

@article{aouad-segev19ordered-k-median,
  author       = {Ali Aouad and
                  Danny Segev},
  title        = {The ordered $k$-median problem: surrogate models and approximation algorithms},
  journal      = {Math. Program.},
  volume       = {177},
  number       = {1-2},
  pages        = {55--83},
  year         = {2019}
}

@article{hansen-jaumard97cluster-analysis,
  author       = {Pierre Hansen and
                  Brigitte Jaumard},
  title        = {Cluster analysis and mathematical programming},
  journal      = {Math. Program.},
  volume       = {79},
  pages        = {191--215},
  year         = {1997},
  url          = {https://doi.org/10.1007/BF02614317},
  doi          = {10.1007/BF02614317},
  timestamp    = {Thu, 13 Aug 2020 18:46:14 +0200},
  biburl       = {https://dblp.org/rec/journals/mp/HansenJ97.bib},
  bibsource    = {dblp computer science bibliography, https://dblp.org}
}

@InProceedings{ahmadian,
  author =	{Ahmadian, Sara and Swamy, Chaitanya},
  title =	{{Approximation Algorithms for Clustering Problems with Lower Bounds and Outliers}},
  booktitle =	{Proc. 43rd International Colloquium on Automata, Languages, and Programming (ICALP'16)},
  pages =	{69:1--69:15},
  year =	{2016},
  volume =	{55}
}

@inproceedings{buchem-etal24:msr,
  author       = {Moritz Buchem and
                  Katja Ettmayr and
                  Hugo K. K. Rosado and
                  Andreas Wiese},
  title        = {A ${(3+\epsilon)}$-approximation algorithm for the minimum
                  sum of radii problem with outliers and extensions for generalized
                  lower bounds},
  booktitle    = {Proc. 2024 {ACM-SIAM} Symposium on Discrete Algorithms (SODA'24)},
  pages        = {1738--1765},
  year         = {2024}
}

@article{ahmadian-etal18:min-load-k-median,
  author       = {Sara Ahmadian and
                  Babak Behsaz and
                  Zachary Friggstad and
                  Amin Jorati and
                  Mohammad R. Salavatipour and
                  Chaitanya Swamy},
  title        = {Approximation Algorithms for Minimum-Load $k$-Facility Location},
  journal      = {{ACM} Trans. Algorithms},
  volume       = {14},
  number       = {2},
  pages        = {16:1--16:29},
  year         = {2018}
}

@inproceedings{friggstad-jamshidian22:msr,
  author       = {Zachary Friggstad and
                  Mahya Jamshidian},
  title        = {Improved Polynomial-Time Approximations for Clustering with Minimum
                  Sum of Radii or Diameters},
  booktitle    = {Proc. 30th Annual European Symposium on Algorithms (ESA'22)},
  volume       = {244},
  pages        = {56:1--56:14},
  year         = {2022}
}

@inproceedings{ibrahimpur-swamy20:stochastic-norm-optimization,
  author       = {Sharat Ibrahimpur and
                  Chaitanya Swamy},
  title        = {Approximation Algorithms for Stochastic Minimum-Norm Combinatorial
                  Optimization},
  booktitle    = {Proc. 61st {IEEE} Annual Symposium on Foundations of Computer Science (FOCS'20)},
  pages        = {966--977},
  year         = {2020}
}

@inproceedings{deng-etal23:Generalized-Load-Balancing,
  author       = {Shichuan Deng and
                  Jian Li and
                  Yuval Rabani},
  title        = {Generalized Unrelated Machine Scheduling Problem},
  booktitle    = {Proc. 2023 {ACM-SIAM} Symposium on Discrete Algorithms (SODA'23)},
  pages        = {2898--2916},
  year         = {2023}
}

@article{gibson-etal10:msr,
  author       = {Matt Gibson and
                  Gaurav Kanade and
                  Erik Krohn and
                  Imran A. Pirwani and
                  Kasturi R. Varadarajan},
  title        = {On Metric Clustering to Minimize the Sum of Radii},
  journal      = {Algorithmica},
  volume       = {57},
  number       = {3},
  pages        = {484--498},
  year         = {2010}
}

@article{guha-khuller99:greedy-facility-location,
  author       = {Sudipto Guha and
                  Samir Khuller},
  title        = {Greedy Strikes Back: Improved Facility Location Algorithms},
  journal      = {J. Algorithms},
  volume       = {31},
  number       = {1},
  pages        = {228--248},
  year         = {1999},
  note  = {Conference version announced at SODA 1998.}
}

@article{gonzalez85:k-center,
  author       = {Teofilo F. Gonzalez},
  title        = {Clustering to Minimize the Maximum Intercluster Distance},
  journal      = {Theor. Comput. Sci.},
  volume       = {38},
  pages        = {293--306},
  year         = {1985}
}

@article{hochbaum-shmoys85:k-center,
  author       = {Dorit S. Hochbaum and
                  David B. Shmoys},
  title        = {A Best Possible Heuristic for the \emph{k}-Center Problem},
  journal      = {Math. Oper. Res.},
  volume       = {10},
  number       = {2},
  pages        = {180--184},
  year         = {1985}
}

@article{byrka-etal17:improved-k-median,
  author       = {Jaroslaw Byrka and
                  Thomas W. Pensyl and
                  Bartosz Rybicki and
                  Aravind Srinivasan and
                  Khoa Trinh},
  title        = {An Improved Approximation for \emph{k}-Median and Positive Correlation
                  in Budgeted Optimization},
  journal      = {{ACM} Trans. Algorithms},
  volume       = {13},
  number       = {2},
  pages        = {23:1--23:31},
  year         = {2017},
  note = {Conference version announced at SODA 2015}
}

@article{charikar-etal02:constant-k-median,
  author       = {Moses Charikar and
                  Sudipto Guha and
                  {\'{E}}va Tardos and
                  David B. Shmoys},
  title        = {A Constant-Factor Approximation Algorithm for the k-Median Problem},
  journal      = {J. Comput. Syst. Sci.},
  volume       = {65},
  number       = {1},
  pages        = {129--149},
  year         = {2002},
  note = {Conference version announced at STOC 1999}
}

@article{arya-etal04:local-search-k-median,
  author       = {Vijay Arya and
                  Naveen Garg and
                  Rohit Khandekar and
                  Adam Meyerson and
                  Kamesh Munagala and
                  Vinayaka Pandit},
  title        = {Local Search Heuristics for k-Median and Facility Location Problems},
  journal      = {{SIAM} J. Comput.},
  volume       = {33},
  number       = {3},
  pages        = {544--562},
  year         = {2004},
  note = {Conference version announced at STOC 2001}
}

@article{svitkina-tardos10:hierarchical-facility-location,
  author       = {Zoya Svitkina and
                  {\'{E}}va Tardos},
  title        = {Facility location with hierarchical facility costs},
  journal      = {{ACM} Trans. Algorithms},
  volume       = {6},
  number       = {2},
  pages        = {37:1--37:22},
  year         = {2010},
  note = {Conference version announced at SODA 2006}
}

@inproceedings{shmoys-etal04:facility-location-service-costs,
  author       = {David B. Shmoys and
                  Chaitanya Swamy and
                  Retsef Levi},
  title        = {Facility location with Service Installation Costs},
  booktitle    = {Proc. 15th Annual {ACM-SIAM} Symposium on Discrete
                  Algorithms (SODA'04)},
  pages        = {1088--1097},
  year         = {2004}
}

@inproceedings{abbasi-etal24:submodular-fac-loc,
  author       = {Fateme Abbasi and
                  Marek Adamczyk and
                  Miguel Bosch{-}Calvo and
                  Jaroslaw Byrka and
                  Fabrizio Grandoni and
                  Krzysztof Sornat and
                  Antoine Tinguely},
  title        = {An $O(\log\log n)$-Approximation for Submodular Facility Location},
  booktitle    = {Proc. 51st International Colloquium on Automata, Languages, and Programming (ICALP'24)},
  volume       = {297},
  pages        = {5:1--5:20},
  year         = {2024}
}

@inproceedings{chekuri-ene11:submodular-cost-allocation,
  author       = {Chandra Chekuri and
                  Alina Ene},
  title        = {Submodular Cost Allocation Problem and Applications},
  booktitle    = {Proc.  on 38th International Colloquium on Automata, Languages and Programming (ICALP'11)},
  volume       = {6755},
  pages        = {354--366},
  year         = {2011}
}

@article{svitkina-fleischer11:submodular-approx,
  author       = {Zoya Svitkina and
                  Lisa Fleischer},
  title        = {Submodular Approximation: Sampling-based Algorithms and Lower Bounds},
  journal      = {{SIAM} J. Comput.},
  volume       = {40},
  number       = {6},
  pages        = {1715--1737},
  year         = {2011},
  note = {Conference version announced at FOCS 2008}
}

@inproceedings{patton-etal23:submodular-norms,
  author       = {Kalen Patton and
                  Matteo Russo and
                  Sahil Singla},
  title        = {Submodular Norms with Applications To Online Facility Location and
                  Stochastic Probing},
  booktitle    = {Proc. International Conference on Approximation, Randomization, and Combinatorial Optimization (APPROX'23)},
  volume       = {275},
  pages        = {23:1--23:22},
  year         = {2023}
}

@inbook{gowdaetal2023:bestkmed,
author = {Kishen N Gowda and Thomas Pensyl and Aravind Srinivasan and Khoa Trinh},
title = {Improved Bi-point Rounding Algorithms and a Golden Barrier for $k$-Median},
booktitle = {Proc. 34th Annual ACM-SIAM Symposium on Discrete Algorithms (SODA'23)},
pages = {987-1011},
year         = {2023}
}

@inproceedings{abbasi-etal23:epas-norm-clustering,
  author       = {Fateme Abbasi and
                  Sandip Banerjee and
                  Jaroslaw Byrka and
                  Parinya Chalermsook and
                  Ameet Gadekar and
                  Kamyar Khodamoradi and
                  D{\'{a}}niel Marx and
                  Roohani Sharma and
                  Joachim Spoerhase},
  title        = {Parameterized Approximation Schemes for Clustering with General Norm
                  Objectives},
  booktitle    = {Proc. 64th {IEEE} Annual Symposium on Foundations of Computer Science (FOCS'23)},
  pages        = {1377--1399},
  year         = {2023}
}

@inproceedings{chlamtac-etal22:fair-cascaded-norm-clustering,
  author       = {Eden Chlamt{\'{a}\v{c}} and
                  Yury Makarychev and
                  Ali Vakilian},
  title        = {Approximating Fair Clustering with Cascaded Norm Objectives},
  booktitle    = {Proc. 2022 {ACM-SIAM} Symposium on Discrete Algorithms (SODA'22)},
  pages        = {2664--2683},
  year         = {2022}
}

@inproceedings{herold-etal25:cluster-aware-norm-objectives,
author = {Martin G. Herold and Evangelos Kipouridis and Joachim Spoerhase},
title = {Clustering to Minimize Cluster-Aware Norm Objectives},
booktitle = {Proc. 2025 Annual ACM-SIAM Symposium on Discrete Algorithms (SODA'25)},
pages = {255-287},
year = {2025}
}

@article{fakcharoenphol-etal04:prob-tree-embeddings,
  author       = {Jittat Fakcharoenphol and
                  Satish Rao and
                  Kunal Talwar},
  title        = {A tight bound on approximating arbitrary metrics by tree metrics},
  journal      = {J. Comput. Syst. Sci.},
  volume       = {69},
  number       = {3},
  pages        = {485--497},
  year         = {2004},
  note = {Preliminary version presented at STOC'03.}
}

@book{horn-johnson-2013-matrix-analysis,
  author    = {Horn, Roger A. and Johnson, Charles R.},
  title     = {Matrix Analysis},
  edition   = {2},
  year      = {2013},
  publisher = {Cambridge University Press},
  address   = {Cambridge},
  isbn      = {978-0-521-83940-2}
}

\appendix
\section{Appendix - Further related work} \label{sec:furtherRelated}
\paragraph{Cluster-Aware Objectives.}

Most of the natural clustering problems are NP-hard such as \msr{}~\cite{gibson-etal10:msr} or even APX-hard such as $k$-Median~\cite{guha-khuller99:greedy-facility-location} and $k$-Center~\cite{hochbaum-shmoys85:k-center,gonzalez85:k-center} and Min-Load $k$-Clustering~\cite{ahmadian-etal18:min-load-k-median}.

This inspired intensive research on approximation algorithms for these problems leading to the development of a rich toolbox of algorithmic techniques based on, for example, greedy, local search, primal-dual, or LP-rounding. There is a series of improved approximation algorithms for Min-Sum of Radii~\cite{MinSumRadii,friggstad-jamshidian22:msr} with the currently best approximation $3+\epsilon$ by Buchem et al.~\cite{buchem-etal24:msr}.
Interestingly enough, it admits a quasi-polynomial time approximation scheme~\cite{gibson-etal10:msr} and is therefore probably not APX-hard. For $k$-Center $2$-approximations are known, which is best possible unless $\textsf{P}=\textsf{NP}$~\cite{hochbaum-shmoys85:k-center,gonzalez85:k-center}. 
There is an intensive line of research improving the approximation factors for $k$-Median~\cite{charikar-etal02:constant-k-median,JainVaz,jain-etal03:greedy-facility-location,arya-etal04:local-search-k-median,ola}. The currently best approximation by Gowda et al.~\cite{gowdaetal2023:bestkmed} has a ratio of $2.613+\epsilon$. 
The Min-Load $k$-Clustering problem, in contrast, is much less understood. 
An $O(k)$-approximation follows from the $O(1)$-approximations for $k$-Median, and approximation schemes are known for line metrics~\cite{ahmadian-etal18:min-load-k-median}. 
However, an $o(k)$-approximation for general metrics is elusive.

\paragraph{Cluster-Oblivious Norm Objectives.}
There has been a recent interest in generalized objectives for cluster-oblivious problems. A first set of result focused on $\textsf{top}_{\ell}$ (called $\ell$-Centrum) and the more general ordered weighted objectives (called Ordered $k$-Median) obtaining logarithmic approximations~\cite{aouad-segev19ordered-k-median}. Byrka et al.~\cite{joachim}, and Chakrabarty and Swamy~\cite{otherOrderedKMedian} obtain the first constant-factor approximations for Ordered $k$-Median, which unifies constant-factor approximations for $k$-Median and $k$-Center and also implies a constant-factor approximation for $\ell$-Centrum. This line of research culminated in the constant-factor approximation for general (cluster-oblivious) monotone, symmetric norms by Chakrabarty and Swamy~\cite{chakrabarty-swamy19:norm-k-clustering} further generalizing ordered weighted norms.
Chlamt{\'{a}\v{c}} et al.~\cite{chlamtac-etal22:fair-cascaded-norm-clustering} study $(p,q)$-fair clustering where the data points are partitioned into groups (more generally described by multiple weight functions). Each group is assigned a cost under the $\LP{p}$ norm and the overall cost is the $\LP{q}$ norm of the group costs. While their clustering objective involves nested norms as well, their groups are fixed by the input whereas our ``groups'' (clusters) are to be determined as part of the solution. Also, we consider general monotone, symmetric norm and focus on $\textsf{top}_{\ell}$ and ordered weighted norms in particular rather than $\LP{p}$-norms objectives. Abbasi et al.~\cite{abbasi-etal23:epas-norm-clustering} study general \emph{asymmetric} monotone norms, which subsume all cluster-oblivious objectives described above (including $(p,q)$-fair clustering). They develop an efficient parameterized approximation scheme for structured metric spaces such as high-dimensional Euclidean space, bounded doubling metrics, and shortest path metrics in bounded tree-width and planar graphs.

Notice that all of the above problems are cluster-oblivious and therefore do not capture cluster-aware objectives such as Min-Sum of Radii and Min-Load $k$-Clustering.

\paragraph{Generalized Load Balancing.}
Our problem is related to the Generalized Load Balancing problem recently introduced by Deng et al.~\cite{deng-etal23:Generalized-Load-Balancing}. In this problem, we are given a set of jobs (related to our data points) and a set of machines (related to our facilities). Executing a job~$j$ on a machine~$i$ incurs a processing time~$p(i,j)$ (related to point-center distances). The load of machine $i$ is computed by a monotone, symmetric norm (inner norm) $\psi_i$ of the vector of processing times of jobs assigned to~$i$. The loads of the machines are then aggregated via an outer norm to the overall objective function~$\phi$, which we wish to minimize. Their main result is an $O(\log n)$-approximation algorithm for this problem, which is best possible unless $\textsf{P}=\textsf{NP}$. Notice that their setting is incomparable to ours. It does not capture the selection of a $k$-subset of centers because the set of machines is fixed. On the other hand, their inner norms are machine-specific and their processing times do not need to satisfy the triangle inequality. An earlier work by Chakrabarty and Swamy~\cite{chakrabarty-swamy19:norm-k-clustering} introduces the special of norm load balancing where the inner norm is $\LP{1}$ and obtain an $2$-approximation for it.

\paragraph{Submodular Load Balancing, Allocation, and Facility Location.} 
Svitkina and Fleischer~\cite{svitkina-fleischer11:submodular-approx} study the related setting of Submodular Load Balancing where replace the inner norm with a submodular function and use $\LP{\infty}$ as an outer norm. They obtain an $O(\sqrt{n/\log n})$-approximation for this problem along with matching lower bounds. If we use instead $\LP{1}$ as the outer norm, we obtain the Submodular Cost Allocation problem~\cite{chekuri-ene11:submodular-cost-allocation}. The authors obtain a $O(\log n)$-approximation.

Another line of research focuses on \emph{monotone} submodular functions as inner objective. In the Submodular Facility Location problem~\cite{svitkina-tardos10:hierarchical-facility-location} each center (facility) $x\in F$ is associated with a monotone, submodular function $s_x\colon P\rightarrow\nnr$. A solution $\sigma\colon P\rightarrow F$ assigns each point (client) to a center (facility). The goal is to minimize the total connection cost $\sum_{p\in P}\delta(p,\sigma(p))$ plus the facility cost $\sum_{x\in F}s_x(\sigma^{-1}(x))$. Similar to the other allocation problems, it does not concern selection of facilities but assumes they are fixed. Svitkina and Tardos~\cite{svitkina-tardos10:hierarchical-facility-location} and give an $O(\log n)$-approximation algorithm for it, which is asymptotically best possible as the problem generalizes set cover~\cite{shmoys-etal04:facility-location-service-costs}. In a recent work, Abbasi et al.~\cite{abbasi-etal24:submodular-fac-loc} design a $O(\log\log n)$-approximation for the uniform case where every facility is assigned the same submodular function.

\section{Appendix - Missing Proofs}
\subsection{\texorpdfstring{Proofs from \Cref{subsec:reduceballk}}{Proofs from Section \ref{subsec:reduceballk}}}\label{lem:redball:fullproof}
\redball* 
\begin{proof}
We show how to transform a solution $\X=(X,\sigma)$ for \NCCS{\textsf{Ord}}{\LP{1}} to a solution $\X'=(X,\bm{r})$ of \Ballk{} without increasing the cost. 
For every $x \in X$ set the radii vector $\bm{r}(x)$ to the ordered cluster distance cost vector $\distvs{\sigma}{x}$.
This gives
\begin{align*}
    \costz{\X'}& = \sum_{p\in P} \distr{p}{X}{\bm{r}} + \sum_{x\in X}\sop^{\intercal} \bm{r}(x)\\
    &\le \sum_{p\in P} \scale^{\intercal} \distrv{p}{\sigma(p)}{\bm{r}}+ \sum_{x\in X}\sum_{i=1}^n\mu_i \cdot r(x)_i\\
    &= \sum_{x\in X}\sum_{p\in P\colon \sigma(p)=x}  \scale^{\intercal}\distrv{p}{x}{\bm{r}}+ \sum_{x\in X}\sum_{i=1}^n\rho_i \cdot i\cdot r(x)_i\\
    &= \left(\sum_{x\in X}\sum_{p\in P\colon \sigma(p)=x}\sum_{i=1}^n  \rho_i(\dist{x}{p} - \delta^{\downarrow}_{\sigma}(x)_i)^+\right)+ \sum_{x\in X}\sum_{i=1}^n\rho_i \cdot i\cdot \delta^{\downarrow}_{\sigma}(x)_i\\
    &= \sum_{x\in X}\sum_{i=1}^n\rho_i \left(\left(\sum_{p\in P\colon \sigma(p)=x}   (\dist{x}{p}- \delta^{\downarrow}_{\sigma}(x)_i )^+\right)+  i\cdot \delta^{\downarrow}_{\sigma}(x)_i\right)\\
    &=\sum_{x\in X}\sum_{i=1}^n\rho_i \proxy{\delta^{\downarrow}_{\sigma}(x)_i}{\distv{\sigma}{x}}{i}\\
    &=\sum_{x\in X}\sum_{i=1}^n\rho_i \topl{i}{\distv{\sigma}{x}}\\
    &=\sum_{x\in X}\sum_{i=1}^n(w_i-w_{i+1}) \topl{i}{\distv{\sigma}{x}}\\
    &=\sum_{x\in X}\ord{\bm{w}}{\distv{\sigma}{x}}\\
    &=\cost{\sigma}{X}\,.
\end{align*}

Now, we show how to transform a solution $\X'=(X,\bm{r})$ for \Ballk{} to a solution $\X=(X,\sigma)$ for \NCCS{\textsf{Ord}}{\LP{1}} without increasing the cost. 
For all points $p\in P$ set $\sigma(p) = \arg \min_{x\in X} \scale^{\intercal}\distrv{p}{x}{\bm{r}}$. This gives
\begin{align*}
    \cost{\sigma}{X} & =\sum_{x\in X}\ord{\bm{w}}{\distv{\sigma}{x}} \\
    & =\sum_{x\in X}\sum_{i=1}^n \rho_i\topl{i}{\distv{\sigma}{x}} \\
    &\le \sum_{x\in X}\sum_{i=1}^n \rho_i\proxy{r(x)_i}{\distv{\sigma}{x}}{i} \\ 
    &= \sum_{x\in X}\sum_{i=1}^n \rho_i\left(r(x)_i\cdot i+\sum_{p\in P\colon \sigma(p)=x} (\dist{p}{x} - r(x)_i )^+\right)\\
    & =\sum_{x\in X}\sum_{i=1}^n \rho_ir(x)_ii + \sum_{p\in P}  \scale^{\intercal}\distrv{p}{\sigma(p)}{\bm{r}} \\
    &\le \sum_{x\in X} \sop^{\intercal} \bm{r}(x) + \sum_{p\in P}  \distr{p}{X}{\bm{r}}  \\
    &=\costz{\X'}\,.
\end{align*}
\end{proof}

\subsection{\texorpdfstring{Proofs from \Cref{sec:sparsify}}{Proofs from Section \ref{sec:sparsify}}}\label{secapp:sparsify}
\subsubsection{\texorpdfstring{Proof of \Cref{lem:sparse}}{Proof of Lemma \ref{lem:sparse}}}\label{secapp:sparseapp}
\sparseinstance*
In the following, we prove three lemmas that directly imply \Cref{lem:sparse}.
\Cref{lem:betweenoneandn} states that we can assume that $1 \le \nicefrac{\mu_i}{\rho_i}\le n$.
On a high level, this can be ensured because if $\rho_i \ge \mu_i$ any ``plausible'' solution selects the radius $r(x)_i$ to be the distance to the distance of the furthest point connected to $x$. 
Thus, we can set $\rho'_i = \mu_i$ without changing the cost of such solutions. 
In the same way, we argue that for $\mu_i\ge n \rho_i$ any ``plausible'' solution sets $r(x)_i=0$. 
Thus we can set $\mu'_i = n\rho_i$.
Next, \Cref{lem:sorted} states that the ratios of the entries of $\sop$ and $\scale$ are sorted. 
This can ensured simply by reordering.
Finally, \Cref{lem:logndim} states that we can assume that $m=O(\log n)$.
The idea for this lemma is that if multiple $\nicefrac{\mu_i}{\rho_i}$ are within a range of a constant factor, we can combine them while only distorting the objective function by this constant factor. 
This idea is similar to Claim 4.1 in \cite{chakrabarty-swamy19:norm-k-clustering}.
These three properties are the requirement for a sparse instance.
\begin{lemma}\label{lem:betweenoneandn}
    Let $\I =(P,F,\delta,k,m,\scale,\sop)$ an instance of \Ballk{}. 
    Then, there is . Then we can efficiently compute an instance $\I' =(P,F,\delta,k,m,\scale',\sop')$ of \Ballk{} such that $1\le \nicefrac{\mu'_i}{\rho'_i}\le n$ for all $i\in [m]$, $\OPT_{\I'} \le \OPT_\I$, and for all solutions $\X' = (X',\bm{r}')$ of $\I'$ we can efficiently compute a solution $\X=(X,\bm{r})$ of $\I$ such that 

\begin{align*}
        \cost{\I}{\X} \le \cost{\I'}{\X'}.
    \end{align*}
    
    where $\cost{\I}{\X}$ is the objective function value of $\X$ with respect to instance $\I$ and $\cost{\I'}{\X'}$ is the objective function value of $\X'$ with respect to instance $\I'$.
\end{lemma}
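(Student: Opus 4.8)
The plan is to adjust the two scaling vectors coordinate by coordinate, leaving $P,F,\delta,k,m$ untouched. For a coordinate $i\in[m]$ with $1\le\mu_i/\rho_i\le n$ we keep $\rho_i':=\rho_i$ and $\mu_i':=\mu_i$. For a coordinate with $\mu_i/\rho_i<1$ (covering that layer costs less per unit than paying the overshoot, so a sensible solution should fully cover it) we set $\rho_i':=\mu_i':=\mu_i$, giving ratio $1$. For a coordinate with $\mu_i/\rho_i>n$ (buying radius is so expensive that a sensible solution buys none) we set $\rho_i':=\rho_i$ and $\mu_i':=n\rho_i$, giving ratio $n$. In every case $\rho_i'\le\rho_i$ and $\mu_i'\le\mu_i$, so $\I'$ satisfies $1\le\mu_i'/\rho_i'\le n$ for all $i$, and the construction is clearly polynomial time.

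For $\OPT_{\I'}\le\OPT_\I$: for any fixed solution $(X,\bm r)$ the objective $\sum_{p}\min_{x\in X}\sum_i\rho_i(\delta(p,x)-r(x)_i)^+ + \sum_{x\in X}\sum_i\mu_i r(x)_i$ is coordinatewise nondecreasing in $\rho$ and in $\mu$ (and the per-point minimization only decreases when $\rho$ is decreased), hence $\cost{\I'}{(X,\bm r)}\le\cost{\I}{(X,\bm r)}$ for every solution; applying this to an optimal solution of $\I$ gives $\OPT_{\I'}\le\OPT_\I$.

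For the transformation $\X'=(X',\bm r')\mapsto\X=(X,\bm r)$: keep $X:=X'$, let $\sigma'\colon P\to X'$ be the assignment sending each point to the ball minimizing its $\I'$-connection cost, and define $\bm r$ coordinatewise --- $r(x)_i:=r'(x)_i$ when coordinate $i$ was unchanged; $r(x)_i:=\max_{p\colon\sigma'(p)=x}\delta(p,x)$ (and $0$ if no point is assigned to $x$) when $\mu_i/\rho_i<1$; and $r(x)_i:=0$ when $\mu_i/\rho_i>n$. Since the $\I$-cost of $\X$ is at most its cost under the (possibly suboptimal) assignment $\sigma'$, and both this $\sigma'$-cost and $\cost{\I'}{\X'}$ split as a sum over pairs $(x\in X,\,i\in[m])$, it suffices to verify the per-(ball,layer) inequality
\[\rho_i\!\!\!\sum_{p\colon\sigma'(p)=x}\!\!\!(\delta(p,x)-r(x)_i)^+ + \mu_i\, r(x)_i\ \le\ \rho_i'\!\!\!\sum_{p\colon\sigma'(p)=x}\!\!\!(\delta(p,x)-r'(x)_i)^+ + \mu_i'\, r'(x)_i.\]
In the unchanged case it is an equality. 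When $\mu_i/\rho_i<1$ the left side is $\mu_i\max_p\delta(p,x)$ and the right side is $\mu_i\bigl(\sum_p(\delta(p,x)-r'(x)_i)^+ + r'(x)_i\bigr)$, and the inequality follows by isolating the point attaining $\max_p\delta(p,x)$ and using $t\le(t-s)^++s$. When $\mu_i/\rho_i>n$ the left side is $\rho_i\sum_p\delta(p,x)$ and the right side is $\rho_i\sum_p(\delta(p,x)-r'(x)_i)^+ + n\rho_i\, r'(x)_i$, which holds because $\delta(p,x)\le(\delta(p,x)-r'(x)_i)^+ + r'(x)_i$ for each of at most $n$ assigned points.

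The only real subtlety is choosing $\bm r$ for the transformed solution without access to an optimal assignment for $\I$; fixing $\sigma'$ up front and using that the genuine $\I$-cost of $\X$ can only undershoot its $\sigma'$-cost circumvents this. Everything else reduces to the two elementary scalar inequalities displayed above, so I do not expect an essential obstacle.
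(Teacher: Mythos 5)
Your construction of $\I'$ is exactly the paper's: $\mu_i':=\min(n\rho_i,\mu_i)$ and $\rho_i':=\min(\mu_i,\rho_i)$. (The paper actually writes $\scale'=(\max(\mu_i,\rho_i))_{i\in[m]}$, but that is plainly a typo --- it would give $\mu_i'/\rho_i'\le 1$ rather than $\ge 1$; the justification the paper gives immediately afterwards, ``$\mu_i'\le\mu_i\le\rho_i'$'' meant to witness $\ge 1$, only makes sense with $\min$ and with the inequality chain $\rho_i'\le\mu_i\le\mu_i'$ replaced --- which matches your version.) The transformed solution is also the same as the paper's: keep $X'$, fix the $\I'$-optimal assignment, set the radius to the farthest assigned distance in the small layers, zero in the large layers, and unchanged otherwise, then compare layer by layer. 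The one genuinely different step is $\OPT_{\I'}\le\OPT_\I$: you just note that $\rho_i'\le\rho_i$ and $\mu_i'\le\mu_i$ coordinatewise and that the objective is monotone in both scaling vectors, so every solution of $\I$ is at least as cheap under $\I'$. The paper instead establishes structural facts about an $\I$-optimal solution (its small-layer radii equal the farthest assigned distance, its large-layer radii are zero) and then shows its cost is preserved exactly; your monotonicity observation reaches the same conclusion with considerably less work. Your per-(ball,layer) verification of $\cost{\I}{\X}\le\cost{\I'}{\X'}$ is sound: the small case reduces to $d^*\le(d^*-s)^++s$ for the farthest assigned distance $d^*$, and the large case to $\delta(p,x)\le(\delta(p,x)-s)^++s$ summed over the at most $n$ assigned points, which are exactly the two scalar inequalities you cite.
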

\begin{proof}
    Let $\sop' = (\min(\rho_i\cdot n,\mu_i))_{i\in [m]}$ and $\scale' = (\max(\mu_i,\rho_i))_{i\in [m]}$.
    Note that for all $i \in [m]$ it holds that $\nicefrac{\mu'_i}{\rho'_i} \ge 1$ because $\mu'_i \le \mu_i \le \rho'_i$.
    Now, we show that for all $i\in [m]$ it holds that $\nicefrac{\mu'_i}{\rho'_i}\le n$.
    We observe that $\rho'_i \cdot n \ge \rho_i \cdot n \ge \mu'_i$.

     Let $S={i\in [m]\mid \nicefrac{\mu_i}{\rho_i}<1}$ be the set of layers $i$ with small $\mu_i$ and $L={i\in [m]\mid \nicefrac{\mu_i}{\rho_i}>n}$ be the set of layers $i$ with large $\mu_i$.
    
    Now, we show that every solution to $\I'$ can be converted to a solution of $\I$ without increasing the cost.
    Let $\X'=(X',\bm{r}')$ be a solution to $\I'$.
    Let $\textsf{cl}(p) = \arg \min_{x'\in X'} \distrr{x'}{p}{\bm{r}'}{\scale'} $ be the facility with the layered ball that $p$ is closest to in the solution $\X'$.
    Let $m(x) = \max \left\{\dist{x}{p}\mid p \in P \text{ such that } x = \textsf{cl}(p)\right\}$.
    We define the solution $\X = (X,\bm{r})$ such that
    \begin{align*}
        X &:= X'\\
        \bm{r}(x)&:= \left(\begin{cases}
            0 & \text{, if } i \in L\\
            m(x) & \text{, if } i \in S\\
            r'(x)_i & \text{, o.w.}
        \end{cases}\right)_{i\in [m]}\\
    \end{align*}

    This gives
    \begin{align}
        \cost{\I}{\X} = & \sum_{p\in P} \distr{p}{X}{\bm{r}} + \sum_{x\in X}\sop^{\intercal} \bm{r}(x)\nonumber\\
        =&\sum_{p\in P} \min_{x\in X} \bm{\rho}^{\intercal} \distrv{p}{x}{\bm{r}} + \sum_{x\in X}\sop^{\intercal} \bm{r}(x)\nonumber\\
        \le&\sum_{p\in P} \sum_{i=1}^m \rho_i (\dist{p}{\textsf{cl}(p)} -{r(\textsf{cl}(p))}_i)^+ + \sum_{x\in X}\sum_{i=1}^m\mu_i r(x)_i\label{eq:connecttonotbest}\\
       =&\sum_{p\in P} (\sum_{i\in S} \rho_i (\dist{p}{\textsf{cl}(p)} -{r(\textsf{cl}(p))}_i)^+  +\sum_{i\in L} \rho_i (\dist{p}{\textsf{cl}(p)} -{r(\textsf{cl}(p))}_i)^+  \nonumber\\
       &\qquad+\sum_{i\in [m]\setminus (S\cup L)} \rho_i (\dist{p}{\textsf{cl}(p)} -{r(\textsf{cl}(p))}_i)^+ )\nonumber\\
        &+ \sum_{x\in X}\left(\sum_{i\in S} \mu_i r(x)_i+ \sum_{i\in L} \mu_i r(x)_i+\sum_{i\in [m]\setminus (S\cup L)}\mu_i r(x)_i\right)\nonumber   
        \end{align}
        \begin{align}
        =&\sum_{p\in P} (\sum_{i\in S} \rho_i (\dist{p}{\textsf{cl}(p)} -{r(\textsf{cl}(p))}_i)^+  +\sum_{i\in L} \rho_i (\dist{p}{\textsf{cl}(p)} -{r(\textsf{cl}(p))}_i)^+  \label{eq:intermediatevalues}\\
        &\qquad+\sum_{i\in [m]\setminus (S\cup L)} \rho'_i (\dist{p}{\textsf{cl}(p)} -{r'(\textsf{cl}(p))}_i)^+ )\nonumber\\
        &+ \sum_{x\in X}\left(\sum_{i\in S} \mu_i r(x)_i+ \sum_{i\in L} \mu_i r(x)_i+\sum_{i\in [m]\setminus (S\cup L)}\mu'_i r'(x)_i\right)\nonumber\\
        \le&\sum_{p\in P} (\sum_{i\in S} \rho'_i (\dist{p}{\textsf{cl}(p)} -{r(\textsf{cl}(p))}_i)^+  +\sum_{i\in L} \rho_i (\dist{p}{\textsf{cl}(p)} -{r(\textsf{cl}(p))}_i)^+  \label{eq:small}\\
        &\qquad+\sum_{i\in [m]\setminus (S\cup L)} \rho'_i (\dist{p}{\textsf{cl}(p)} -{r'(\textsf{cl}(p))}_i)^+ )\nonumber\\
        &+ \sum_{x\in X}\left(\sum_{i\in S} \mu'_i r(x)_i+ \sum_{i\in L} \mu_i r(x)_i+\sum_{i\in [m]\setminus (S\cup L)}\mu'_i r'(x)_i\right)\nonumber\\
        \le&\sum_{p\in P} (\sum_{i\in S} \rho'_i (\dist{p}{\textsf{cl}(p)} -{r'(\textsf{cl}(p))}_i)^+  +\sum_{i\in L} \rho_i (\dist{p}{\textsf{cl}(p)} -{r(\textsf{cl}(p))}_i)^+  \label{eq:smalltwo}\\
        &\qquad+\sum_{i\in [m]\setminus (S\cup L)} \rho'_i (\dist{p}{\textsf{cl}(p)} -{r'(\textsf{cl}(p))}_i)^+ )\nonumber\\
        &+ \sum_{x\in X}\left(\sum_{i\in S} \mu'_i r'(x)_i+ \sum_{i\in L} \mu_i r(x)_i+\sum_{i\in [m]\setminus (S\cup L)}\mu'_i r'(x)_i\right)\nonumber\\
        =&\sum_{p\in P} (\sum_{i\in S} \rho'_i (\dist{p}{\textsf{cl}(p)} -{r'(\textsf{cl}(p))}_i)^+  +\sum_{i\in L} \rho'_i (\dist{p}{\textsf{cl}(p)} -{r(\textsf{cl}(p))}_i)^+  \label{eq:large}\\
        &\qquad+\sum_{i\in [m]\setminus (S\cup L)} \rho'_i (\dist{p}{\textsf{cl}(p)} -{r'(\textsf{cl}(p))}_i)^+ )\nonumber\\
        &+ \sum_{x\in X}\left(\sum_{i\in S} \mu'_i r'(x)_i+ \sum_{i\in L} \mu'_i r(x)_i+\sum_{i\in [m]\setminus (S\cup L)}\mu'_i r'(x)_i\right)\nonumber\\
        \le&\sum_{p\in P} (\sum_{i\in S} \rho'_i (\dist{p}{\textsf{cl}(p)} -{r'(\textsf{cl}(p))}_i)^+  +\sum_{i\in L} \rho'_i (\dist{p}{\textsf{cl}(p)} -{r'(\textsf{cl}(p))}_i)^+  \label{eq:largetwo}\\
        &\qquad+\sum_{i\in [m]\setminus (S\cup L)} \rho'_i (\dist{p}{\textsf{cl}(p)} -{r'(\textsf{cl}(p))}_i)^+ )\nonumber\\
        &+ \sum_{x\in X}\left(\sum_{i\in S} \mu'_i r'(x)_i+ \sum_{i\in L} \mu'_i r'(x)_i+\sum_{i\in [m]\setminus (S\cup L)}\mu'_i r'(x)_i\right)\nonumber\\ 
         =& \distrr{x}{p}{\bm{r}'}{\scale'} + \sum_{x\in X'}\sop'^\intercal \bm{r}'(x)\nonumber\\
        =& \cost{\I}{\X'}\,.\nonumber
    \end{align}

    The inequality in (\ref{eq:connecttonotbest}) holds because connecting to some arbitrary layered ball of the solution $\X$ upper bounds connecting to the closest layered ball for all points $p\in P$.
    The equality in (\ref{eq:intermediatevalues}) holds because $r(x)_i=r(x)'_i$, $\mu_i=\mu'_i$, and $\rho_i=\rho'_i$ for all $x\in X$ and $i\in [m]\setminus(S\cup L)$.
    The inequality in (\ref{eq:small}) holds because $\dist{p}{\textsf{cl}(p)} -{r(\textsf{cl}(p))}_i = \dist{p}{\textsf{cl}(p)} -m(x) \le 0$ for all $p\in P$, $x\in X$ and $i\in S$. 
    Additionally it holds that $\mu_i=\mu'_i$ for $i\in S$.
    The inequality in (\ref{eq:smalltwo}) holds because $\sum_{p\in P:\textsf{cl}(p)=x} \rho'_i (\dist{p}{\textsf{cl}(p)} -{r(\textsf{cl}(p))}_i)^++\mu'_ir(x)_i = 0+\mu'_im(x)$, but $\mu'_im(x)$ lower bounds this expression for all different choices of $r(x)_i$ for $i\in S$.
    The equality in (\ref{eq:large}) holds because $r(x)_i=0$ for $i\in L$ and $\rho_i=\rho'_i$ for $i\in L$.
    The inequality in (\ref{eq:largetwo}) holds because $\sum_{p\in P:\textsf{cl}(p)=x}  \rho'_i (\dist{p}{\textsf{cl}(p)} -{r(\textsf{cl}(p))}_i)^++\mu'_ir(x)_i = \sum_{p\in P:\textsf{cl}(p)=x}  \rho'_i (\dist{p}{\textsf{cl}(p)} -{r(\textsf{cl}(p))}_i)^+$, but $\sum_{p\in P:\textsf{cl}(p)=x}  \rho'_i (\dist{p}{\textsf{cl}(p)} -{r(\textsf{cl}(p))}_i)^+$ lower bounds this expression for all choices of $r(x)_i$ for $i \in L$.
    
    It remains to show that the optimal solutions of both instances have approximately the same value. 
    Fix $\X^*=(X^*,\bm{r}^*)$ the optimal solution of $\I$. 
    Let $\textsf{cl}(p) = \arg \min_{x'\in X^*} \distr{x'}{p}{\bm{r}^*} $ be the facility with the layered ball that $p$ is closest to in the optimal solution $\X^*$.
    For all $i\in S$ and $x \in X^*$ it holds that ${r}_i^*(x) \ge \max \left\{\dist{x}{p}\mid p \in P \text{ such that } x = \textsf{cl}(p)\right\}=: m(x)$.
    Assume towards contradiction that ${r}^*_i(x) < m(x)$. 
    Then, increasing the radius to $m(x)$ would decrease $\distr{x}{p}{\bm{r}^*}$ by $\rho_i$ times the distance the radius was increased. 
    But, increasing the radius would only increase $\sop^\intercal \bm{r}^*(x)$ by $\mu_i$ times the distance the radius was increased.
    Thus, increasing the radius would overall decrease the objective value, contradicting the optimality of $\X^*$.

    Similarly, we can argue that for all $i \in L$ it holds that ${r}^*_i(x)=0$.

    We show that $\X^*$ costs the same in both instances.

    \begin{align*}
        \OPT_\I =& \cost{\I}{\X^*}\\
        =&\sum_{p\in P} \distr{p}{X^*}{\bm{r}^*} + \sum_{x\in X^*}\sop^{\intercal} \bm{r}^*(x)\\
        =&\sum_{p\in P} \min_{x\in X^*} \bm{\rho}^{\intercal} \distrv{p}{x}{\bm{r}^*} + \sum_{x\in X^*}\sop^{\intercal} \bm{r}^*(x)\\
        =&\sum_{p\in P} \sum_{i=1}^m \rho_i (\dist{p}{\textsf{cl}(p)} -{r^*(\textsf{cl}(p))_i})^+ + \sum_{x\in X}\sum_{i=1}^m\mu_i r^*(x)_i\\
        =&\sum_{p\in P} \left(\sum_{i\in S} \rho_i (\dist{p}{\textsf{cl}(p)} -{r^*(\textsf{cl}(p))_i})^+  +\sum_{i\in [m]\setminus S} \rho_i (\dist{p}{\textsf{cl}(p)} -{r^*(\textsf{cl}(p))_i})^+ \right)\\
        &+ \sum_{x\in X^*}\left(\sum_{i\in L} \mu_i r^*(x)_i+\sum_{i\in [m]\setminus L}\mu_i r^*(x)_i\right)\\
        =&\sum_{p\in P} \left(\sum_{i\in S} \rho_i \cdot 0  +\sum_{i\in [m]\setminus S} \rho_i (\dist{p}{\textsf{cl}(p)} -{r^*(\textsf{cl}(p))_i})^+ \right)\\
        &+ \sum_{x\in X^*}\left(\sum_{i\in L} \mu_i \cdot 0 +\sum_{i\in [m]\setminus L}\mu_i r^*(x)_i\right)
        \end{align*}
        \begin{align*}
        =&\sum_{p\in P} \left(\sum_{i\in S} \rho'_i \cdot 0  +\sum_{i\in [m]\setminus S} \rho'_i (\dist{p}{\textsf{cl}(p)} -{r^*(\textsf{cl}(p))_i})^+ \right)\\
        &+ \sum_{x\in X^*}\left(\sum_{i\in L} \mu'_i \cdot 0 +\sum_{i\in [m]\setminus L}\mu'_i r^*(x)_i\right)   \\
        =&\sum_{p\in P} \left(\sum_{i\in S} \rho'_i (\dist{p}{\textsf{cl}(p)} -{r^*(\textsf{cl}(p))_i})^+  +\sum_{i\in [m]\setminus S} \rho'_i (\dist{p}{\textsf{cl}(p)} -{r^*(\textsf{cl}(p))_i})^+ \right)\\
        &+ \sum_{x\in X^*}\left(\sum_{i\in L} \mu'_i r^*(x)_i+\sum_{i\in [m]\setminus L}\mu'_i r^*(x)_i\right)\\
        =&\cost{\I'}{\X^*}\\
        \ge& \OPT_{\I'}
    \end{align*}
    
\end{proof}

\begin{lemma}\label{lem:sorted}
    Let $\I =(P,F,\delta,k,m,\scale,\sop)$ an instance of \Ballk{} such that $1\le \nicefrac{\mu_i}{\rho_i}\le n$ for all $i\in [m]$. 
    Then, we can efficiently compute an instance $\I' =(P,F,\delta,k,m,\scale',\sop')$ of \Ballk{} such that $\nicefrac{\mu'_i}{\rho'_i}\le \nicefrac{\mu'_{i+1}}{\rho'_{i+1}}$ for all $i\in [m-1]$, while we remain the property that $1\le \nicefrac{\mu'_i}{\rho'_i}\le n$ for all $i\in [m]$, $\OPT_{\I'} \le \OPT_\I$, and for all solutions $\X' = (X',\bm{r}')$ of $\I'$ we can efficiently compute a solution $\X=(X,\bm{r})$ of $\I$ such that 

\begin{align*}
        \cost{\I}{\X} \le \cost{\I'}{\X'}.
    \end{align*}
    
    where $\cost{\I}{\X}$ is the objective function value of $\X$ with respect to instance $\I$ and $\cost{\I'}{\X'}$ is the objective function value of $\X'$ with respect to instance $\I'$.
\end{lemma}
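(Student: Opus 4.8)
The plan is to note that this lemma asks only for a relabelling of the $m$ layers, and that such a relabelling preserves the objective \emph{exactly} --- well within the multiplicative slack of $1$ that the statement permits. First I would fix a permutation $\pi\colon[m]\to[m]$ with
\[
\frac{\mu_{\pi(1)}}{\rho_{\pi(1)}}\le\frac{\mu_{\pi(2)}}{\rho_{\pi(2)}}\le\dots\le\frac{\mu_{\pi(m)}}{\rho_{\pi(m)}},
\]
computable in time $O(m\log m)$ by sorting, and then set $\I'=(P,F,\delta,k,m,\scale',\sop')$ with $\rho'_i:=\rho_{\pi(i)}$ and $\mu'_i:=\mu_{\pi(i)}$ for all $i\in[m]$. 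By the choice of $\pi$ the sequence $(\mu'_i/\rho'_i)_{i\in[m]}$ is non-decreasing, and since $\{\mu'_i/\rho'_i\}_{i\in[m]}=\{\mu_i/\rho_i\}_{i\in[m]}$ as multisets, the hypothesis $1\le\mu_i/\rho_i\le n$ carries over to $1\le\mu'_i/\rho'_i\le n$.

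The one nontrivial ingredient is an invariance observation that I would state and prove up front. For $X\subseteq F$ and a radius function $\bm{r}\colon X\to\nnr^m$, write $\bm{r}^\pi\colon X\to\nnr^m$ for $\bm{r}^\pi(x):=(r(x)_{\pi(i)})_{i\in[m]}$. For every $p\in P$ and $x\in X$, substituting $j=\pi(i)$,
\begin{align*}
\sum_{i\in[m]}\rho'_i(\dist{p}{x}-r^\pi(x)_i)^+
&=\sum_{i\in[m]}\rho_{\pi(i)}(\dist{p}{x}-r(x)_{\pi(i)})^+\\
&=\sum_{j\in[m]}\rho_j(\dist{p}{x}-r(x)_j)^+
=\scale^\intercal\distrv{p}{x}{\bm{r}},
\end{align*}
and likewise $\sop'^\intercal\bm{r}^\pi(x)=\sum_i \mu'_i r^\pi(x)_i=\sum_j \mu_j r(x)_j=\sop^\intercal\bm{r}(x)$. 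Since the relabelling is applied to \emph{every} facility simultaneously, the minimum over $x\in X$ in the connection cost is unaffected, so $\distr{p}{X}{\bm{r}^\pi}$ (evaluated in $\I'$) equals $\distr{p}{X}{\bm{r}}$ (evaluated in $\I$) for every $p$, and each individual opening cost is unchanged. Summing, $\cost{\I'}{(X,\bm{r}^\pi)}=\cost{\I}{(X,\bm{r})}$ for all such $X$ and $\bm{r}$.

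Given this observation the two remaining assertions follow immediately. To transform a solution $\X'=(X',\bm{r}')$ of $\I'$, I would take $\X:=(X',(\bm{r}')^{\pi^{-1}})$, i.e.\ $r(x)_i:=r'(x)_{\pi^{-1}(i)}$, so that the $\pi$-relabelling of $\bm{r}$ recovers $\bm{r}'$; this is computable in polynomial time, has $|X|\le k$, and by the observation $\cost{\I}{\X}=\cost{\I'}{\X'}$, in particular $\cost{\I}{\X}\le\cost{\I'}{\X'}$. For the optima, let $\X^*=(X^*,\bm{r}^*)$ be an optimal solution of $\I$; then $(X^*,(\bm{r}^*)^\pi)$ is feasible for $\I'$ with the same number of open facilities and, again by the observation, has cost exactly $\OPT_\I$, whence $\OPT_{\I'}\le\OPT_\I$.

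I do not anticipate a genuine obstacle: the lemma is pure bookkeeping. The only point that warrants a line of care is checking that $\pi$ acts uniformly across all facilities and on both scaling vectors, so that the inner $\min_{x\in X}$ of the connection cost and the per-ball opening costs are genuinely preserved rather than merely bounded --- which is exactly the displayed identity above.
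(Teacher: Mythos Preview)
Your proposal is correct and takes essentially the same approach as the paper, which merely states that the lemma ``can be proven by simply reordering the vectors $\scale$ and $\sop$.'' Your write-up is in fact more careful than the paper's one-line justification, spelling out the permutation $\pi$ and verifying that both the connection cost and the opening cost are invariant under simultaneous relabelling of the layers.
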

\Cref{lem:sorted} can be proven by simply reordering the vectors $\scale$ and $\sop$.

\begin{lemma}\label{lem:logndim}
    Let $\I =(P,F,\delta,k,m,\scale,\sop)$ an instance of \Ballk{} such that $1\le \nicefrac{\mu_i}{\rho_i}\le n$ for all $i\in [m]$. 
    Then, we can efficiently compute an instance $\I' =(P,F,\delta,k,m',\scale',\sop')$ of \Ballk{} such that $\nicefrac{\mu'_i}{\rho'_i}\le \nicefrac{\mu'_{i+1}}{\rho'_{i+1}}$ for all $i\in [m'-1]$, while we remain the property that $1\le \nicefrac{\mu'_i}{\rho'_i}\le n$ for all $i\in [m']$, $\OPT_{\I'} \le 2\OPT_\I$, and for all solutions $\X' = (X',\bm{r}')$ of $\I'$ we can efficiently compute a solution $\X=(X,\bm{r})$ of $\I$ such that 

\begin{align*}
        \cost{\I}{\X} \le \cost{\I'}{\X'}.
    \end{align*}
    
    where $\cost{\I}{\X}$ is the objective function value of $\X$ with respect to instance $\I$ and $\cost{\I'}{\X'}$ is the objective function value of $\X'$ with respect to instance $\I'$.
\end{lemma}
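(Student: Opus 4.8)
The plan is to bucket the layers geometrically by the ratio $\mu_i/\rho_i$ and collapse each bucket into a single layer, in the spirit of Claim~4.1 of~\cite{chakrabarty-swamy19:norm-k-clustering}. Concretely, I would set $B_j := \{i\in[m] : 2^{j-1}\le \mu_i/\rho_i < 2^j\}$ for $j\in[\ceil{\log n}]$, taking the last bucket closed on the right; since $1\le\mu_i/\rho_i\le n\le 2^{\ceil{\log n}}$, every layer lands in exactly one bucket. Discarding empty buckets and relabeling gives $m'\le\ceil{\log n}$ buckets (one may split a merged layer into equally weighted copies to reach $m'=\ceil{\log n}$ exactly, if desired). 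For a nonempty bucket $j$ I put $\rho'_j := \sum_{i\in B_j}\rho_i$ and $\mu'_j := c_j\rho'_j$ with $c_j := \max_{i\in B_j}\mu_i/\rho_i$. Then $\mu'_j/\rho'_j = c_j\in[2^{j-1},n]$, so $1\le\mu'_j/\rho'_j\le n$; and $c_j<2^j\le c_{j+1}$ for consecutive relabeled buckets, so the new ratios are sorted. The key quantitative fact is that $\mu_i\le\rho_i c_j<2\mu_i$ for every $i\in B_j$, i.e. the merged layer dominates each original layer of its bucket by a factor in $[1,2)$.

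For the reconstruction direction I would, given a solution $\X'=(X',\bm r')$ of $\I'$, \emph{un-merge} it to $\X=(X',\bm r)$ of $\I$ by copying $r(x)_i := r'(x)_j$ for all $i\in B_j$. The connection cost is literally unchanged since $\sum_{i\in B_j}\rho_i(\dist{p}{x}-r'(x)_j)^+ = \rho'_j(\dist{p}{x}-r'(x)_j)^+$, and the opening cost does not increase because $\sum_{i\in B_j}\mu_i r'(x)_j = r'(x)_j\sum_{i\in B_j}\rho_i(\mu_i/\rho_i)\le c_j\rho'_j r'(x)_j = \mu'_j r'(x)_j$. Hence $\cost{\I}{\X}\le\cost{\I'}{\X'}$, and the transformation is clearly polynomial.

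The part that needs care --- and where the naive approach breaks --- is $\OPT_{\I'}\le 2\OPT_\I$: merging a bucket by simply taking its largest (resp.\ smallest) radius can inflate the opening (resp.\ connection) cost by a factor of $|B_j|$, not a constant. Instead I would take an optimal solution $\X^*=(X^*,\bm r^*)$ of $\I$ together with its min-cost assignment $\sigma^*$ (clusters $P_x=(\sigma^*)^{-1}(x)$), and for each center $x$ and bucket $j$ choose the merged radius $r'(x)_j$ to \emph{minimize} $g^x_j(r):=c_j r+\sum_{p\in P_x}(\dist{p}{x}-r)^+$, a convex piecewise-linear function whose minimizer is attained at $0$ or at some $\dist{p}{x}$. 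Writing $S^x_j(r):=\sum_{p\in P_x}(\dist{p}{x}-r)^+$, the contribution of center $x$ and bucket $j$ to the cost of $(X^*,\bm r')$ in $\I'$ under assignment $\sigma^*$ equals exactly $\rho'_j\,g^x_j(r'(x)_j)$. Since $g^x_j(r'(x)_j)\le g^x_j(r^*(x)_i)$ for every $i\in B_j$ by minimality, and $\rho'_j=\sum_{i\in B_j}\rho_i$, I get $\rho'_j g^x_j(r'(x)_j)\le\sum_{i\in B_j}\rho_i\big(c_j r^*(x)_i+S^x_j(r^*(x)_i)\big)$; using $\rho_i c_j\le 2\mu_i$ this is at most $2\sum_{i\in B_j}\big(\mu_i r^*(x)_i+\rho_i S^x_j(r^*(x)_i)\big)$, which is exactly twice the contribution of $x$ and bucket $j$ to the cost of $\X^*$ in $\I$ under $\sigma^*$. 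Summing over $j$ and over $x\in X^*$ gives a solution of $\I'$ of cost (under $\sigma^*$, hence also its true cost, which is defined via $\min$ over centers) at most $2\OPT_\I$, so $\OPT_{\I'}\le 2\OPT_\I$. Combined with \Cref{lem:betweenoneandn,lem:sorted}, this yields \Cref{lem:sparse}. I expect this last direction --- the per-bucket re-optimization of the radius and the factor-$2$ accounting via $g^x_j$ --- to be the main obstacle; the sparsity invariants, sortedness, and the reconstruction direction are routine bookkeeping.
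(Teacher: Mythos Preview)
Your argument is correct and follows essentially the same route as the paper: both bucket the layers geometrically by the ratio $\mu_i/\rho_i$, collapse each bucket to one layer with $\rho'_j=\sum_{i\in B_j}\rho_i$, un-merge by copying the bucket radius to every original layer (which preserves or only decreases cost), and for the $\OPT_{\I'}\le 2\OPT_\I$ direction pick, per center and bucket, a single radius and use that ratios inside a bucket differ by at most a factor $2$. The only cosmetic differences are that the paper takes $\mu'_j=\sum_{i\in B_j}\mu_i$ (a weighted-average ratio) rather than your $\mu'_j=c_j\rho'_j$ with $c_j=\max_{i\in B_j}\mu_i/\rho_i$, and it selects the merged radius as $r^*(x)_{\mathsf{ind}(x,j)}$ for the best index in the bucket rather than fully re-optimizing your $g^x_j$; both choices lead to the same factor-$2$ bound by the same within-bucket ratio argument.
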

\begin{proof}
   Let $m' = \ceil{\log n}$.
   Let $L_j = \{i\in [m] \mid 2^{j-1} \le \nicefrac{\mu_i}{\rho_i}<2^{j}\}$ for all $j\in m'$.
   Note that the $L_j$ partition $[m]$.
    We define the new vectors $\scale'= \left(\sum_{i\in L_j}\rho_i\right)_{j\in [m']}$ and $\sop' = \left(\sum_{i\in L_j} \mu_j\right)_{j\in [m']}$.

    Now, we show that given a solution $\X'=(X,\bm{r}')$, we can efficiently compute a solution $\X=(X,\bm{r})$ such that $\cost{\I}{\X}\le  \cost{\I'}{\X'}$.
    We define the radii vectors
    \begin{align*}
        \bm{r}(x)= \left(r'(x)_{j:i\in L_j} \right)_{i\in [m]}.
    \end{align*}
    
    This gives
    \begin{align*}
        \cost{\I}{\X} = & \sum_{p\in P} \distr{p}{X}{\bm{r}} + \sum_{x\in X}\sop^{\intercal} \bm{r}(x)\\
        =&\sum_{p\in P} \min_{x\in X} \bm{\rho}^{\intercal} \distrv{p}{x}{\bm{r}} + \sum_{x\in X}\sop^{\intercal} \bm{r}(x)\\
        =&\sum_{p\in P} \sum_{i=1}^m \rho_i (\dist{p}{\textsf{cl}(p)} -{r(\textsf{cl}(p))}_i)^+ + \sum_{x\in X}\sum_{i=1}^m\mu_i r(x)_i\\
        =&\sum_{p\in P} \sum_{j=1}^{m'}\sum_{i\in L_j} \rho_i (\dist{p}{\textsf{cl}(p)} -{r(\textsf{cl}(p))}_i)^+ + \sum_{x\in X}\sum_{j=1}^{m'}\sum_{i\in L_j} \mu_i r(x)_i\\
        =&\sum_{p\in P} \sum_{j=1}^{m'}\rho'_j(\dist{p}{\textsf{cl}(p)} -{r'(\textsf{cl}(p))}_j)^+ + \sum_{x\in X}\sum_{j=1}^{m'}\mu'_j r'(x)_j\\
        \ge & \cost{\I'}{\X'}
        \,.
    \end{align*}

    Let $\X^*=(X^*,\bm{r}^*)$ be the optimal solution of $\I$. Let $A(x) = \{p \in P \mid x = \arg \min _{x'\in X^*} \distr{x}{p}{\bm{r}^*}\}$. 
    We define the ''best'' index in $L_j$ for all $j\in [m']$
    \begin{align*}
        \textsf{ind}(x,j) = \left(\arg\min_{i\in L_j}\left(\frac{\sum_{p\in A(x)}\rho_i(\delta(p,x)-r^*(x)_i)^+ + \mu_ir^*(x)_i}{\rho_i}\right)\right)_{j\in [m']}\,.
    \end{align*}
    We define a solution $\hat{\X}=(X,\hat{\bm{r}})$ for $\I'$ with radii vectors
    \begin{align*}
        \hat{\bm{r}}(x) = \left(\bm{r}(x)_{\textsf{ind}(x,j)}\right)_{j\in [m']}\,.
    \end{align*}

    We bound the cost of $\hat{\X}$
    \begin{align*}
        \OPT_\I &=\cost{\I}{\X^*}\\ &=  \sum_{p\in P} \distr{p}{X^*}{\bm{r}^*} + \sum_{x\in X^*}\sop^{\intercal} \bm{r}^*(x)\\
        &= \sum_{x\in X^*}\sum_{p\in A(x)}\scale^\intercal \distrv{p}{x}{\bm{r}^*}+ \sum_{x\in X^*}\sop^\intercal \bm{r}^*\\
        &= \sum_{x\in X^*}\sum_{i=1}^m\rho_i\left(\sum_{p\in A(x)} (\dist{p}{x})-r^*(x)_i)^+ + \frac{\mu_i}{\rho_i} {r}^*(x)_i\right)\\
        &= \sum_{x\in X^*}\sum_{i=1}^m\rho_i\left(\frac{\sum_{p\in A(x)}\rho_i (\dist{p}{x})-r^*(x)_i)^+ + \mu_i {r}^*(x)_i}{\rho_i}\right)
        \end{align*}
        \begin{align*}
        &= \sum_{x\in X^*}\sum_{j=1}^{m'}\sum_{i\in L_j}\rho_i\left(\frac{\sum_{p\in A(x)}\rho_i (\dist{p}{x})-r^*(x)_i)^+ + \mu_i {r}^*(x)_i}{\rho_i}\right)\\
        &\ge \sum_{x\in X^*}\sum_{j=1}^{m'}\sum_{i\in L_j}\rho_i\left(\frac{\sum_{p\in A(x)}\rho_{\textsf{ind}(x,j)} (\dist{p}{x})-r^*(x)_{\textsf{ind}(x,j)})^+ + \mu_{\textsf{ind}(x,j)} {r}^*(x)_{\textsf{ind}(x,j)}}{\rho_{\textsf{ind}(x,j)}}\right)\\
        &\ge \sum_{x\in X^*}\sum_{j=1}^{m'}\sum_{i\in L_j}\rho_i\left(\frac{\sum_{p\in A(x)}\rho_i (\dist{p}{x})-r^*(x)_{\textsf{ind}(x,j)})^+ + \mu_i {r}^*(x)_{\textsf{ind}(x,j)}}{2\rho_i}\right)\\
        &=\frac{1}{2} \sum_{x\in X^*}\sum_{j=1}^{m'}\sum_{i\in L_j}\rho_i\left(\frac{\sum_{p\in A(x)}\rho_i (\dist{p}{x})-\hat{r}(x)_j)^+ + \mu_i \hat{r}(x)_j}{\rho_i}\right)\\
        &=\frac{1}{2} \sum_{x\in X^*}\sum_{j=1}^{m'}\sum_{i\in L_j}\left(\rho_i{\sum_{p\in A(x)} (\dist{p}{x})-\hat{r}(x)_j)^+ + \mu_i \hat{r}(x)_j}\right)\\
        &=\frac{1}{2} \sum_{x\in X^*}\sum_{j=1}^{m'}\left(\rho'_j{\sum_{p\in A(x)} (\dist{p}{x})-\hat{r}(x)_j)^+ + \mu'_j \hat{r}(x)_j}\right)\\
        &\ge\frac{1}{2}  \sum_{p\in P} \distr{p}{X^*}{\hat{\bm{r}}} + \sum_{x\in X^*}\sop^{\intercal} \hat{\bm{r}}(x)\\
        &\ge\frac{1}{2} \OPT_{\I'}
        \,.
    \end{align*}
    Since for all $i\in L_j$ we have $1\le \nicefrac{\mu_i}{\rho_i}\le n$, it holds that $1\le \nicefrac{\mu'_j(=\sum_{i\in L_j}\mu_i)}{\rho'_j(=\sum_{i\in L_j}\rho_i)}\le n$ for all $j\in [m']$.
    Additionally, because for all $i\in L_j,i'\in L_{j+1}$ we have $\nicefrac{\mu_i}{\rho_i}\le 2^{j} \le \nicefrac{\mu_{i+1}}{\rho_{i+1}}$, it holds that $\nicefrac{\mu'_j(=\sum_{i\in L_j}\mu_i)}{\rho'_j(=\sum_{i\in L_j}\rho_i)}\le\nicefrac{\mu'_{j+1}(=\sum_{i'\in L_{j+1}}\mu_{i'})}{\rho'_{j+1}(=\sum_{i'\in L_{j+1}}\rho_{i'})}$ for all $j\in [m'-1]$.
\end{proof}

We conclude by proving \Cref{lem:sparse}.
\sparseinstance*
\begin{proof}
    \Cref{lem:sparse} follows directly by \Cref{lem:betweenoneandn}, \Cref{lem:sorted}, and \Cref{lem:logndim}.
\end{proof}

\subsubsection{\texorpdfstring{Proof of \Cref{lem:thin,lem:fewradiivectors}}{Proof of Lemmas \ref{lem:thin,lem:fewradiivectors}}}\label{secapp:goodguessesapp}
\goodguesses*
Before we formally prove the lemma, we state the high level idea of the proof. 
At first, we show that the largest radius is a distance between a point and a center.
Thus, there are only polynomially many choices.
Next, we show that the lower bound of $\OPT_{\I}$ and the upper bound of the most expensive radii vector are within a factor $n$.
Thus, we can guess a value that approximately lower bounds $\OPT_\I$ and upper bounds the most expensive radii vector simultaneously.
We show that there is a good $(\Delta^*,\Gamma^*)$-canonic solution by arguing that every radius in the optimal solution can either be approximated by a radius in $\R{\Delta^*}{n}$ or is too small to contribute significantly to the objective value and these operations maintain that the radii vectors costs do not exceed $\Gamma^*$.

\begin{proof}
    We assume without loss of generality that $\nicefrac{\mu_i}{\rho_i}\le \nicefrac{\mu_{i+1}}{\rho_{i+1}}$. 
    This can be ensured by reordering $\scale$ and $\sop$.
    Let $\X^*=(X^*,\bm{r}^*)$ be the optimal solution of $\I$.

    Without loss of generality we can assume that $\bm{r}^*(x)$ is sorted. Assume towards contradiction that there is a $i\in [m-1]$ such that $r^*(x)_i < r^*(x)_{i+1}$. 
    Then, consider the set $A = \{p \in P \mid x = \arg \min _{x'\in X^*} \distr{x}{p}{\bm{r}^*}\}$. 
    Consider the solution $(X^*,\bm{r}')$ where $\bm{r}'$ is equal to $\bm{r}^*$ on all values except for the $i$-th entry of $\bm{r}^*(x)$. 
    Specifically, $r'(x)_i = r^*(x)_{i+1}$.
    We compute the difference in the cost of $(X^*,\bm{r}^*)$ and $(X^*,\bm{r}')$.
    \begin{align*}
        \costz{(X^*,\bm{r}^*)}-\costz{(X^*,\bm{r}')} &= \left(\sum_{p\in P} \distr{p}{X^*}{\bm{r}^*} + \sum_{x'\in X^*}\sop^{\intercal} \bm{r}^*(x')\right) -\left(\sum_{p\in P} \distr{p}{X^*}{\bm{r}'} + \sum_{x'\in X^*}\sop^{\intercal} \bm{r}'(x')\right)\\
        &\ge \left(\sum_{p\in A} \distr{p}{x}{\bm{r}^*} + \sop^{\intercal} \bm{r}^*(x)\right) -\left(\sum_{p\in A} \distr{p}{x}{\bm{r}'} + \sop^{\intercal} \bm{r}'(x)\right)\\
        &\ge \left(\sum_{p\in A}\rho_i (\dist{p}{x}-{r^*}(x)_i)^+ + \mu_i r^*(x)_i\right) \\&-\left(\sum_{p\in A}  \rho_i(\dist{p}{x}-{r^*}(x)_{i+1})^+ + \mu_i r^*(x)_{i+1}\right)
    \end{align*}
    Assuming that $\costz{(X^*,\bm{r}^*)} < \costz{(X^*,\bm{r}')}$ implies that 
    \begin{align}\sum_{p\in A} \left((\dist{p}{x}-{r^*}(x)_i)^+-(\dist{p}{x}-{r^*}(x)_{i+1})^+\right)  <\frac{\mu_i}{\rho_i}( r^*(x)_{i+1} - r^*(x)_i)\,.\label{eq:contradictionone}
    \end{align}
    Now, we consider another solution $(X^*,\bm{r}'')$  where $\bm{r}''$ is equal to $\bm{r}^*$ on all values except for the $i+1$-th entry of $\bm{r}^*(x)$.
    Specifically, $r''(x)_{i+1} = r^*(x)_{i}$.
    We compute the difference in the cost of $(X^*,\bm{r}^*)$ and $(X^*,\bm{r}'')$.\begin{align*}
        \costz{(X^*,\bm{r}^*)}-\costz{(X^*,\bm{r}'')} &= \left(\sum_{p\in P} \distr{p}{X^*}{\bm{r}^*} + \sum_{x'\in X^*}\sop^{\intercal} \bm{r}^*(x')\right) -\left(\sum_{p\in P} \distr{p}{X^*}{\bm{r}''} + \sum_{x'\in X^*}\sop^{\intercal} \bm{r}''(x')\right)\\
        &\ge \left(\sum_{p\in A} \distr{p}{x}{\bm{r}^*} + \sop^{\intercal} \bm{r}^*(x)\right) -\left(\sum_{p\in A} \distr{p}{x}{\bm{r}''} + \sop^{\intercal} \bm{r}''(x)\right)\\
        &\ge \left(\sum_{p\in A}\rho_{i+1} (\dist{p}{x}-{r^*}(x)_{i+1})^+ + \mu_{i+1} r^*(x)_{i+1}\right) \\&-\left(\sum_{p\in A}  \rho_{i+1}(\dist{p}{x}-{r^*}(x)_{i})^+ + \mu_{i+1} r^*(x)_{i}\right)
    \end{align*} 
     Assuming that $\costz{(X^*,\bm{r}^*)} < \costz{(X^*,\bm{r}'')}$ implies that 
    \begin{align}\sum_{p\in A} \left((\dist{p}{x}-{r^*}(x)_{i})^+ - (\dist{p}{x}-{r^*}(x)_{i+1})^+ \right) > \frac{\mu_{i+1}}{\rho_{i+1}}(r^*(x)_{i+1} -  r^*(x)_{i})\,.\label{eq:contraditiontwo}
    \end{align}
    The inequalities from (\ref{eq:contradictionone}) and (\ref{eq:contraditiontwo}) imply that
    \begin{align*}
        \frac{\mu_i}{\rho_i}( r^*(x)_{i+1} - r^*(x)_i)&>\sum_{p\in A} \left((\dist{p}{x}-{r^*}(x)_i)^+-(\dist{p}{x}-{r^*}(x)_{i+1})^+\right) \\
        &>\frac{\mu_{i+1}}{\rho_{i+1}}( r^*(x)_{i+1} - r^*(x)_i)\,.
    \end{align*}
    This is a contradiction because of the assumption $\nicefrac{\mu_i}{\rho_i}\le \nicefrac{\mu_{i+1}}{\rho_{i+1}}$.
    Thus, we can assume that the values of $\bm{r}^*(x)$ are sorted.

    Next, we assume without loss of generality that $r^*(x)_1=\dist{x}{p}$ for some $p\in P$ or $r^*(x)_1=0$ for all $x\in X^*$.
    Assume towards contradiction that there is such a $x\in X^*$ such that $r^*(x)_1 \ne \dist{p}{x}$ for all $p\in P$ and $r^*(x)_1 \ne 0$.
    Then, consider the set $A(x) = \{p \in P \mid x = \arg \min _{x'\in X^*} \distr{x}{p}{\bm{r}^*}\}$. 
    The term
    \begin{align*}
        \sum_{p\in A(x)}\rho_1(\dist{p}{x}-r)^+ +\mu_1 r
    \end{align*}
    is minimized when $r$ is set to the $\ceil{\nicefrac{\mu_1}{\rho_1}}$-th largest distance in $(\{\dist{p}{x}\mid p \in A(x)\}\cup \{0\})$. 
    This gives us our assumption.

    Let $\Delta^*$ be the maximal radius in $\X^*$ that is $\Delta^* = \max_{x\in X^*}r(x)_1$.
    We call the $x$ minimizing the previous expression $x_{\Delta^*}$.
    By our assumption on the radii we know $\Delta^* \in (\{\dist{p}{x}\mid x \in X^*,p\in A(x)\}\cup \{0\})\subseteq (\{\dist{p}{x}\mid p \in P, x \in F\}\cup \{0\})$.
    We bound the size of the latter set
    \begin{align*}
        |(\{\dist{p}{x}\mid p \in P, x \in F\}\cup \{0\})| \le |P|\cdot |F| +1 \,.
    \end{align*}
    Thus, we can guess the exact value of $\Delta^*$ in polynomial time.
    Let $\Pi = \max{x\in X^*} \sop^\intercal \bm{r}^*(x)$ be the cost of the most expensive ball in the optimal solution. 
    We call the $x$ maximizing the previous expression $x_{\Pi}$.
    We try to bound the range of $\Pi$ in terms of $\Delta^*$.
    We already showed that the radii are sorted. 
    Thus $r(x_{\Pi})_1$ is at least $r(x_{\Pi})_i$ for all $i>1$.
    This gives
    \begin{align*}
        \Pi &= \sop^\intercal \bm{r}^*(x_{\Pi})\\
        &\le \sum_{i=1}^m \mu_i r^*(x_{\Pi})_1\\
        &\le \sum_{i=1}^m \rho_i n \Delta^*\\
        &=n \Delta^*\sum_{i=1}^m \rho_i\,.
    \end{align*}
    Additionally, we showed that no radius is larger than the distance to the furthest point connected to its layered ball. 
    This gives
    \begin{align*}
        \OPT_\I &\ge \sum_{i=1}^m\left(\rho_i\max_{p\in A(x_{\Delta^*})}(\dist{x_{\Delta^*}}{p}-r^*(x_{\Delta^*})_i)^++\mu_ir^*(x_{\Delta^*})_i\right)\\
        &= \sum_{i=1}^m\left(\rho_i\max_{p\in A(x_{\Delta^*})}(\dist{x_{\Delta^*}}{p}-r^*(x_{\Delta^*})_i)+\mu_ir^*(x_{\Delta^*})_i\right)\\
        &= \sum_{i=1}^m\rho_i\left(\max_{p\in A(x_{\Delta^*})}(\dist{x_{\Delta^*}}{p}-r^*(x_{\Delta^*})_i)+r^*(x_{\Delta^*})_i\right)\\
        &= \sum_{i=1}^m\rho_i\left(\max_{p\in A(x_{\Delta^*})}\dist{x_{\Delta^*}}{p}\right)\\
        &\ge \Delta^*\sum_{i=1}^m \rho_i\,.
    \end{align*}

    Now, we show that we can find a $\Gamma^*$ such that $\Gamma^* \ge 2\Pi$ and $\Gamma^*\le 2\OPT_\I$.
    Note that the set
    \begin{align*}
        \{\Delta^*2^j \sum_{i=1}^m \rho_i\mid j \in [\ceil{\log n}]\}
    \end{align*}
    is of size $O(\log n)$ and contains such a $\Gamma^*$.
    Thus we can find such a value $\Gamma^*$ in polynomial time.
    
    Now, we define a solution $\X=(X^*,\bm{r})$ that is $(\Delta^*,\Gamma^*)$-canonic and costs at most three times the general optimal solution.
    We define $r(x)_i = \min \{r \in\R{\Delta^*}{n}\mid r \ge r^*(x)_i\}$.
    Because $r(x)_i \ge r^*(x)_i$ for all $x\in X^*$ and $i\in [n]$, it holds that $\distr{p}{X}{\bm{r}}\le \distr{p}{X}{\bm{r}^*}$.

    It remains to show $\sum_{x\in X}\sop^{\intercal} \bm{r}(x)\le 3\sum_{x\in X}\sop^{\intercal} \bm{r}^*(x)$. We call $B=\{(x,i)\in X^*\times [n] \mid r(x)_i \ge \nicefrac{\Delta^*}{n^3}\}$ the pairs with a big radius and $S=\{(x,i)\in X^*\times [n] \mid r(x)_i < \nicefrac{\Delta^*}{n^3}\}$ the pairs with a small radius. 
    Note that $B$ and $S$ partition $X^*\times [n]$. We handle big and small radii independent.

    For big radii  it holds that
    \begin{align*}
        \sum_{(x,i)\in B}\mu_i\cdot r(x)_i \le 2\sum_{(x,i)\in B}\mu_i\cdot r^*(x)_i.
    \end{align*}

    For small radii it holds that 
    \begin{align*}
            \sum_{(x,i)\in S}\mu_i\cdot r(x)_i &\le \sum_{(x,i)\in S}\mu_i\cdot \frac{\Delta^*}{n^3}\\
        &\le \sum_{(x,i)\in S}\rho_i \cdot n\cdot \frac{\Delta^*}{n^3}\\
        &\le   \frac{\Delta^*}{n^2} \sum_{(x,i)\in S}\rho_i \\
        &\le  \frac{\Delta^*}{n^2} \cdot n\sum_{i=1}^m \rho_i
        \end{align*}
        \begin{align*}
        &\le  \Delta^* \sum_{i=1}^m \rho_i\\
        &\le  \OPT_\I .
    \end{align*}

    This sums up to
    \begin{align*}
        \sum_{x\in X^*}\sop^{\intercal} \bm{r}(x)& \le 2\sum_{x\in X^*}\sop^{\intercal} \bm{r}^*(x) +\OPT_\I\\
        &\le 3 \OPT_\I.
    \end{align*}

    We conclude by showing $\bm{r}(x)\in \R{\Delta^*,\Gamma^*}{n,m}$ for all $x\in X^*$. The cost of each the layered ball is bounded.
    \begin{align*}
        \sop^{\intercal} \bm{r}(x) \le 2\sop^{\intercal} \bm{r}^*(x) \le 2\Pi \le \Gamma^*
    \end{align*}
    finishing the proof of the lemma.
\end{proof}

\fewradiivectors*
\begin{proof}
    We bound the size of $\R{\Delta}{n,m}$. Since $\R{\Delta,\Gamma}{n,m}\subseteq \R{\Delta}{n,m}$, this also implies an upper bound for $|\R{\Delta,\Gamma}{n,m}|$.
    Because the vectors in $\R{\Delta}{n,m}$ are sorted they can be fully described by an $m$-dimensional (multi-)subset of $\R{\Delta}{n}$.
    \begin{align*}
        |\R{\Delta,\Gamma}{n,m}| \le |\R{\Delta}{n,m}| \le \binom{\ceil{3\log n}+m}{m}  \le 2^{\ceil{3\log n}+m}\le 2^{m+4\log n}=2^{m}n^4
    \end{align*}
\end{proof}

\subsection{\texorpdfstring{Proofs from \Cref{subsubsec:binary}}{Proofs from Section \ref{subsubsec:binary}}}
In this section we prove \Cref{lem:bipoint}.

\bSearch*
\begin{proof}
    Ideally, we would want to find some $\lambda$ such that Algorithm~\ref{alg:lmpflball} opens exactly $k$ facilities.
    As we cannot guarantee that, we settle for some $\lambda_1,\lambda_2$ such that $|\lambda_1-\lambda_2|$ is small (more precisely, $|\lambda_1-\lambda_2| \le \delta_{min} / ((2\log{n}+3)|F|)$, where $\delta_{min}$ is the minimum non-zero distance between a facility and a client), our approximation when the opening cost is $\lambda_2$ opens more than $k$ facilities, and our approximation when the opening cost is $\lambda_1$ opens at most $k$ facilities.
    
    We start with $\lambda_1= |P| \delta_{max}$ 
    (where $\delta_{max}$ is the maximum distance between a facility and a client);
    whatever solution $\X_1'=(X'_1,\bm{r'_1})$ we get from Algorithm~\ref{alg:lmpflball}, we convert our solution $\X_1'$ to a solution $\X_1=(X_1,\bm{r_1})$ by only keeping an arbitrary facility in $X'_1$, and closing the rest.
    The connection cost can increase by at most $|P| \delta_{max} = \lambda_1$.
    As $|X_1| \le |X_1'|-1$, we get $\costz{\X_1} + (2\log{n}+3)\lambda_1 |X_1| \le \costz{\X_1'}+ (2\log{n}+3)\lambda_1 |X_1'| \le (2\log{n}+3) \left(\costz{\OPT_{\I}^{\Delta,\Gamma}}  +\lambda_1 k\right)$ by \Cref{lem:lmpalgo}.
    
    Similarly for $\lambda_2=0$, we convert our solution $\X_2'=(X_2',\bm{r_2'})$ to a solution $\X_2=(X_2,\bm{r_2})$ opening more than $k$ facilities by opening facilities of zero radius.
    We had $\costz{\X_2'} + (2\log{n}+3)\lambda_2 |X_2'| \le (2\log{n}+3) \left(\costz{\OPT_{\I}^{\Delta,\Gamma}}  +\lambda_2 k\right)$.
    Notice that $X_2$ contains all the facilities of $X_2'$ therefore $\costz{\X_2} \le \costz{\X_2'}$.
    Also $(2\log{n}+3)\lambda_2 |X_2'| = (2\log{n}+3)\lambda_2 |X_2| = 0$ because $\lambda_2=0$.
    Therefore $\costz{\X_2} + (2\log{n}+3)\lambda_2 |X_2| \le (2\log{n}+3) \left(\costz{\OPT_{\I}^{\Delta,\Gamma}}  +\lambda_2 k\right)$.

    It is now true that $X_1$ has less than $k$ facilities, and $X_2$ has more.
    Furthermore, the only facilities we open that were not suggested by Algorithm~\ref{alg:lmpflball} are of zero radius.
    Therefore, by \Cref{lem:lmpalgo} for $i\in [m], x_1\in X_1, x_2\in X_2$ it holds that $\mu_i r_1(x_1)_i \le 3\Gamma$ and $\mu_i r_2(x_2)_i \le 3\Gamma$.

    We now perform a binary search with $\lambda \in [0,|P|\delta_{max}]$: we continue on the bottom half of the search space when Algorithm~\ref{alg:lmpflball} with $\lambda$ being the middle point of the search space returns a solution $\X=(X,\bm{r})$ with $|X| \le k$ (and setting $\X_1=\X$), or continue to the top half and setting $\X_2=\X$ otherwise.
    By \Cref{lem:lmpalgo} we get that for $i\in [m], x_1\in X_1, x_2\in X_2$ it holds that $\mu_i r_1(x_1)_i \le 3\Gamma$ and $\mu_i r_2(x_2)_i \le 3\Gamma$.
    The binary search gives us that $X_1$ has at most $k$ facilities, and $X_2$ has more.

    We now have:
    \begin{align}
        \costz{\X_1} + (2\log{n}+3)\lambda_1 |X_1| &\le (2\log{n}+3) \left(\costz{\OPT_{\I}^{\Delta,\Gamma}}  +\lambda_1 k\right) \label{ineq:bSearchl1}
    \end{align}
    and
    \begin{align}
        \costz{\X_2} + (2\log{n}+3)\lambda_2 |X_2| &\le (2\log{n}+3) \left(\costz{\OPT_{\I}^{\Delta,\Gamma}}  +\lambda_2 k\right) \implies\nonumber\\
        \costz{\X_2} + (2\log{n}+3)\lambda_2 |X_2| &\le (2\log{n}+3) \left(\costz{\OPT_{\I}^{\Delta,\Gamma}}  +\lambda_1 k\right) \implies\nonumber\\
        \costz{\X_2} + (2\log{n}+3)\lambda_1 |X_2| &\le (2\log{n}+3) \left(\costz{\OPT_{\I}^{\Delta,\Gamma}} + \lambda_1 k\right) + (2\log{n}+3)|\lambda_1-\lambda_2||X_2| \nonumber\\
        &\le (2\log{n}+3) \left( \costz{\OPT_{\I}^{\Delta,\Gamma}} + \lambda_1 k \right) + \costz{\OPT_{\I}^{\Delta,\Gamma}}  \label{ineq:bSearchl2}
    \end{align}
    
   Let $a,b$ be the convex combination such that $a|X_1|+b|X_2| = k$.
   Multiplying inequality~(\ref{ineq:bSearchl1}) by $a$, inequality~(\ref{ineq:bSearchl2}) by $b$, and adding them together gives
   \begin{align*}
       a\cdot \costz{\X_1} + b\cdot \costz{\X_2} + &(2\log{n}+3)\lambda_1 (a|X_1|+b|X_2|) \\
       &\le (a+b)  (2\log{n}+3) \left( \costz{\OPT_{\I}^{\Delta,\Gamma}} + \lambda_1 k \right) + b \costz{\OPT_{\I}^{\Delta,\Gamma}}
   \end{align*}
   which gives us
   \begin{align*}
       a\cdot \costz{\X_1} + b\cdot \costz{\X_2} \le (2\log{n}+4) \costz{\OPT_{\I}^{\Delta,\Gamma}}
   \end{align*}
\end{proof}

\end{document}